\newtheorem{theorem}{Theorem}[section]
\newtheorem{lemma}[theorem]{Lemma}
\newtheorem{proposition}[theorem]{Proposition}
\newtheorem{claim}{Claim}
\numberwithin{theorem}{section}
\def\eps{\varepsilon}
\def\poly{\mathrm{poly}}
\def\N{{\mathbb N}}
\def\calB{{\cal B}}
\def\calI{{\cal I}}
\def\1{\mathbb{I}}
\def\xx{{\textrm{x}}}
\definecolor{myyellow}{RGB}{240,217,1}
\definecolor{mygreen}{RGB}{143,188,103}
\definecolor{myred}{RGB}{234,38,40}
\definecolor{myblue}{RGB}{53,101,167}
\renewcommand{\R}{\mathbb{R}}
\renewcommand{\E}{\mathbb{E}}
\newcommand{\spann}{\mathrm{span}}
\newcommand{\argmin}{\mathrm{argmin}}
\newcommand{\frakB}{\mathfrak{B}}
\newcommand{\calS}{\mathcal{S}}
\newcommand{\frakI}{\mathfrak{I}}
\newcommand{\frakJ}{\mathfrak{J}}
\newcommand{\lra}{$\ell_0$-\textsc{Low Rank Approximation}\xspace}
\newcommand{\lraa}{$\ell_0$-\textsc{LRA}\xspace}
\newcommand{\rlra}{\textsc{Restricted} $\ell_0$-\textsc{Low Rank Approximation}\xspace}
\let\@oldnoitemerr\@noitemerr 
\newcommand\noitemerroroff{\let\@noitemerr\relax}
\newcommand\noitemerroron{\let\@noitemerr\@oldnoitemerr}
\title{A PTAS for $\ell_0$-Low Rank Approximation: \\ Solving Dense CSPs over Reals}
\author{Vincent Cohen-Addad \thanks{Google Research, France} \and Chenglin Fan \thanks{Sorbonne Université, France} \and Suprovat Ghoshal\thanks{Northwestern University, USA} \and Euiwoong Lee\thanks{University of Michigan, USA} \and Arnaud de Mesmay\thanks{LIGM, CNRS, Univ. Gustave Eiffel, ESIEE Paris, F-77454 Marne-la-Vallée, France} \and Alantha Newman\thanks{Laboratoire G-SCOP (CNRS, Grenoble-INP), France} \and Tony Chang Wang\thanks{University of Wisconsin-Madison, USA}}
\begin{document}

\date{}

\maketitle

\begin{abstract}

We consider the \lra problem, where the input consists of a matrix $A \in \R^{n_R \times n_C}$ and an integer $k$, and the goal is to find a matrix $B$ of rank at most $k$ that minimizes $\| A - B \|_0$, which is the number of entries where $A$ and $B$ differ. 
For any constant $k$ and $\eps > 0$, we present a polynomial time $(1 + \eps)$-approximation time for this problem, which significantly improves the previous best $\poly(k)$-approximation. 

Our algorithm is obtained by viewing the problem as a Constraint Satisfaction Problem (CSP) where each row and column becomes a variable that can have a value from $\R^k$. 
In this view, we have a constraint between each row and column, which results in a {\em dense} CSP, a well-studied topic in approximation algorithms. 
While most of previous algorithms focus on finite-size (or constant-size) domains and involve an exhaustive enumeration over the entire domain, 
we present a new framework that bypasses such an enumeration in $\R^k$. 
We also use tools from the rich literature of Low Rank Approximation in different objectives (e.g., $\ell_p$ with $p \in (0, \infty)$) or domains (e.g., finite fields/generalized Boolean).
We believe that our techniques might be useful to study other real-valued CSPs and matrix optimization problems.

On the hardness side, when $k$ is part of the input, we prove that \lra is NP-hard to approximate within a factor of $\Omega(\log n)$.
This is the first superconstant NP-hardness of approximation for any $p \in [0, \infty]$ that does not rely on stronger conjectures (e.g., the Small Set Expansion Hypothesis). 
\end{abstract}

\section{Introduction}
Computing a low rank approximation of a given matrix is one of the most fundamental algorithmic tasks in data analysis and machine learning. 
Formally, given a matrix $A \in \R^{n_R \times n_C}$ and an integer $k \in \N$, the goal is to compute a matrix $B \in \R^{n_R \times n_C}$ of rank at most $k$ that minimizes some distance measure between $A$ and $B$.
The Frobenius norm $\| A - B \|_F = (\sum_{i, j} (A_{i, j} - B_{i,j})^2)^{1/2}$ and its generalizations to Schatten norms can be optimized in polynomial time for any $k$, and there is a rich literature on faster algorithms to compute them, possibly for special classes of matrices or more restricted models of computations~\cite{clarkson2017low, clarkson2017lowsoda, musco2017sublinear, jiang2021single, bakshi2020robust, li2020input, woodruff2022improved, bakshi2022low}.
Many variants that are not expected to have a polynomial time exact algorithm have been actively studied as well, including tensor versions~\cite{song2019relative} or weighted versions~\cite{razenshteyn2016weighted, ban2019regularized} where each entry has different weights. 
The (entrywise) $\ell_p$ objective 
$\| A - B \|_p := (\sum_{i, j} (A_{i, j} - B_{i,j})^p)^{1/p}$ is another generalization of the Frobenius norm~\cite{chierichetti2017algorithms, song2017low, ban2019ptas, mahankali2021optimal}. 
This objective with $0 \leq p < 2$ is generally considered to be more robust than the Frobenius norm, because in the Frobenius norm, a few outlier entries (whose values are very far from correct) can have a large effect on the other entries in a solution.

In this paper, we focus on \lra (\lraa), where $\| A - B \|_0$ is defined to be the number of entries where $A$ and $B$ differ. It is a maximally robust objective function in the aforementioned sense, which was used in the notion of robust PCA~\cite{candes2011robust}. The choice of the $\ell_0$ metric also makes particular sense in contexts where the data is not endowed with a natural underlying metric. 
The \lraa problem coincides with the matrix rigidity problem over the reals, which has been studied in the context of complexity theory~\cite{grigoriev1976using, valiant1977graph} and parameterized complexity~\cite{fomin2017rigidity}, and is closely related to matrix completion~\cite{johnson1990matrix, candes2010matrix, keshavan2010matrix, recht2011simpler}.
The special case when $A \in \{0, 1\}^{m \times n}$, which is NP-hard even for $k = 1$~\cite{gillis2018complexity, dan2018low}, has been also well studied~\cite{fomin2019approximation, fomin2020parameterized}. 
Another related problem is the {\em Metric Violation Distance (MVD)} problem where the input is a symmetric matrix $A \in \R^{n \times n}$ and the goal is to find a matrix $B \in \R^{n \times n}$ representing the pairwise distances in a metric space to minimize $\| A - B \|_0$~\cite{fan2018metric, cohen2022fitting}.

The previous best approximation algorithms for \lraa, given by Bringmann, Kolev, and Woodruff~\cite{BKW17}, achieve an $O(k^2)$-approximation in time $n^{O(k)}$ and an $(2 + \eps)$-approximation when $k = 1$. (Let $n := \max(n_R, n_C)$.) 
It is in stark contract to $(1+\eps)$-approximations for fixed $k$ when $p \in (0, 2)$ or the domain is constant-size~\cite{ban2019ptas, fomin2019approximation}. 
We bridge this gap, showing that \lraa admits a PTAS for every constant $k$ as well. 

\begin{restatable}{theorem}{min}\label{thm:main}

For any fixed constants $k \in \N$ and $\eps > 0$, there exists a $(1+\eps)$-approximation algorithm for \lra that runs in time $n^{2^{\poly(k/\eps)}} \poly(\tau)$, where $\tau$ is an upper bound on the bitsize of the coefficients of the input matrix. 
\end{restatable}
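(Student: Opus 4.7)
My plan is to model \lraa as a dense $2$-CSP on the complete bipartite graph between rows and columns of $A$. Each row $i \in [n_R]$ and column $j \in [n_C]$ becomes a variable with value $u_i \in \R^k$ and $v_j \in \R^k$, respectively, and the constraint between row $i$ and column $j$ is $\langle u_i, v_j\rangle = A_{i,j}$. Since a rank-$k$ matrix $B$ factors as $B = UV^\top$, the number of violated constraints under the assignment $(U,V)$ equals $\|A - UV^\top\|_0$. Every row-column pair carries a constraint, so the CSP is dense, which puts the problem within reach of the Frieze--Kannan / Arora--Karger--Karpinski / Mathieu--Schudy PTAS framework---with the caveat that here the alphabet is the infinite set $\R^k$ rather than a finite domain.

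The algorithm starts with a sampling step. I would draw sets $S_R \subseteq [n_R]$ and $S_C \subseteq [n_C]$ of size $s = \poly(k/\eps)$. Chernoff/Hoeffding plus a union bound over a structured candidate family show that the empirical disagreement $\|(A - B)|_{S_R \times S_C}\|_0 / (|S_R||S_C|)$ is within $\eps$ of the true normalized disagreement for every $B$ in that family; hence it suffices to construct a polynomial-size candidate family that provably contains a $(1+\eps)$-approximate rank-$k$ matrix and then pick the one minimizing empirical error.

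The main novelty must be replacing the uncountable enumeration over $\R^k$ with a finite combinatorial one. The key structural observation is that, once a hypothetical optimal $V^* = (v_j^*)_j$ is fixed, the optimal $u_i$ for each row is an $\ell_0$-regression solution that, by genericity, lies at the intersection of $k$ of the hyperplanes $\{u : \langle u, v_j^* \rangle = A_{i,j}\}$; a symmetric statement holds for columns. So up to finitely many generic tie-breaks, $u_i$ is determined by a $k$-subset of columns together with the values of those $k$ entries of row $i$. Building on this, I would enumerate guesses for $k$-element ``basis'' sets $R_0 \subseteq [n_R]$ and $C_0 \subseteq [n_C]$ that carry the row/column spaces of $B^*$, together with an agreement pattern on $(R_0 \cup S_R) \times (C_0 \cup S_C)$ specifying which entries of $A$ the optimum matches. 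Each such guess induces candidate $\{u_i\}_{i \in R_0}$ and $\{v_j\}_{j \in C_0}$ by solving a small linear system, and then extends to every other row/column by picking the combinatorial candidate minimizing empirical disagreement against the sample. The candidate family has size $n^{O(k)} \cdot 2^{O((s+k)^2)}$, which, once the sample size $s$ is folded in, yields $n^{2^{\poly(k/\eps)}}$.

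The hard part will be controlling the noise inherent in the basis: $B^*$ need not agree with $A$ on the $(R_0, C_0)$ entries, so one cannot read the factorization of $B^*$ straight off $A$, and small errors on the basis propagate to large errors in the extension. To address this I would employ a multi-level enumeration: guess the agreement pattern of $B^*$ on a somewhat larger sample (costing $2^{\poly(k/\eps)}$ patterns), then use $\ell_p$-LRA style sketching and coreset tools from the prior literature to ``denoise'' the basis by solving small instances of restricted low rank approximation within each guess. Accounting for concentration of empirical error simultaneously along the row and column sides yields the tower-type running time. With the candidate family thus constructed, the sampling step of the first paragraph transfers the empirical guarantee to a $(1+\eps)$-approximation in the true $\ell_0$ objective, completing the proof.
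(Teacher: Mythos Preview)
Your proposal has a fundamental gap that prevents it from reaching a multiplicative $(1+\eps)$-approximation: the concentration step you invoke is additive, not multiplicative. When you say ``the empirical disagreement $\|(A-B)|_{S_R\times S_C}\|_0/(|S_R||S_C|)$ is within $\eps$ of the true normalized disagreement for every $B$ in the family,'' this means your chosen $B$ satisfies $\|A-B\|_0\le OPT + \eps\, n_R n_C$, not $\|A-B\|_0\le(1+\eps)OPT$. If $OPT=o(n_R n_C)$---and nothing prevents this, since the alphabet is infinite and there is no $\Omega(n_R n_C)$ lower bound on $OPT$---the additive slack swamps the target error. This is precisely why the paper separates Theorem~\ref{thm:additive} (additive) from Theorem~\ref{thm:main} (multiplicative): what you have sketched is, at best, a route to the former.

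The paper's multiplicative PTAS requires substantial additional machinery beyond what you describe. It partitions rows and columns into \emph{layers} according to a nested subspace structure of the optimal solution, forming at most $k\times k$ blocks $B_{i,j}=I_i\times J_j$. Each block is declared clean, half-clean, or dirty depending on whether $|B_{i,j}|$ is much larger than, comparable to, or much smaller than $OPT$. Clean blocks are handled by layered sampling and a voting lemma (Lemma~\ref{lem:min}) that controls both the probability of a wrong vote and the expected extra error; half-clean blocks are handled by the \emph{additive} PTAS (which suffices there because $|B_{i,j}|\approx OPT$); dirty blocks are ignored. A further subtlety is that the algorithm never learns which rows lie in which layer, so it must construct alphabets $\Sigma_{p,i}$ for every row-layer pair and run the additive PTAS on the whole matrix while only relying on its guarantee inside the unknown half-clean block. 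None of this structure appears in your outline. Secondary gaps: your ``by genericity'' assumption on the optimal factorization is exactly what fails in the non-rigid case, which the paper handles via matroid enumeration and auxiliary cores (Proposition~\ref{P:supercore}); and the appeal to ``$\ell_p$-LRA style sketching and coreset tools'' to denoise is not justified for $\ell_0$, where only exact equalities count and the objective is not continuous.
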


Our algorithm works by computing matrices $U \in \R^{n_R \times k}$ and $W \in \R^{n_C \times k}$ so that $B=UW^T$. Each entry of our solution $U$, $W$ and $B$ is not guaranteed to be a rational number and will be described by the {\em Thom encoding}; roughly, it will be the unique solution to a system of an $2^{\poly(k/\eps)}$ polynomial (in)equalities whose coefficients have bit complexity at most $\poly(\tau)$. See Section~\ref{sec:algeq} for more background.

Our result is inspired by the connection between Low Rank Approximation
and the
well-studied topic of {\em dense Constraint Satisfaction Problems (CSPs)}. 
In the context of \lraa, we consider each row and column as a variable that can have a value from $\R^k$. For each row $i$ and column $j$, 
we have a constraint that is satisfied if $\langle u_i, v_j \rangle = A_{i, j}$, where $u_i$ and $v_j$ denote the vectors chosen by $i$ and $j$ respectively. 
Dense CSPs are a central topic in approximation algorithms, and there are PTASes using various methods (e.g., sampling~\cite{arora1995polynomial, alon2003random, de2005tensor, mathieu2008yet, karpinski2009linear, barak2011subsampling, yaroslavtsev2014going, manurangsi2015approximating}, regularity lemma~\cite{frieze1996regularity, coja2010efficient}, convex hierarchies~\cite{de2007linear,
barak2011rounding, guruswami2011lasserre, yoshida2014approximation}).

However, all previous techniques crucially rely on the fact that the {\em domain} of a CSP is finite (and bounded as a function of $n$), which makes it nontrivial to apply these ideas to \lraa.
We overcome such a difficulty by introducing a new framework that allows us to use tools from both the Low Rank Approximation and dense CSP literatures (see Section~\ref{sec:techniques} for more detailed description of our techniques). We hope that it might be useful for other matrix problems and CSPs with real domains. On our way to proving Theorem~\ref{thm:main}, our first result is to provide an $\eps$-\emph{additive} approximation algorithm, 
which returns a solution $B$ with the guarantee that 
$\| A - B \|_0$ is at most the optimal value plus $\eps n_R n_C$.

\begin{restatable}{theorem}{additive}\label{thm:additive}
For any fixed constants $k \in \N$ and $\eps>0$, there exists an algorithm computing an $\varepsilon$-additive approximation to \lraa that runs in time $n^{(1/\eps)^{\poly(k)}} \poly(\tau)$, where $\tau$ is an upper bound on the bitsize of the coefficients of the input matrix.
\end{restatable}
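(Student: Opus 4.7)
The plan is to view \lraa as a bipartite Max-CSP: each row $i$ becomes a variable $u_i \in \R^k$, each column $j$ a variable $v_j \in \R^k$, and for each of the $n_R n_C$ entries we have a constraint $\langle u_i, v_j\rangle = A_{ij}$. Since every (row, column) pair shares a constraint, this is a \emph{dense} CSP, and I would adapt the sampling-based PTAS framework developed for such problems (Mathieu--Schudy, Alon et al.), replacing the exhaustive enumeration over a finite alphabet by a real-algebraic reconstruction step suited to $\R^k$.

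The algorithm enumerates row and column subsets $R \subseteq [n_R]$, $S \subseteq [n_C]$ of size $q = (1/\eps)^{\poly(k)}$ (a derandomization of uniform sampling), giving $n^{O(q)}$ choices. For each such $(R,S)$ and each subset $T \subseteq R \times S$ of ``target matched entries'' ($2^{q^2}$ choices), it decides whether there exist $U_R \in \R^{q\times k}$, $V_S \in \R^{q\times k}$ with $\langle (U_R)_i, (V_S)_j\rangle = A_{ij}$ for all $(i,j) \in T$. This is a quadratic system in $2qk$ real variables whose size is independent of $n$, so it is decidable and a representative solution can be extracted in Thom encoding in $\poly(\tau)$ time via standard real-algebraic-geometry algorithms (e.g., Renegar). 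For each feasible $(U_R, V_S)$, the algorithm extends $V_S$ to the full matrix by $\ell_0$-regression: set $u_i := \arg\max_u |\{j \in S : \langle u, (V_S)_j\rangle = A_{ij}\}|$ for each $i$ (computed by trying all $\binom{q}{k}$ linear systems $V_{S,J}^{\top} u_i = A_{i,J}$ over $|J|=k$ and picking the winner), and analogously each $v_j$ from the resulting $U$; it outputs the $B = UV^{\top}$ minimizing $\|A-B\|_0$.

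For correctness, let $B^* = U^* V^{*\top}$ be an optimum and consider a ``typical'' sample $(R,S)$ together with $T^* := \{(i,j)\in R\times S : B^*_{ij}=A_{ij}\}$. The system for $T^*$ is satisfied by $(U^*_R, V^*_S)$; provided $U^*_R$ and $V^*_S$ have rank $k$ and $|T^*|$ exceeds the $2qk - k^2$ degrees of freedom modulo the $\mathrm{GL}_k$ gauge $(U,V) \mapsto (UG, V G^{-\top})$, its rank-$k$ solution is unique modulo this gauge. Both conditions hold with high probability over a sample of size $q = \poly(k/\eps)$: $|T^*|$ concentrates around $q^2(1 - \mathrm{OPT}/(n_R n_C))$, which dominates $2qk$. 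Hence the algorithm's $V_S$ equals $V^*_S$ up to a basis change invisible to $B = UV^{\top}$, and the row regression reduces to $\ell_0$-regression with the \emph{true} data $(V^*_{S,j}, A_{ij})_{j\in S}$. A VC-style uniform convergence argument (the family $u \mapsto \mathbf{1}[\langle u, v\rangle = y]$ has pseudo-dimension $O(k)$, since $n_C$ hyperplanes in $\R^k$ partition space into $O(n_C^k)$ cells) shows that with $q = \poly(k/\eps)$ each row's $u_i$ agrees with $A_{i,*}$ on at least $n_C - D_i - \eps n_C$ columns, where $D_i$ is $B^*$'s disagreement count on row $i$; the subsequent column regression can only improve this. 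Summing over rows yields $\|A - B\|_0 \leq \mathrm{OPT} + O(\eps) n_R n_C$, and rescaling $\eps$ completes the proof.

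The main obstacle, in my view, will be the real-algebraic subroutine: we must certify that for at least one enumerated $(R,S,T)$ the extracted solution lies in the $\mathrm{GL}_k$-orbit of $(U^*_R, V^*_S)$, handle the Thom-encoded output across the subsequent regression steps, and control bit complexity so that the overall $\poly(\tau)$ dependence is preserved. A secondary subtlety is that the sample must simultaneously guarantee (a) rank-$k$ non-degeneracy of $U^*_R, V^*_S$, (b) overdetermination of the $T^*$ system past the gauge, and (c) VC concentration for every row and column; all three hold together with high probability at $q = \poly(k/\eps)$, and derandomization by enumerating all $(R, S)$ of this size yields the claimed $n^{(1/\eps)^{\poly(k)}}\poly(\tau)$ runtime.
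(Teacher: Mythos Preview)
Your high-level plan (sample, solve a constant-size quadratic system on the sample, then extend by $\ell_0$-regression) matches the paper's, but the central step of your argument has a genuine gap.

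The claim that the system for $T^*$ has a \emph{unique} rank-$k$ solution modulo the $\mathrm{GL}_k$ gauge, once $|T^*|$ exceeds $2qk-k^2$, is false. Dimension counting gives only an \emph{expected} dimension of the solution variety; for structured (non-generic) data the actual solution set can be much larger. Concretely, partition $R=R_1\cup R_2$ and $S=S_1\cup S_2$, each of size $q/2$, and suppose the optimal solution happens to agree with $A$ on $(R_1\times S_1)\cup(R_2\times S_2)$ and err on the cross blocks, so $T^*=(R_1\times S_1)\cup(R_2\times S_2)$. Then $|T^*|=q^2/2\gg 2qk-k^2$, yet the system decouples: for any $G_1,G_2\in\mathrm{GL}_k$ the pair $(U_1G_1,U_2G_2;\,V_1G_1^{-\top},V_2G_2^{-\top})$ is also a solution, so the gauge group is $\mathrm{GL}_k\times\mathrm{GL}_k$, not $\mathrm{GL}_k$. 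The solver will return \emph{some} representative, and the two blocks will in general be mutually inconsistent; a row $i\notin R$ that in the optimum is adjacent to columns in both $S_1$ and $S_2$ will then have no single $u_i$ matching both, and your regression step will not recover the right value. This is exactly the obstruction the paper isolates (see the discussion around Figure~\ref{F:sync}).

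The paper's fix is substantial and is precisely what your proposal is missing: it does not rely on uniqueness of the sampled system. Instead it (i) augments the random core with additional, \emph{non-random} rows and columns (the ``supercore'') that synchronize the different pieces, (ii) guesses the full matroid of linear dependencies among the sampled $u$'s and $w$'s to handle the non-rigid case where $k\times k$ submatrices of $U^*,W^*$ may be singular, and (iii) defines for each vertex a finite \emph{alphabet} of candidate vectors (one per choice of piece and independent set), showing via Lemma~\ref{lem:linear} that any solver output on the supercore extends consistently along pieces regardless of which gauge representative was returned. Your rank-$k$ non-degeneracy assumption on $U^*_R,V^*_S$ (point (a) in your last paragraph) is the ``rigid case'' the paper treats first as a warm-up; the general case genuinely requires the matroid/alphabet machinery, and without it the argument does not go through.
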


On the hardness side, when $k$ is part of the input, we prove that \lra is NP-hard to approximate within a factor of $\Omega(\log n)$, which implies that the superpolynomial dependence on $k$ is necessary.
To the best of our knowledge, this is the first superconstant {\em NP-hardness} of approximation for any $p \in [0, \infty]$ that does not rely on stronger conjectures. 
The only known $\omega(1)$-hardness of $\ell_p$-\textsc{Low Rank Approximation}, which holds for $p \in (1, 2)$, relies on the Small Set Expansion Hypothesis~\cite{ban2019ptas}. 

\begin{restatable}{theorem}{hardness}\label{thm:hardness}
When $k$ is part of the input, it is NP-hard to approximate \lra within a factor of $\Omega(\log n)$. 
\end{restatable}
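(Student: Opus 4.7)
The plan is to give a polynomial-time gap-preserving reduction from a problem with known $\Omega(\log n)$ NP-hardness of approximation, the most natural candidate being the $\mathbb{F}_2$-Nearest Codeword Problem (NCP): given $H \in \{0,1\}^{m \times t}$ and $y \in \{0,1\}^m$, find $x \in \{0,1\}^t$ minimizing $\|Hx - y\|_0$. Given such an instance, I would form an $\ell_0$-LRA instance $(A, k)$ with $k = t$ by placing the columns of $H$ as the first $t$ columns of $A$ and appending $N$ identical copies of $y$, where $N$ is a suitably large polynomial in $m$; some auxiliary gadget rows are included as well. The design intuition is that if $y \in \mathrm{colspan}(H)$ the matrix $A$ already has rank $t$, and otherwise the cheapest rank-$t$ approximation overwrites each copy of $y$ by the same vector in $\mathrm{colspan}(H)$ at a total cost equal to $N$ times the Hamming distance from $y$ to $\mathrm{colspan}(H)$. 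Since $N$ is much larger than $t$, any strategy that tries to save by modifying the $H$-block itself is strictly dominated.

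Completeness then follows by taking an optimal $x^* \in \{0,1\}^t$ for NCP and replacing every copy of $y$ by $Hx^*$, giving $\ell_0$-cost $N \cdot \mathrm{OPT}_{\mathrm{NCP}}$. Soundness is argued by a rounding step: any rank-$t$ solution $B$ with $\|A - B\|_0 \ll N \log(m) \cdot \mathrm{OPT}_{\mathrm{NCP}}$ must leave most of the $H$-columns intact and must overwrite most of the $y$-copies by a single common vector $v$ that lies close to $\mathrm{colspan}(H)$ and within Hamming distance $o(\log m) \cdot \mathrm{OPT}_{\mathrm{NCP}}$ of $y$. Rounding $v$ to a Boolean $x \in \{0,1\}^t$ then yields an NCP solution that beats the $\Omega(\log m)$ inapproximability threshold, a contradiction. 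Because $N$ is polynomial in $m$, the size parameter of the new instance satisfies $\log n = \Theta(\log m)$, so the $\log m$ factor transfers directly to a $\log n$ factor for \lraa.

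The main obstacle is a subtlety specific to working over $\mathbb{R}$ rather than $\mathbb{F}_2$: the rank-$t$ adversary is allowed \emph{fractional} linear combinations of the columns of $H$, and such non-Boolean combinations could in principle match many more entries of $y$ than any $\mathbb{F}_2$-combination, so the naive reduction only yields hardness of the \emph{real-valued} NCP, whose $\Omega(\log n)$ hardness is not established. Overcoming this requires a Boolean-enforcing gadget in the construction---for instance, appending carefully chosen auxiliary rows derived from a dual code, from a pseudorandom $0/1$ block, or from multiple scaled copies of the $H$-$y$ block---designed so that any reconstruction using non-Boolean entries incurs extra $\ell_0$-mismatches well in excess of whatever fractional savings they might provide. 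Verifying that this gadget is simultaneously (a) cheap for the honest Boolean solution, (b) prohibitively expensive for any real-valued cheat, and (c) only polynomially large, forms the bulk of the technical work; a natural fallback, if the NCP-based route is too delicate, is to redo the same template starting from Minimum Set Cover or Minimum Dominating Set, where the combinatorial structure of the instance is easier to protect against real-valued relaxation.
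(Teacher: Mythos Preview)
Your primary route via $\mathbb{F}_2$-NCP has exactly the gap you yourself flag, and it is not a minor technicality: over $\R$, a rank-$t$ approximator can use arbitrary real combinations of the columns of $H$, and there is no known gadget that forces Boolean behavior while remaining polynomial-size and not destroying the $\Omega(\log n)$ gap. Your proposal does not supply such a gadget---it only names the desiderata---so as written the NCP reduction is incomplete. The fallback you mention, reducing from \textsc{Set Cover}, is in fact the right path, but you have not indicated how to carry it out, and the same real-versus-Boolean worry arises if one naively encodes the set system.

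The paper's proof takes the \textsc{Set Cover} route and sidesteps the Boolean issue entirely by a short structural trick that your plan is missing. Given the $n\times m$ element--set incidence matrix $A$, the target matrix consists of many copies of a basis of $\ker(A)$ together with a single column $w$ satisfying $Aw=(-1,\dots,-1)$; the target rank is $k=\dim\ker(A)$. The replication forces any cheap rank-$k$ approximation to have column space exactly $\ker(A)$, so its last column $v$ lies in $\ker(A)$ and hence $A(w-v)=(-1,\dots,-1)$. The key observation is that this equation, even over $\R$, forces the \emph{support} of $w-v$ to be a set cover: for every element $j$ some set $i$ in the support must contain $j$, since otherwise the $j$th coordinate would be $0$. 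Thus $\|w-v\|_0$ lower-bounds the set-cover size, with no Boolean-enforcement gadget needed. This is the idea your plan is lacking; once you have it, completeness and soundness are a few lines each.
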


Theorem~\ref{thm:main} features a doubly-exponential dependency on the parameters $k$ and $\varepsilon$, and this dependency is not fixed-parameter tractable. Furthermore, our algorithm heavily relies on algorithms from real algebraic geometry that quickly become impractical. Whether one can improve this complexity, both from a theoretical and practical point of view, is the main question arising from our work. It would also be interesting to adapt our framework to other problems which can be phrased as CSPs over the reals, such as other matrix factorization problems, or finite-dimensional versions of problems involving distances (see, e.g., ~\cite{fan2018metric, cohen2022fitting}).

\subsection{Techniques}
\label{sec:techniques}

In this section, we give an overview of the techniques involved in the proof of Theorems~\ref{thm:main} and~\ref{thm:additive}. Let $A \in \R^{n_R \times n_C}$ be the input matrix, let $U W^T$ be an optimal solution where $U \in \R^{n_R \times k}$ and $W \in \R^{n_C \times k}$. We denote by $u_i$ the $i$th row of $U$ and $w_i$ the $i$th row of $W$ (as column vectors). Let $OPT = | \{ (i, j) : A_{i, j} \neq \langle u_i, w_j \rangle \} |$ be the number of {\em errors} that the optimal solution makes.

Both algorithms rely on PTASes for constraint satisfaction problems, which we first introduce. A Constraint Satisfaction Problem of arity $2$ (2-CSP) consists of (i) a family of $n$ variables $V$, which can take values within a given alphabet $D$ (also called domain), and (ii) a family of constraints $C$ between some pairs of variables, where each constraint is a subset of $D \times D$. An assignment is a map $\varphi: V \rightarrow D$. The set of pairs of variables between which there is a constraint is encoded in a graph $G$, called the primal graph (or Gaifman graph) of the 2-CSP. The goal is to find an assignment that maximizes the number of satisfied constraints (Max-2-CSP), or that minimizes the number of unsatisfied constraints (Min-2-CSP). While these two problems are obviously equivalent in the realm of exact algorithms, providing approximation algorithms leads to different challenges in the minimization and the maximization setting. In a nutshell, efficiently approximating a Min-2-CSP requires performing very well on the instances where almost all the constraints are satisfiable, while approximating a Max-2-CSP requires performing well in the opposite regime, when a very small number of constraints is satisfiable. This explains why when the graph is dense and the alphabet size is of constant size, it is much easier to obtain a PTAS for the Max-CSP: in this regime, the maximum number of satisfiable constraints is $\Omega(n^2)$, as can easily be proved by taking a random assignment. Therefore, in order to design a PTAS for a dense Max-2-CSP and a constant-size alphabet, it suffices to devise a $\varepsilon$-additive approximation, akin to the one we are aiming for in Theorem~\ref{thm:additive}. 

Our approach to prove Theorems~\ref{thm:main} and~\ref{thm:additive} is to formulate the \lra problem as 2-CSP, where the primal graph is the complete bipartite graph. There is one variable for each row and each column, the alphabet is $\mathbb{R}^k$ and the constraint between the row $i$ and the column $j$ is $\langle u_i,w_j \rangle=A_{i,j}$. Then we would like to use the PTASes for such dense CSPs available in the literature, but the key issue is that our alphabet size is infinite. Therefore, the most technical part in both our algorithms consists in reducing the alphabet size to a constant 
size: computing for each row $i$ and column $j$ a constant-size alphabet $\Sigma_i$ or $\Sigma_j$ of vectors in $\mathbb{R}^k$, such that there exists a near-optimal solution using exclusively vectors from these alphabets. Throughout this overview, whenever we refer to ``constant'', the constant depends on $k$ and $\varepsilon$; we refer to the proofs for the precise values.

\subsubsection{Additive approximation scheme: Theorem~\ref{thm:additive}}\label{SS:additive}

A classical approach to design additive approximation schemes for dense CSPs is to sample a constant number of variables~\cite{mathieu2008yet,yaroslavtsev2014going}, guess their values and then extrapolate from this sample the values of all the other variables. Since our domain size is infinite, we cannot guess the values here, and instead our key contribution is to prove the existence of a constant-size set of variables and constraints among them {\em beyond the ones given by matrix entries} such that \emph{any} solution to the constraints between those can be extended to be a near-optimal solution on the full set of variables. Such a solution can be computed using real algebraic solvers~\cite{roy}. This idea might be of independent interest to other constraint satisfaction problems over the reals.

We now get into more details. The entire algorithm behind Theorem~\ref{thm:additive} is outlined in Figure~\ref{A:mainalgo}. For simplicity, we assume in this section that $n:=n_R=n_C$.

\paragraph*{The rigid case.} We first explain the intuition behind it by investigating a particular case. We first assume that there exists an optimal solution $U,W$ that is \emph{rigid}: every $k \times k$ submatrix of $U$ and $W$ has full rank (recall that $k$ is the target rank in our \lra problem). If at most $\varepsilon n^2$ constraints are satisfied, any solution is an $\varepsilon$-additive approximation. Otherwise, we consider the bipartite graph $G$, where the vertices are the rows and columns of $A$, and there is an edge whenever the constraint between row $i$ and column $j$ is satisfied, i.e., $\langle u_i,w_j \rangle=A_{i,j}$. Since this graph has at least $\varepsilon n^2$ edges, the Kovari-S\`os-Turan theorem~\cite{kst} guarantees that it admits a complete bipartite subgraph $G':=K_{k,k}$ as a subgraph. This complete bipartite subgraph enforces a solution on the corresponding rows of $U$ and columns of $W$, which is unique up to the natural symmetries of the problem. More formally, by our rigidness assumption, up to\footnote{This does not change the value of the solution since one can change $W$ accordingly: $(UC)(C^{-1}W^T)=UW^T$.} applying an invertible matrix $C$ to $U$ we can assume that the submatrix of $U$ induced by the rows of $G'$ is the identity, and then $W$ must exactly match the submatrix of $A$ induced by $G'$. Then, every row $u_i$ that is adjacent in $G$ to all the columns of $G'$ (thus forming a $K_{k+1,k}$) also has its value completely determined by the constraints of $G'$, since it is the unique solution to a linear system of full rank.

The Kovari-S\`os-Turan theorem can be strengthened to a supersaturated version, showing that (Lemmas~\ref{lem:kst1} and~\ref{lem:kst}) not only there exists a $K_{k+1,k+1}$ subgraph, but there are a lot of them, and actually most edges of $G$ belong to many of them. This suggests the following algorithm. First, we sample a constant number of columns and vertices uniformly at random, and we guess the subgraph $G'$ of $G$ induced by this subset, which we call the \emph{core} of the solution. Now, let us assume that we can compute a family of rows and columns for this constant-size core {\em that exactly matches the optimal solution} (perhaps modulo the natural symmetries of the problem). Then, for any edge $(i, j)$ not in $G'$ (except for a negligible portion of those), we can prove that it belongs in $G$ to a $K_{k+1,k+1}$, where the other $2k$ vertices are in $G'$. Therefore, by the rigidness assumption, the rows and columns of $G'$ induce a unique solution for $i$ and $j$, which thus matches the optimal solution. We can thus define for a vertex $v$ an alphabet $\Sigma_v$ as being, for each possible choice of $K_{k,k}$ in $G'$ that $v$ could be adjacent to, the unique solution that it induces for $v$. Since this alphabet has constant size for each vertex, we can now appeal to standard Max-2-CSPs algorithms~\cite{yaroslavtsev2014going} to obtain the required $\eps$-additive approximation.

This algorithm requires us to compute the restriction of an optimal solution to the constant-size subset of rows and columns induced by $G'$. Such a solution must satisfy a family of quadratic equations: for all $(i, j) \in G'$, we should have $\langle u_i,v_j \rangle= A_{i,j}$, where $u_i$ and $v_j$ are unknown vectors in $\mathbb{R}^k$. We can solve such systems of equations using algorithms from real algebraic geometry, which more generally can be used to solve\footnote{There are subtle issues involving what it means to ``solve'' such a system of equations, see the discussion in Section~\ref{sec:algeq}.} any polynomial system of (in)-equations (or even any problem in the Existential Theory of the Reals, see Section~\ref{sec:algeq}) in exponential time. Such algebraic solvers have already been used in multiple algorithms in Low Rank Approximation and its variants, see for example~\cite{fomin2020parameterized,razenshteyn2016weighted,song2019relative}, but one key difference is that in~\cite{razenshteyn2016weighted,song2019relative}, they were used to optimize the objective function (which was itself polynomial) over a sketch. This is not possible for us because we use the $\ell_0$-norm. Since in our case, the systems have constant size, we can afford to pay the exponential complexity. However, a key issue appears: in contrast to the case where the entire $G'$ was equal to $K_{k, k}$, 
in general $G'$ can be an arbitrary graph so that such a solution will in general not be unique, even after quotienting by the natural symmetries of the problem. Therefore, it could be that the solution that we compute on $G'$ is fundamentally different from the optimal one, thus leading to alphabets which do not contain an optimal solution. 

\begin{figure}[t]
  \usetikzlibrary{matrix,positioning}
  \adjustbox{width=\textwidth,keepaspectratio}{ 
\begin{tikzpicture}[
mymatrix/.style={
  matrix of math nodes,
  outer sep=0pt,
  nodes={
    draw,
    anchor=base,
    text depth=.5ex,
    text height=2ex,
    text width=2em,
    align=center,
    text=gray
  },
  left delimiter=[,
  right delimiter=],
  }
]
\matrix[mymatrix] (A)
{
|[text=black]|1 & |[text=black]|0 \\
|[text=black]|0 & |[text=black]|1 \\
|[text=black]|1 & |[text=black]|0 \\
|[text=black]|0 & |[text=black]|1 \\
|[text=black]|u_{5,1} & |[text=black]|u_{5,2} \\
};
\matrix[mymatrix,right=of A.north east,anchor=north west] (prod)
{
 |[text=black]|2& |[text=black]|1 & \xx &\xx\\
 |[text=black]|1& |[text=black]|0 & \xx & \xx\\
 \xx& \xx& |[text=black]|0 & |[text=black]|2\\
 \xx& \xx& |[text=black]|2 & |[text=black]|0\\
 |[text=black]|3& |[text=black]|1& |[text=black]|1&|[text=black]|1\\
};
\matrix[mymatrix,above=of prod.north west,anchor=south west] (B)
{
 |[text=black]|2 & |[text=black]|1 &  |[text=black]|0 & |[text=black]|2\\
 |[text=black]|1 & |[text=black]|0 &  |[text=black]|2 & |[text=black]|0\\
};

\node[font=\large,left=10pt of A] {$U$};
\node[font=\large,above=2pt of B] {$W^T$};

\draw[myyellow,line width=2pt]
  ([shift={(1.2pt,-1.2pt)}]A-1-1.north west) 
  rectangle 
  ([shift={(-1.2pt,1.2pt)}]A-2-2.south east);
  \draw[myblue,line width=2pt]
  ([shift={(1.2pt,-1.2pt)}]A-5-1.north west) 
  rectangle 
  ([shift={(-1.2pt,1.2pt)}]A-5-2.south east);
\draw[myyellow,line width=2pt]
  ([shift={(1.2pt,-1.2pt)}]B-1-1.north west) 
  rectangle 
  ([shift={(-1.2pt,1.2pt)}]B-2-2.south east);
  \draw[mygreen,line width=2pt]
  ([shift={(1.2pt,-1.2pt)}]B-1-3.north west) 
  rectangle 
  ([shift={(-1.2pt,1.2pt)}]B-2-4.south east);
\draw[myyellow,line width=2pt]
  ([shift={(1.2pt,-1.2pt)}]prod-1-1.north west) 
  rectangle 
  ([shift={(-1.2pt,1.2pt)}]prod-2-2.south east);
 
\draw[mygreen,line width=2pt]
  ([shift={(1.2pt,-1.2pt)}]A-3-1.north west) 
  rectangle 
  ([shift={(-1.2pt,1.2pt)}]A-4-2.south east);
\draw[myblue,line width=2pt]
  ([shift={(1.2pt,-1.2pt)}]prod-5-1.north west) 
  rectangle 
  ([shift={(-1.2pt,1.2pt)}]prod-5-4.south east);

   \draw[mygreen,line width=2pt]
   ([shift={(1.2pt,-1.2pt)}]prod-3-3.north west) 
   rectangle 
   ([shift={(-1.2pt,1.2pt)}]prod-4-4.south east);

\end{tikzpicture}

\hspace{1cm}

\begin{tikzpicture}[
mymatrix/.style={
  matrix of math nodes,
  outer sep=0pt,
  nodes={
    draw,
    anchor=base,
    text depth=.5ex,
    text height=2ex,
    text width=2em,
    align=center,
    text=gray
  },
   left delimiter=[,
  right delimiter=],
  }
]
\matrix[mymatrix] (A)
{
|[text=black]|1 & |[text=black]|0 \\
|[text=black]|0 & |[text=black]|1 \\
|[text=black]|2 & |[text=black]|0 \\
|[text=black]|0 & |[text=black]|2 \\
|[text=black]|1 & |[text=black]|1 \\
};
\matrix[mymatrix,right=of A.north east,anchor=north west] (prod)
{
 |[text=black]|2& |[text=black]|1 & \xx &\xx\\
 |[text=black]|1& |[text=black]|0 & \xx & \xx\\
 \xx& \xx& |[text=black]|0 & |[text=black]|2\\
 \xx& \xx& |[text=black]|2 & |[text=black]|0\\
 |[text=black]|3& |[text=black]|1& |[text=black]|1&|[text=black]|1\\
};
\matrix[mymatrix,above=of prod.north west,anchor=south west] (B)
{
 |[text=black]|2 & |[text=black]|1 &  |[text=black]|0 & |[text=black]|1\\
 |[text=black]|1 & |[text=black]|0 &  |[text=black]|1 & |[text=black]|0\\
};

\node[font=\large,left=10pt of A] {$U$};
\node[font=\large,above=2pt of B] {$W^T$};

\draw[myyellow,line width=2pt]
  ([shift={(1.2pt,-1.2pt)}]A-1-1.north west) 
  rectangle 
  ([shift={(-1.2pt,1.2pt)}]A-2-2.south east);
  \draw[myblue,line width=2pt]
  ([shift={(1.2pt,-1.2pt)}]A-5-1.north west) 
  rectangle 
  ([shift={(-1.2pt,1.2pt)}]A-5-2.south east);
\draw[myyellow,line width=2pt]
  ([shift={(1.2pt,-1.2pt)}]B-1-1.north west) 
  rectangle 
  ([shift={(-1.2pt,1.2pt)}]B-2-2.south east);
  \draw[mygreen,line width=2pt]
  ([shift={(1.2pt,-1.2pt)}]B-1-3.north west) 
  rectangle 
  ([shift={(-1.2pt,1.2pt)}]B-2-4.south east);
\draw[myyellow,line width=2pt]
  ([shift={(1.2pt,-1.2pt)}]prod-1-1.north west) 
  rectangle 
  ([shift={(-1.2pt,1.2pt)}]prod-2-2.south east);
 
\draw[mygreen,line width=2pt]
  ([shift={(1.2pt,-1.2pt)}]A-3-1.north west) 
  rectangle 
  ([shift={(-1.2pt,1.2pt)}]A-4-2.south east);
\draw[myblue,line width=2pt]
  ([shift={(1.2pt,-1.2pt)}]prod-5-1.north west) 
  rectangle 
  ([shift={(-1.2pt,1.2pt)}]prod-5-4.south east);

   \draw[mygreen,line width=2pt]
   ([shift={(1.2pt,-1.2pt)}]prod-3-3.north west) 
   rectangle 
   ([shift={(-1.2pt,1.2pt)}]prod-4-4.south east);

\end{tikzpicture}
}
  \caption{Adding rows to the core in order to control its inner dependencies. The x values are not in the graph $G$ of the optimal solution. Left: Choosing arbitrary solutions for the yellow and the green $K_{2,2}$'s might lead to an unsolvable system of equations for the fifth row (blue). Right: Adding the fifth row to the core synchronizes the yellow and green $K_{2,2}$'s. }
  \label{F:sync}
\end{figure}

We solve this issue by adding additional data in a non-random way to the core $G'$. We explain the main idea on a simple instance, which is illustrated in Figure~\ref{F:sync} with $k=2$. Suppose that $G'$ consists of two vertex-disjoint $K_{k,k}$ subgraphs of $G$. Solving the corresponding system of equations and taking an arbitrary solution would lead to vectors for rows and columns 
which are completely uncorrelated between the two subgraphs. If some row $i$ not in $G'$ is adjacent in $G$ to all the columns of $G'$, the values of the row $i$ suggested by the two subgraphs will therefore never match. In such a case, we add the row $i$ to the core, yielding a \emph{supercore}. When we solve the system of polynomial equations on this supercore, the added row will have the effect of correlating the solutions on the two subgraphs of $G'$. Of course, we should not add all such rows to the core, since we want the supercore to also be constant-size, but our framework shows that it suffices to add a constant number of such rows and columns in order to account for all the required correlations between the various parts of the core in the optimal solution. We emphasize that the rows and columns added in the supercore cannot in general be chosen randomly. Thus our algorithm does not actually proceed by sampling and requires enumerating all the subsets up to some constant size: this is the reason for the $n^{f(k,\varepsilon)}$ complexity of our algorithm in Theorem~\ref{thm:additive}, as opposed to the FPT complexity of most PTASes in the literature for dense Max-CSPs.

\paragraph*{Extending to the general case.} In the general case, we cannot assume that there is an optimal solution that is rigid. 
In that setting, our algorithm still starts by guessing a supercore $G'$ and solving the corresponding system of polynomial equations. However, such a supercore will not in general induce a unique solution for a row or a column not contained in it, even if it is fully adjacent to a $K_{k,k}$ in $G'$, since the corresponding linear system of equations may not be full rank: this poses an issue when defining the alphabets. If one takes an arbitrary solution, then for a row $i$ and a column $j$ that are not in the supercore but form an edge $(i, j)$ in $G$, even if one guesses correctly to which $K_{k,k}$'s $i$ and $j$ are attached in $G$, the vectors $u_i$ and $w_j$ will in general not come from an optimal solution. Thus there is no guarantee that the constraint $(i, j)$ will be satisfied.

We solve this issue by adding even more data to the supercore. Since it has constant-size, we can afford to guess the entire system of linear dependencies between its elements, as this is encoded in a combinatorial object called a matroid~\cite{oxley}. Then, for vertices not in the supercore, we include in the alphabet not only to which parts of the core they are attached, but also with which of the independent sets of this matroid they are dependent in the optimal solution. This information is encompassed in submatrices that we call \emph{pieces}, which are also constantly many. Then we prove that guessing correctly the information of which edge belongs to which pieces will suffice to ensure correct edges, even outside of the supercore; this follows from an easy linear algebraic Lemma~\ref{lem:linear}.

We emphasize that due to the infinite size of the alphabet, our additive approximation scheme in Theorem~\ref{thm:additive} does not readily provide a PTAS for the maximization version of $\ell_0$-\textsc{Low Rank Approximation}, since there is in general no $\Omega(n^2)$ lower bound for the value of this CSP despite the density of the primal graph. We leave the existence of such a PTAS as an open question.

\subsubsection{Multiplicative approximation scheme: Theorem~\ref{thm:main}}
\label{sec:techniques-min}
Based on the additive approximation scheme, we introduce high-level ideas between our multiplicative approximation scheme for Theorem~\ref{thm:main}.
Like the additive approximation scheme, our algorithm also works by reducing the alphabet size to a constant. In order to do so, our new framework here partitions the set of entries $[n_R] \times [n_C]$ into rectangular {\em blocks} (there are at most $k \times k$ of them) and handle them separately in the following natural ways: if a block $B$ has 
\begin{itemize}
\item $|B| \gg OPT$ (called {\em clean}): Techniques for constant-size alphabets almost suffice, as random entries from $B$ are correct in the optimal solution and reveal useful information about it. 
\item $|B| \ll OPT$  (called {\em dirty}): We can ignore $B$ as it will not contribute much to the objective. 
\item $|B| \approx OPT$  (called {\em half-clean}): Use the additive PTAS, because an additive approximation is also a multiplicative approximation in this case. 
\end{itemize}

One technical and conceptual challenge is that the algorithm will never be able to learn where the blocks are, but our algorithm still manages to handle them using careful definitions of the blocks (only in the analysis) and the additional features of the additive PTAS. We shall explain the ideas in more detail below. 

\paragraph{Basic strategy.} 
Recall that given an instance of \lra $A \in \R^{n_R \times n_C}$ with an optimal solution $UW^T$, 
we view this as a CSP where there is a variable for each row and column, 
and the goal is to choose a value $u_i$ from the alphabet $\Sigma = \R^k$ for each row $i$ and $w_j \in \Sigma$ for each column $j$ to satisfy the constraints given by $A$. 
It is natural to review previous approaches for dense Min-CSPs~\cite{karpinski2009linear, voting} and Low Rank Approximation on Finite Domains~\cite{ban2019ptas}.
With an oversimplification that ignores important technical details, their main ideas, when the alphabet set is general $\Sigma$, can be roughly summarized as: 
\begin{enumerate}
\item Sample a constant number of column indices $s_1, \dots, s_t \in [n_C]$. 
\item Guess the {\em value} of each $s_i$ in the optimal solution; i.e., guess $w_{s_i} \in \Sigma$.
\item Based on $w_{s_1}, \dots, w_{s_t}$, {\em greedily choose} $u_p$ for each $p \in [n_R]$; i.e., choose $u_p$ that makes the least number of errors with $w_{s_1}, \dots, w_{s_t}$.
\item Given the value of every $u_p$, greedily choose $w_q$ for every $q \in [n_C]$. 
\end{enumerate}

Having $\Sigma = \R^k$ presents a challenge in almost every step. 
For us, the biggest challenge is Step 2, where we cannot guess the values of the sampled columns in the optimal solution via exhaustive enumeration. 
Therefore, our overall goal is to {\em reduce the alphabet set from $\R^k$ to constant-size sets};
formally, our algorithm will construct the alphabet set $\Sigma_p$ for each row and column $p$ with $|\Sigma_p| \leq O_{k, \eps}(1)$
so that there exists a near-optimal solution where each row and column draws a value from their given alphabet sets.
(Actually, the algorithm creates polynomially many such instances with the guarantee that one of them contains a near-optimal solution using their alphabets.)

Another smaller challenge related to Step 3 is that even given the correct values for $w_{s_1}, \dots, w_{s_t}$, possibly none of $u_i$'s can be determined.
For example, if most of the rows and columns belong to a proper subspace $T \subsetneq \R^k$ and all $w_{s_1}, \dots, w_{s_t}$ are in $T$, then at best the algorithm can determine $u_p$'s projection to $T$, but not its exact position in $\R^k$ (while most of the errors made by the optimal solution might come from the few $u_p$'s and $w_q$'s outside $T$).
Inspired by the previous approaches, we handle this issue by dividing $\R^k$ (and the set of rows and columns) into {\em layers} and obtain uniform samples from each layer. 
The algorithm will not know the layers, so {\em sampling} for the rest of the subsection is just needed to show the existence of a good seed set. The algorithm will enumerate all possible seed sets of certain size.

Our {\em column layers} are sets $J_1, \dots, J_{\ell_C}$ with $\ell_C \leq k$ that partition $[n_C]$ with associated subspaces 
$\emptyset = T_{J, 0} \subseteq T_{J, 1} \subseteq \dots \subseteq T_{J, {\ell_C}} = \R^k$ such that for any $j \in [\ell_C]$, $\{ w_q : q \in J_j \} \subseteq T_{J, j}$.
We require that the layer sizes are decreasing quickly (e.g., $|J_j| \leq \alpha |J_{j-1}|$) and crucially, each layer is {\em full}; 
for any $j \in [\ell_C]$, no subspace $T'$ with $T_{J, {j-1}} \subseteq T' \subsetneq T_{J, j}$ contains more than a $(1-\beta)$ fraction of $J_j$ with some constants $\alpha, \beta \in (0, 1)$. Even though in the actual algorithm they are both set to be constants depending only on $k$, for simplicity of this overview, let us make the key simplifying assumption that $\alpha = o_n(1)$ while $\beta$ is still a constant. (This will avoid the notion of superlayers and hyperlayers in Section~\ref{sec:min}.)

Once we obtain samples $\{ s_{j, 1}, \dots, s_{j, t} \}$ from each layer $J_j$ and guess their values $\{ w_{s_{j, q}} \}_{q \in [t]}$
in a near-optimal solution, one can show that the standard algorithm, choosing greedily $u_p$ for every $p \in [n_R]$ and choosing greedily $w_q$ for every $q \in [n_C]$ guarantees a good solution. (See Phase 4 of Section~\ref{sec:min} for details.) 
Therefore, once the the alphabet size for columns becomes a constant, the algorithm can obtain samples, guess the values of the samples, and perform the greedy decisions to obtain a PTAS.

Now we describe our main alphabet-reduction algorithm to construct a constant-size alphabet set for each row and column. Note that in the beginning, the algorithm has no information about the initial optimal solution $U$ and $W$. 
While describing the algorithm, we will also transform $U$ and $W$ such that 
(1) the transformed solution is still near-optimal, and (2) the algorithm acquires more information about them as it proceeds. 
Just like Section~\ref{sec:min}, we present this algorithm in three phases.

\paragraph{Phase 1: Obtaining initial samples.} 
Our alphabet-reduction algorithm also begins with sampling. 
As well as the column layers, construct the {\em row layers} $I_1, \dots, I_{\ell_R}$ with the subspaces $T_{I, 1}, \dots, T_{I, \ell_R}$ for some $\ell_R \leq k$. 
Call $B_{i,j} := I_i \times J_j$ the $(i, j)$th block. Then we have $\ell_R \times \ell_C$ {\em blocks}.
By guessing, we can assume that the algorithm knows all block sizes and $OPT$, the number of errors that the optimal solution makes. (There are $n^{O(k)}$ possibilities). 

Then we compare the size of each block $|B_{i,j}|$ to $OPT$. 
If $|B_{i,j}| \ll OPT$, we call it {\em dirty}; we can afford to make errors in the entire block, so we can safely ignore it. 
Otherwise, if $|B_{i,j}| \gg OPT$, we call it {\em clean}; we can get a lot of information of this block by samples, because when we uniformly sample rows from $I_i$ and columns from $J_j$, most of the entries between them are correct; the entries of the input matrix $A$ are the correct inner product values between optimal vectors. We call all other blocks {\em half-clean}. 
As these definitions only depend on the sizes of the blocks, we have a natural monotonicity property: for instance, if $B_{i, j}$ is clean then $B_{i-1, j}$ is as well, and if $B_{i, j}$ is dirty, $B_{i+1, j}$ is dirty too. Another crucial consequence of this definition (and our key simplifying assumption) is that each row and column belongs to at most one half-clean block. This will be important when the algorithm applies the additive PTAS in Phase 3. See Figure~\ref{F:blocks_tec} for an example. 

\begin{figure}
    \centering
	\includegraphics[width=12cm]{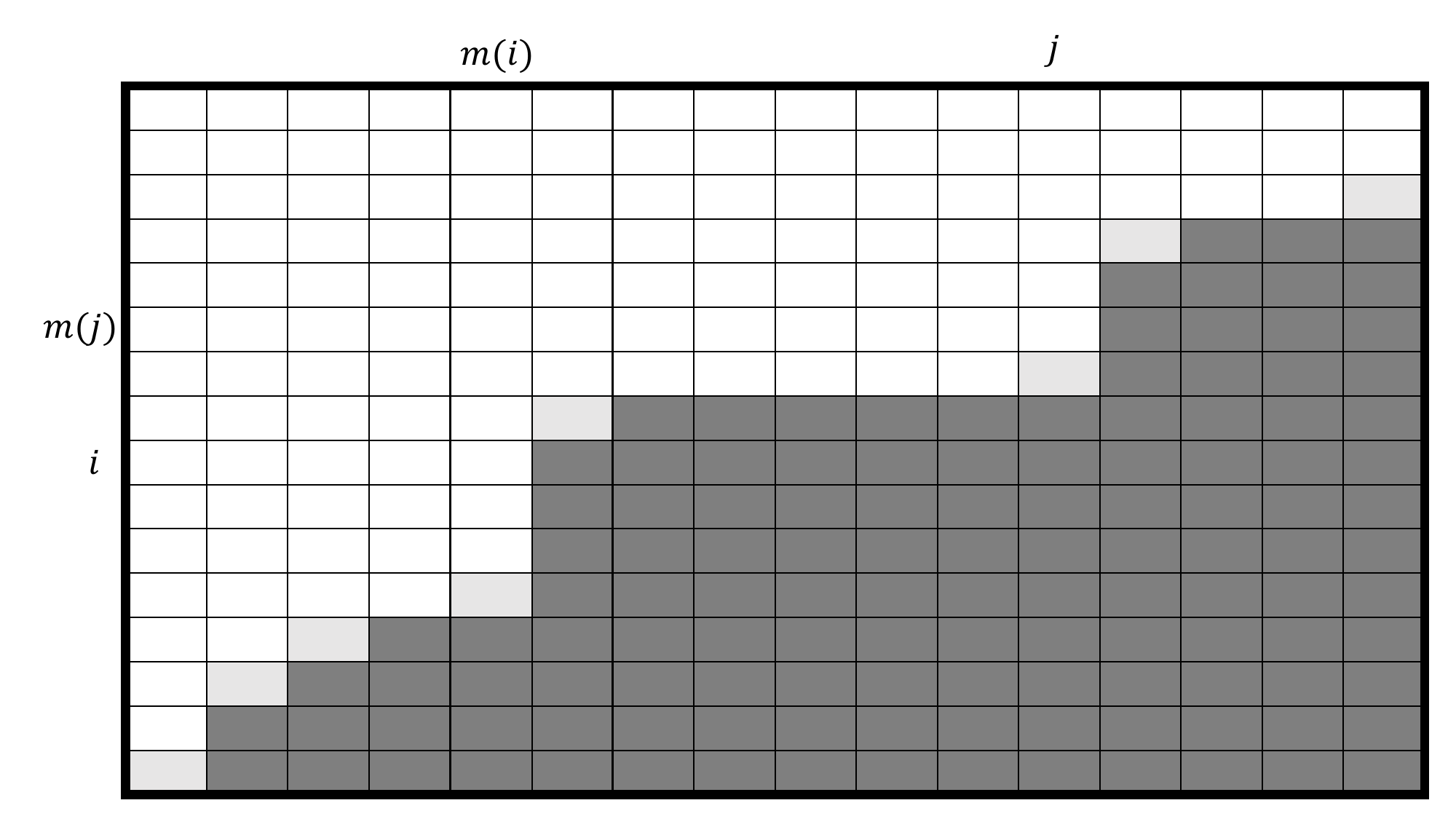}
    \caption{There are $16 \times 16$ blocks. Empty cells are clean, half-shaded cells are half-clean, and full-shaded cells are dirty.}
    \label{F:blocks_tec}
\end{figure}

The algorithm obtains samples from each row and column layer; call them $\{ r_{i, p} \}_{i \in [\ell_R], p \in [t]}$ and $\{ s_{j, q} \}_{j \in [\ell_C], q \in [t]}$ for some constant $t$. Since each $T_{I, i}$ is full, the row samples will contain a basis of $\R^k$, 
so by applying an appropriate transformation $U \leftarrow U C, W \leftarrow W (C^{-1})^T$ for some invertible $C \in \R^{k \times k}$ (only in the analysis), 
the algorithm knows $T_{I, i}$ for every $i \in [\ell_R]$. 
For a clean block $B_{i, j}$, using the correct entries between the sampled rows and columns, the algorithm can even recover $T_{i, j}$, which is the projection of $T_{C, j}$ to $T_{I, i}$ and losslessly captures the interaction between the vectors in $B_{i, j}$.

\paragraph{Phase 2: Handling clean blocks.}
One (non-)feature of our alphabet-reduction algorithm is that it will never determine whether a particular row or column belongs to a certain layer. 
Instead, for each row-layer pair $(p, i) \in [n_R] \times [\ell_R]$, 
the algorithm will construct a set of vectors $\Sigma_{p, i}$ that contains the correct vector $u_p$ if $p$ indeed belongs to $I_i$ in the current near-optimal solution $(U, W)$. 
In this phase, we begin this process from clean blocks, where each $(p, i)$ pair chooses only one vector $u_{p, i}$ {\em inside $T_{i, m(i)}$ instead of $\R^k$}, where $m(i) \in [\ell_C]$ is the largest index $j$ where $B_{i, j}$ is clean. In particular, when $p \in I_i$ in the current near-optimal solution, $u_{p, i}$ is exactly the projection of $u_p$ to $T_{i, m(i)}$, denoted by $u_p|_{T_{i,m(i)}}$.

More concretely, for each row $p \in [n_R]$ and $i \in [\ell_R]$, we let the column samples $\{ s_{j, q} \}_{j \in [m(i)], q \in [t]}$ vote for the projection of $u_p$ to $T_{i, m(i)}$ and call the winner $u_{p, i}$. Formally,
\[
u_{p, i} = \argmin_{u \in T_{i, m(i)}} \sum_{j = 1}^{m(i)} | \{ q \in [t] : A_{p, s_{j, q}} \neq \langle u, w_{s_{j, q}}|_{T_{i, j}} \rangle \} | \cdot (|J_{j}| / t)
\]
where ties are broken arbitrarily. 
The goal is to ensure that $u_{p, i} = u_p|_{T_{i, m(i)}}$ if $p \in I_i$. 
Of course, this cannot happen always, but we will conduct the following transformation that forces it. 

\begin{itemize}
\item For every $p \in [n_R]$, let $i \in [\ell_R]$ such that $p \in I_i$. 
\begin{itemize}
\item If $u_{p,i}$ is indeed the projection of $u_p$ to $T_{i, m(i)}$, then do not change anything. 
\item Otherwise, say $p$ is {\em mistaken} and let $u_p \leftarrow u_{p, i}$. By doing this, we (conservatively) make every entry $A_{p, q}$ with $q \in J_j$ and $j > m(i)$ incorrect.
\end{itemize}
\end{itemize}

Our main technical lemma (Lemma~\ref{lem:min}) shows that this transformation of $U$ ensures that the solution pair $(U, W)$ is still near-optimal. 
In particular, it shows that (1) the chosen $u_{p, i}$ will be again nearly optimal in the clean blocks $B_{i, 1}, \dots, B_{i, m(i)}$, and 
(2) the probability that $p$ is mistaken is small so that the additional error in half-clean or dirty blocks due to a mistake will be small in expectation. 
We do the almost same for columns to compute column vectors $w_{q, j} \in T_{m(j), j}$ for each $q \in [n_C]$ and $j \in [\ell_C]$.

\paragraph{Phase 3:  Handling half-clean blocks.} 

Finally, the algorithm constructs the alphabet set that will contain a good solution for half-clean blocks as well. For one row-layer pair $(p, i)$, we construct the set of vectors $\Sigma_{p, i}$ such that (1) for any $u \in \Sigma_{p, i}$, the projection of $u$ to $T_{i, m(i)}$ is equal to $u_{p, i}$ constructed in the previous phase, and (2) if $p$ indeed belongs to $I_i$ in the current near-optimal solution,
then $\Sigma_{p, i}$ contains $u_p$, the correct vector of $p$ in the current near-optimal solution. 

The main idea here is to apply the additive PTAS to every half-clean block $B_{i, j}$. By definition, $|B_{i,j}| = \Theta(OPT)$, so an additive $\eps$-approximation in $B_{i, j}$ will lead to an overall multiplicative approximation. 
But the crucial bottleneck is that we will never know where $B_{i, j}$ is! As previously mentioned, our alphabet-reduction algorithm will never determine $p \in I_i$ for any $p \in [n_R]$ and $i \in [\ell_R]$. 
What we do know is $u_{p, i}$, which is the correct projection of $p$'s near-optimal vector $u_p$ to $T_{i, j}$, if $p$ indeed belongs to $I_i$. 

We resolve this issue by, for every half-clean block $B_{i, j}$, running the additive PTAS algorithm for the entire matrix $A$ {\em pretending} that every row belongs to $I_i$ and every column belongs to $J_j$. Though there are exponentially many candidates for $B_{i,j}$ inside $A$, the structure of our additive PTAS guarantees that any submatrix of $A$ corresponding to a block $I' \times J'$ with $I' \subseteq [n_R], J' \subseteq [n_C]$ will admit a constant-size subset of $I^+ \subseteq I'$ and $J^+ \subseteq J'$ that suggest a set of vectors for everyone in $I \cup J$ containing their correct vectors in the near-optimal solution. Then, even without knowing actual $I_i$ and $J_j$, the algorithm can try all constant-size subsets $I^+$ and $J^+$ that suggest a constant-size alphabet for every $p \in [n_R]$ and $q \in [n_C]$! Of course, if $i \notin I_i$, then this suggestion does not have any guarantee, but we do know that if $i \in I_i$, this suggestion will contain the correct vector. 
Also note that this strategy depends on the fact that each row or column belongs to at most one half-clean block; otherwise, one row would have received more than one ``correct suggestions'' where each correct suggestion yields a good solution for only one half-clean block.

Therefore, we run the additive PTAS for each half-clean hyperblock, 
and for each choice of $(I^+, J^+)$'s, 
we have an instance of \lraa where each row or column $p$ gets a constant-size alphabet set $\Sigma_{p} = \cup_{i} \Sigma_{p, i}$, with the guarantee that, for at least one choice of $(I^+, J^+)$'s, there exists a near-optimal solution where every row and column chooses a vector from the given alphabet set. 
The alphabet-reduction algorithm is completed, so the standard finite-domain-CSP algorithm (sample columns, exhaustively guess the values of the sampled columns, greedily decide the value of each row based on the sampled columns, and greedily decide the value of each column based on all the rows) will result in a PTAS.

\section{Preliminaries}\label{sec:algeq}

\paragraph{Algorithms for real semialgebraic sets.} 
Our algorithm for Theorem~\ref{thm:additive} makes heavy use of algorithms of real algebraic geometry to solve systems of polynomial equations over the reals, which we use as a black-box. We refer to the book of Basu, Pollack and Roy~\cite{roy} for all the necessary background on this topic and highlight here the precise results that we rely on. A \emph{semialgebraic set} is the set of solutions to a system of polynomial equations or inequations over the reals, or any finite union of such sets. The \emph{projection} of a semialgebraic set $X \subseteq \mathbb{R}^{n_1+n_2}$ on a linear subspace $\mathbb{R}^{n_1}$ is the set $\{x_1, \ldots , x_{n_1} \mid \exists x_{n_1+1}, \ldots, x_{n_2} \textrm{ such that } x_1, \ldots , x_{n_2} \in X\}$. A consequence of the Tarski-Seidenberg theorem (see, e.g.,~\cite[Theorem~2.77]{roy}) is that a projection of a semi-algebraic set is another semialgebraic set, for which the equations can be computed. An algorithmic reformulation that we will extensively rely on is as follows. The \emph{Existential Theory of the Reals} is the following decision problem: we are given a formula of the form $\exists x_1, \ldots , x_n \in \mathbb{R}, \varphi(x_1, \ldots x_n)$ where $\varphi$ is a quantifier-free formula consisting of polynomial equations, polynomial inequalities and Boolean disjunctions and conjunctions. The goal is to decide whether the formula is true. Then the Tarski-Seidenberg theorem shows that the Existential Theory of the Reals is decidable, and the following theorem provides an algorithm to decide it.

\begin{theorem}[{\cite[Theorem~3.12]{roy}}]\label{thm:ETR}
Let $\Phi$ be the formula $\exists x_1, \ldots , x_n \in \mathbb{R}, \varphi(x_1, \ldots x_n)$ where $\varphi$ is a quantifier-free formula consisting of polynomial equations, polynomial inequalities and Boolean disjunctions and conjunctions. Let $s$ be the number of equations and inequalities appearing in $\Phi$, $d$ be an upper bound on their degrees, and $\tau$ be an upper bound on the bitsize of their coefficients. Then there exists an algorithm running in time $(sd)^{O(n)}\poly(\tau)$ that decides the truth of $\Phi$.
\end{theorem}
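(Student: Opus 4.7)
The plan is to follow the \emph{critical points method} of real algebraic geometry, which achieves the singly-exponential complexity $(sd)^{O(n)}\poly(\tau)$ (the more naive cylindrical algebraic decomposition only gives doubly-exponential). At a high level, if $\mathcal{P}=(P_1,\ldots,P_s)$ is the family of polynomials occurring in $\varphi$, then the satisfying set of $\varphi$ is a union of sign-invariant strata of $\mathcal{P}$, on each of which the Boolean truth value of $\varphi$ is constant. Hence it suffices to compute at least one sample point in every connected component of every realized sign-invariant stratum, evaluate the sign pattern of $\mathcal{P}$ at those points, and check whether the resulting Boolean assignment satisfies $\varphi$.

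First, I would reduce the semialgebraic problem to a purely algebraic one. Strict inequalities and unboundedness are handled by a hierarchy of infinitesimal perturbations over the real-closed extension $\mathbb{R}\langle\varepsilon\rangle$: each strict condition $P_i > 0$ is weakened to $P_i \ge \varepsilon_i$, and sample points are sought inside a ball of infinitesimal inverse radius. It then suffices to produce sample points in every connected component of a bounded variety $V(Q)$ for a single polynomial $Q$ (a sum of squares of the perturbed equations). A classical Morse-theoretic argument shows that the critical points of the squared distance function $\sum_i x_i^2$ restricted to $V(Q)$ meet every connected component. These critical points are exactly the common zeros of $Q$ together with the $2\times 2$ minors of the Jacobian $[\nabla Q \mid x]$, and by a B\'ezout-type bound this system is zero-dimensional with at most $d^{O(n)}$ complex solutions.

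Second, I would solve the resulting zero-dimensional systems effectively. The tool is a \emph{univariate rational representation}: a generic linear form $u=\sum c_i x_i$ produces, after elimination, a single univariate polynomial $f(u)$ of degree $d^{O(n)}$ whose roots are in bijection with the solutions of the system, with each $x_i$ expressed as a rational function of $u$ modulo $f$. The real roots of $f$ are then isolated and stored via their \emph{Thom encodings} (the sign vector of the successive derivatives of $f$ at the root), a representation that supports exact sign evaluation of any polynomial at the root through subresultant (Sturm-Habicht) sequences. Feeding each Thom-encoded root into the rational representation and evaluating $\mathrm{sign}(P_i)$ for all $i$ yields the realized sign patterns, and we output true iff some pattern satisfies $\varphi$.

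The step I expect to be the main obstacle is controlling the bit complexity throughout the pipeline so as to keep the dependence on $\tau$ polynomial. Building the Lagrange-style Jacobian system, performing the elimination that produces the univariate representation, and running the Sturm-Habicht sign evaluations all risk multiplicative coefficient blow-ups. Keeping them inside $\poly(\tau)$ relies on Hadamard-type bounds on subresultant coefficients and a careful management of the infinitesimals (via truncated Puiseux series), ensuring that neither the algebraic degree nor the bit complexity of the intermediate polynomials ever exceeds $d^{O(n)}$ and $\poly(\tau)$ respectively; this is precisely the technical core of~\cite{roy}.
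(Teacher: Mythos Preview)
The paper does not prove this theorem at all: it is quoted verbatim as a black-box result from Basu, Pollack and Roy~\cite[Theorem~3.12]{roy} and is used throughout only as an off-the-shelf tool (see the discussion in Section~\ref{sec:algeq}). So there is nothing to compare your proposal against in the paper.

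That said, your sketch is a faithful high-level outline of the critical points method that underlies the cited result in~\cite{roy}: reducing to sampling every connected component of sign-invariant strata, deforming with infinitesimals to get bounded smooth varieties, taking critical points of a generic projection (or distance function), extracting a rational univariate representation of the resulting zero-dimensional system, and isolating real roots via Thom encodings and Sturm--Habicht sequences. Your identification of the bit-complexity bookkeeping as the technical crux is also accurate. If you were actually asked to prove this theorem from scratch, be aware that each of these steps (especially the infinitesimal management and the degree/bitsize bounds for the univariate representation) is itself a chapter-length development in~\cite{roy}; your paragraph is a correct roadmap but not a proof. For the purposes of this paper, however, citing the result is the intended and entirely sufficient approach.
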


In particular; one can solve in polynomial time instances of the Existential Theory of the Reals when the number of (in)equations, their degree and the number of variables is constant. This is the case for all the instances of the Existential Theory of the Reals in this paper. In order to ease reading, we will often abuse language in this paper and call an instance of the Existential Theory of the Reals simply a system of polynomial equations.

In our algorithms, we will sometimes want to extract a specific solution to a system of polynomial equations. Such a solution is not provided by Theorem~\ref{thm:ETR}, and this runs into algebraic issues. Indeed, even for a single real polynomial equation, there might be no rational solutions (e.g., for $x^2=2$), and more generally by the Abel-Ruffini theorem shows there is in general no solution in radicals if the degree of the equation is at least five (this is for example the case for $x^5-6x-3=0$~\cite[Example~8.5.5]{cox}).  Nevertheless, we can encode such a solution using \emph{real univariate representations}. This consists of a real single-variable polynomial $f$, an information encoding a single root $t$ of $f$ (its \emph{Thom encoding}~\cite[Definition~2.29]{roy}) and a set of real single-variable polynomials $g_0, \ldots, g_k$. Together, this data represents the point $(\frac{g_1(t)}{g_0(t)}, \ldots ,\frac{g_k(t)}{g_0(t)})$ in $\mathbb{R}^k$. We refer to Basu, Pollack and Roy~\cite[Section~12.4]{roy} for the precise definition and more background. Then Theorem~\ref{thm:ETR} can be strengthened~\cite[Theorem~3.10]{roy} to not only decide if there is a solution, but also compute a real univariate representation of a\footnote{Actually, one can compute a point in each semi-algebraically connected component, but we will not need this stronger fact.} point in the solution set. The corresponding algorithm also has complexity $(sd)^{O(n)}\poly(\tau)$ and the bitsize of the real univariate representation is bounded by $\tau d^{O(n)}$. Throughout this paper, we rely on this algorithm implicitly whenever we say that we \emph{solve} a system of polynomial equations of constant size, and the output is encoded by this linear-size (in $\tau$) real univariate representation. Since this representation amounts essentially to a polynomial equation, it can seamlessly be manipulated, and in particular we can plug it into another system of polynomial equations of constant size, which can then be solved again using the same algorithm.

\paragraph{Matroids.} Our algorithm for Theorem~\ref{thm:additive} relies on enumerating all the possible dependencies within a constant-size subset of vectors of a near-optimal solution. In order to do so, we rely on matroids, which are a combinatorial structure encoding, and generalizing, the dependencies within a family of vectors. We refer to Oxley~\cite{oxley} for an introduction. A \emph{matroid} $M$ is a pair $(E,\mathcal{I})$ where $E$ is a finite set called the \emph{ground set} and $\mathcal{I}$ is a collection of subsets of $E$ called \emph{independent sets} satisfying the following axioms: (i) the empty set is an independent set, (ii) if $I' \subseteq I$ and $I$ is an independent set, $I'$ is an independent set, and (iii) if $I_1$ and $I_2$ are independent sets and $|I_1| < |I_2|$ there exists an element $e$ in $I_2$ such that $I_1 \cup \{e\}$ is an independent set. The \emph{rank} of a matroid is the maximum size of an independent set. It is immediate from the definitions that given a finite set of vectors in $\mathbb{R}^k$, the subsets of vectors which are independent form a matroid. The converse is not true: the matroids that correspond to vectors in $\mathbb{R}^k$ are called \emph{representable over the reals}. We can detect those by encoding dependencies as determinants and using real-algebraic algorithms to solve the corresponding equations as described in the previous paragraph, and in a certain technical sense this is the best algorithm to do so~\cite{kim}. The number of matroids of rank $k$ on a ground set of size $n$ is naturally upper bounded by $2^{n^{O(k)}}$.

The rigid case that we started with in Section~\ref{SS:additive} corresponds to constant-sized sets of vectors $U$ and $W$ in $\mathbb{R}^k$ in which all the subsets of size $k$ form a basis. The matroid obtained from that set is one in which every set of $k$ elements forms an independent set, and in that setting our algorithm is simpler. In order to handle the general case, where sets of $k$ vectors or less might not be independent, our approach requires guessing the entire structure of linear dependencies between elements of $U$ (and $W$). Since the vectors are over the reals, in order to do this guessing, one needs a finite combinatorial abstraction for these dependencies: this is exactly the information that is encoded in the matroids formed by $U$ and $W$.

\paragraph{Extremal graph theory.} The Kovari-S\`os-Turan theorem~\cite{kst} says for any integer $k$, there exists a constant $c_{KST}$ such that any bipartite graph with bipartitions of $n_1$ and $n_2$ vertices with at least $c_{KST} (n_1n_2^{1-1/(k+1)}+n_2)$ edges contains a $K_{k+1,k+1}$ as a subgraph, where the constant $c_{KST}$ is $\Theta(k)$. We will need the following supersaturated version.

\begin{lemma}\label{lem:kst1}
For any $\varepsilon>0$ and $k \in \mathbb{N}$, there exist constants $c'_{KST}=(\varepsilon/k)^{\Theta(k^2)}$ and $n_{KST}=(k/\varepsilon)^{\Theta(k)}$ such that any bipartite graph with bipartitions of $n_1$ and $n_2$ vertices, where $n_1\geq n_2 \geq n_{KST}$ and at least $\varepsilon n_1n_2$ edges contains at least $c'_{KST} n_1^{k+1}n_2^{k+1}$ copies of $K_{k+1,k+1}$ as a subgraph.

\end{lemma}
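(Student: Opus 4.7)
\medskip

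\noindent\textbf{Proof proposal.} The plan is a standard double-counting argument using Jensen's inequality applied twice, in the style of the usual supersaturation proofs of Kővári--Sós--Turán. Let $A$ and $B$ denote the two sides of the bipartition, with $|A|=n_1$ and $|B|=n_2$, and let $d_v$ be the degree in $B$ of a vertex $v\in A$. The assumption that the number of edges is at least $\varepsilon n_1 n_2$ yields an average degree $\bar d := \frac{1}{n_1}\sum_{v\in A} d_v \geq \varepsilon n_2$.

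First, I count stars from $B$ to $A$: for each $(k+1)$-subset $S\subseteq B$, let $f(S)$ be the number of common neighbors of $S$ in $A$. Then $\sum_S f(S) = \sum_{v\in A} \binom{d_v}{k+1}$, and convexity of $x\mapsto \binom{x}{k+1}$ (Jensen) gives
\[
\sum_S f(S) \;\geq\; n_1 \binom{\bar d}{k+1} \;\geq\; n_1 \binom{\varepsilon n_2}{k+1}.
\]
Choosing $n_{KST}$ large enough so that $\varepsilon n_2 \geq 2(k+1)$ (i.e.\ $n_2 \geq 2(k+1)/\varepsilon$) lets me lower bound this binomial by $(\varepsilon n_2)^{k+1}/(2^{k+1}(k+1)!)$, giving an average $\bar f := \sum_S f(S)/\binom{n_2}{k+1} \geq n_1(\varepsilon/2)^{k+1}$, where I also use $\binom{n_2}{k+1}\leq n_2^{k+1}/(k+1)!$.

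Second, I count copies of $K_{k+1,k+1}$: each copy is obtained by choosing a $(k+1)$-subset $S\subseteq B$ and then a $(k+1)$-subset of its common neighbors in $A$, so the number of copies is exactly $\sum_S \binom{f(S)}{k+1}$. Applying Jensen a second time gives $\sum_S \binom{f(S)}{k+1}\geq \binom{n_2}{k+1}\binom{\bar f}{k+1}$. Taking $n_{KST}=(k/\varepsilon)^{\Theta(k)}$ sufficiently large ensures both $\bar f\geq 2(k+1)$ (using $n_1\geq n_2\geq n_{KST}$ and the previous bound on $\bar f$) and $n_2\geq 2(k+1)$, so that $\binom{\bar f}{k+1}\geq \bar f^{k+1}/(2^{k+1}(k+1)!)$ and $\binom{n_2}{k+1}\geq (n_2/2)^{k+1}/(k+1)!$. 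Multiplying the two lower bounds and substituting $\bar f\geq n_1(\varepsilon/2)^{k+1}$ yields
\[
\#\{K_{k+1,k+1}\} \;\geq\; \frac{(\varepsilon/2)^{(k+1)^2}}{2^{2(k+1)}\,((k+1)!)^2}\cdot n_1^{k+1} n_2^{k+1},
\]
from which $c'_{KST} = (\varepsilon/k)^{\Theta(k^2)}$ follows directly.

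I do not anticipate any genuine obstacle: both convexity steps are routine, and the main task is a careful choice of the threshold $n_{KST}$ so that the inequalities $\binom{x}{k+1}\geq (x/2)^{k+1}/(k+1)!$ are valid at each step (this is where the condition $n_2\geq (k/\varepsilon)^{\Theta(k)}$ is used, in particular to ensure $\bar f$ is large enough that the second Jensen application is not vacuous). The only mildly delicate bookkeeping is propagating the losses from the two applications of Jensen through the constants, which gives the $(k+1)^2$ exponent on $\varepsilon/2$ and hence the claimed $(\varepsilon/k)^{\Theta(k^2)}$ dependence.
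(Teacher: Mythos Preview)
Your proof is correct and follows the classical direct double-counting route (two applications of Jensen/convexity, first to count $(k+1)$-stars, then to count $K_{k+1,k+1}$'s), which is genuinely different from the paper's argument. The paper instead follows the Erd\H{o}s--Simonovits sampling template: it fixes a constant $n_{KST}$, averages over all $n_{KST}\times n_{KST}$ induced subgraphs, shows that a constant fraction of them are ``saturated'' (have edge density at least $\varepsilon/2$), and then applies the classical K\H{o}v\'ari--S\'os--Tur\'an theorem as a black box to each saturated subgraph to extract one $K_{k+1,k+1}$, finally dividing by the overcount.

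Your approach is more self-contained (it does not invoke KST as a black box) and in fact yields a tighter constant: your leading factor is essentially $\varepsilon^{\Theta(k^2)}/k^{\Theta(k)}$, whereas the paper's sampling argument loses an extra $\binom{n_{KST}}{k+1}^{-2}=(k/\varepsilon)^{-\Theta(k^2)}$, which is where their stated $(\varepsilon/k)^{\Theta(k^2)}$ comes from. The paper's method, on the other hand, is more modular---it transfers verbatim to any fixed bipartite $H$ in place of $K_{k+1,k+1}$ once one has the corresponding Tur\'an-type extremal result---while your two-step Jensen argument is specific to complete bipartite graphs. One small point worth tightening in a write-up: the polynomial $x\mapsto\binom{x}{k+1}$ is not convex on all of $[0,\infty)$, so the Jensen step should be justified either by the standard degree-shifting argument for integer sequences or by restricting to the convex region $x\geq k$, but this is routine and does not affect the bounds.
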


We include a proof for completeness, it is very similar to the proof of the classical Erd\H{o}s-Simonovits~\cite{erdos} theorem.

\begin{proof}

  Let $n_{KST}$ denote the smallest integer so that $\varepsilon n_{KST} ^{1/(k+1)}\geq 2c_{KST}$, thus $n_{KST}=(k/\varepsilon)^{\Theta(k)}$. For this choice of $n_{KST}$, any $n_1\geq n_2 \geq n_{KST}$ satisfy $(\varepsilon/2) n_1n_2 \geq c_{KST} (n_1n_2^{1-1/(k+1)}+n_2)$.  Let $G$ denote a bipartite graph with at least $\varepsilon n_1n_2$ edges where $n_1\geq n_2 \geq n_{KST}$ denote the sizes of the bipartition.

 We consider the subsets $M$ of $G$ consisting of $n_{KST}$ vertices on each side of the bipartition. Among them, those with at least $(\varepsilon/2) n_{KST}^2$ edges are called \emph{saturated}, and we denote their number by $\eta \binom{n_1}{n_{KST}}\binom{n_2}{n_{KST}}$.

  We double count the number of edges in $G$:

  \[e(G) = \frac{ \sum_{M} e(G[M])}{\binom{n_1-1}{n_{KST}-1}\binom{n_2-1}{n_{KST}-1}} \leq \frac{\eta \binom{n_1}{n_{KST}}\binom{n_2}{n_{KST}} n_{KST}^2+(1-\eta) \binom{n_1}{n_{KST}}\binom{n_2}{n_{KST}}\varepsilon n_{KST}^2}{\binom{n_1-1}{n_{KST}-1}\binom{n_2-1}{n_{KST}-1}}.\]

  By our assumption, $e(G) \geq \varepsilon n_1n_2$, and thus

  \[\varepsilon \leq \eta + (1-\eta) \varepsilon/2.\]

Therefore $\eta \geq \frac{\varepsilon}{2-\varepsilon}$, i.e., the proportion of saturated subgraphs stays bounded away from zero. Each saturated subgraph induced by $M$ contains a $K_{k+1,k+1}$ by the Kovari-S\`os-Turan theorem, and thus $G$ contains (accounting for the multiple counting) at least $\frac{\eta \binom{n_1}{n_{KST}}\binom{n_2}{n_{KST}}}{\binom{n_1-(k+1)}{n_{KST}-(k+1)}\binom{n_2-(k+1)}{n_{KST}-(k+1)}}=\frac{\eta \binom{n_1}{k+1}\binom{n_2}{k+1}}{\binom{n_{KST}}{k+1}\binom{n_{KST}}{k+1}}$ different copies of them. This concludes the proof with $c'_{KST}=(\varepsilon/k)^{\Theta(k^2)}$.
\end{proof}

\section{A polynomial-time additive approximation scheme}

 The main result of this section is the following.

\additive*

\paragraph*{Algorithm and analysis} The algorithm is described in Figure~\ref{A:mainalgo}, where we have used the following notations. We denote by $R$ the set of row indices of $A$ and by $C$ the set of column indices. We think of the matrices $U=(u_{i,j})$ and $W=(w_{i,j})$ as being unknowns, and thus each entry in the matrix $A$ induces an equation $\sum_{\ell=1}^k u_{i,\ell} w_{j,\ell}=A_{i,j}$. Of course, in general, in an optimal solution, not all of these equations will be satisfied. 
\noitemerroroff
\afterpage{%
\thispagestyle{empty}
\begin{figure}[H]
\begin{framed}
\begin{enumerate}[(1)]
\item \label{step:enumeration} For all possible subsets of rows $R_S$ and columns $C_S$, each of size at most $\kappa_1$, for all possible subsets $E \subseteq R_S \times R_C$ and for all rank-$k$ matroids $M_{R_S}$ and $M_{C_S}$ on ground sets $R_S$ and $C_S$,

\begin{enumerate}[(a)]
\item \label{step:solution0} Write a system of polynomial equations $\Xi_1$ where the unknowns are the entries of $U$ and $W$ belonging to $R_S$ and $C_S$ and the constraints are such that:

\begin{itemize}
\item the entries of the matrix $A$ corresponding to $E$ are correct, i.e., for all $(i,j) \in E$, we add the equation $\Xi_{i,j}: \sum_{\ell=1}^k u_{i,\ell} w_{j,\ell}=A_{i,j}$. 

\item the matroid $M_{R_S}$ (resp. $M_{C_S}$) encodes the linear dependencies in $U$ restricted to $R_S$ (resp. in $W$ restricted to $C_S$). This can be encoded by a constant number of equations and inequations involving determinants.

\item (for the restricted version only) the row and columns of $U$ and $W$ must satisfy the projections constraints.
  
\end{itemize}

We denote by $X_1$ the set of solutions to $\Xi_1$.

\item \label{step:solution} Solve this system of polynomial equations over the reals as described in Section~\ref{sec:algeq}. If there is no solution, stop.

\item \label{step:alphabet1} For each row $i$ of $R$ where $i$ is not in $R_S$,

\begin{itemize}
  \item an independent set $I$ of $M_{R_S}$ induces a system of linear equations $\Xi_I$ encoding the fact that $U_i$ belongs to the linear subspace $E_I$ spanned by the vectors of $X_1$ corresponding to the rows indexed by $I$.
  \item an independent set $J$ of $M_{C_S}$ induces a system of linear equations $\Xi_J$ encoding the fact that the entries of the matrix corresponding to $\{i\} \times J$ are correct, i.e., $\Xi_J : \bigcup_{j\in J} \Xi_{i,j}$.
\end{itemize}

The \emph{alphabet} of $i$ is defined as follows: For each subset $\Phi$ of $M_{R_S} \cup M_{C_S}$ of size at most $2k$, take an arbitrary vector that is a solution of the corresponding equations $\bigcup_{I \in \Phi} \Xi_I \cup \bigcup_{J \in \Phi} \Xi_J$ and the projection constraints (in the restricted case), and put it in the alphabet. If there is no such solution, stop.

\item \label{step:alphabet2} Define likewise an alphabet for each column $j$ that is not in $C_S$.

\item \label{step:Max2CSP} We now have define an instance of a Maximization Constraint Satisfaction Problem of arity 2 (Max-2-CSP) where 

\begin{itemize}
\item the variables are the row and columns indices $R \cup C$,
\item the constraint graph is $R \times C$, 
\item for rows and columns not in $R_S$ and $C_S$, the alphabet of each variable is as defined in steps~\ref{step:alphabet1} and~\ref{step:alphabet2}, while for rows and columns in $R_S$ and $C_S$, the alphabet is a single value which is the solution computed in $X_1$,
\item the constraints are whether the two letters agree with the entry of the matrix, i.e. whether $u_i$ and $w_j$ satisfy $\sum_{\ell}u_{i,\ell}w_{j,\ell}=A_{i,j}$, for $u_i$ and $w_j$ letters of the alphabets corresponding respectively to row $i$ and column $j$.

\end{itemize}

\item \label{step:densecsp} We compute an $(\varepsilon/2)$-additive approximation to this dense Max-2-CSP on a constant-size alphabet using standard algorithms (see for example~\cite{yaroslavtsev2014going} and~\cite{manurangsi2017birthday}).

\end{enumerate}
  
\item Output the best solution.

\end{enumerate}
\end{framed}
\caption{Our additive approximation scheme for $\ell_0$-Low-Rank Approximation.}
\label{A:mainalgo}
\end{figure}
}

The proof of Theorem~\ref{thm:additive} follows from Proposition~\ref{P:supercore}. It is quite a bit stronger than what is actually needed for Theorem~\ref{thm:additive}, as this stronger version will be required for the proof of Theorem~\ref{thm:main}. We actually solve a more constrained problem, \textsc{Restricted}-$\ell_0$-\textsc{Low Rank Approximation}, where we are additionally given a pair of \emph{projection constraints}, that is, matrices $p_R$ in $\mathbb{R}^{t_R \times k}$ and $p_C$ in $\mathbb{R}^{t_C \times k}$ as well as real vectors $(a_i)_{i\in R}$ in $\mathbb{R}^{t_R}$ and $(b_j)_{j \in C}$ in $\mathbb{R}^{t_c}$, and we require that the matrices $U$ and $W$ also satisfy $p_R(u_i)=a_i$ and $p_C(v_j)=b_j$ for all $i$ and $j$.

The algorithm is parameterized by a large constant $\kappa_1=\kappa_1(k,\varepsilon)$ which, as we will see later, can be taken to be $(k/\varepsilon)^{\Theta(k^3)}$. We define a \emph{supercore} $\Sigma=(R_S,C_S,G,M_{R_S},M_{C_S})$ as being the data enumerated in step~\ref{step:enumeration} of the algorithm: subsets $R_S$ and $C_S$ of rows and columns of the same size, a bipartite graph on these subsets $G=(R_S \cup C_S,E)$ and a pair of rank-$k$ matroids $M_{R_S}$ and $M_{C_S}$ on $R_S$ and $C_S$. The size of a supercore is the size of $R_S$ and $C_S$.

\begin{proposition}\label{P:supercore}
 Let $\varepsilon>0$ and $k \in \mathbb{N}$ be constants. There exists a constant $\kappa_1=(k/\varepsilon)^{\poly(k)}$ such that for any $n_R \times n_C$ matrix $A$, any pair of projection constraints, and any $n_1 \times n_2$ submatrix $A' \subseteq A$, there exists an $(\varepsilon/2)$-additive approximation to \textsc{Restricted} $\ell_0$-\textsc{Low Rank Approximation} on $A'$, that we call Near-OPT such that one of the supercores $\Sigma=(R_S,C_S,G,M_R,M_C)$ of size at most $\kappa_1$ satisfies:
  \begin{enumerate}[(1)]
\item The rows and columns of Near-OPT indexed by $R_S$ and $C_S$ (restricted to $A'$) match those of the solution computed in Step~\ref{step:solution}.
\item The other rows and columns of Near-OPT are contained in the alphabets computed in steps~\ref{step:alphabet1} and~\ref{step:alphabet2}.    
\end{enumerate}
\end{proposition}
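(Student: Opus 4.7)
The plan is to exhibit a specific Near-OPT, obtained via a controlled modification of an optimal restricted solution $U^{\star}, W^{\star}$ on $A'$, together with one supercore $\Sigma = (R_S, C_S, G, M_{R_S}, M_{C_S})$ among those enumerated in Step~\ref{step:enumeration}. Let $G^{\star}$ be the bipartite agreement graph of $U^{\star}, W^{\star}$ on $A'$, i.e., $(i,j) \in G^{\star}$ iff $\langle u^{\star}_i, w^{\star}_j\rangle = A_{i,j}$. If $|E(G^{\star})| < (1 - \varepsilon/2) n_1 n_2$, then every feasible restricted solution is already $(\varepsilon/2)$-additive optimal and an empty supercore trivially works. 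Otherwise $G^{\star}$ is $(\varepsilon/2)$-dense, and the supersaturated Kővári--Sós--Turán lemma (Lemma~\ref{lem:kst1}) supplies a positive constant fraction of $(k+1, k+1)$ row/column subsets that span a $K_{k+1, k+1}$ in $G^{\star}$.

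The supercore is built in two stages. First, I pick a ``core'' $R_0 \subseteq R$, $C_0 \subseteq C$ of a sufficiently large constant size $\kappa_0(k, \varepsilon)$ such that, by Lemma~\ref{lem:kst1} combined with a standard sampling/averaging argument, at most $(\varepsilon/4) n_1 n_2$ edges $(i, j) \in G^{\star}$ with $i \notin R_0, j \notin C_0$ fail to be completed by $2k$ vertices of $R_0 \cup C_0$ into a $K_{k+1, k+1}$ of $G^{\star}$. Second, I iteratively augment the core with \emph{synchronization} rows and columns of $U^{\star}, W^{\star}$, as sketched in Figure~\ref{F:sync}: while distinct components of the current core admit solutions to $\Xi_1$ that disagree with the restriction of $U^{\star}, W^{\star}$ to $R_S, C_S$ by more than a single global invertible $k \times k$ gauge, I add a row or column witnessing the inter-component correlation. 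A rank argument, essentially Lemma~\ref{lem:linear}, bounds the number of such augmentations by a function of $k$ alone, so $|R_S|, |C_S| \leq \kappa_1 = (k/\varepsilon)^{\poly(k)}$.

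Let $E := G^{\star} \cap (R_S \times C_S)$ and let $M_{R_S}, M_{C_S}$ be the linear-dependency matroids of the rows of $U^{\star}$ restricted to $R_S$ and of the rows of $W^{\star}$ restricted to $C_S$; both are real-representable and of rank at most $k$, so this $\Sigma$ is enumerated. Now $U^{\star}, W^{\star}$ restricted to $R_S, C_S$ satisfies $\Xi_1$, and by the synchronization property any solution $\tilde{U}_S, \tilde{W}_S$ returned in Step~\ref{step:solution} equals it up to a single invertible $C \in \R^{k \times k}$. Set $U_0 := U^{\star} C$ and $W_0 := W^{\star} (C^{-1})^T$, which have the same error on $A'$ as $U^{\star}, W^{\star}$ and match $\tilde{U}_S, \tilde{W}_S$ on $R_S, C_S$, establishing (1). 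For $i \in R \setminus R_S$, the matroid guess places the $U_0$-row $u_i$ in the span of $\tilde{U}_S$-rows indexed by some independent $I \in M_{R_S}$; and for every row not excluded by the first stage, there is an independent $J \in M_{C_S}$ whose columns lie in the $G^{\star}$-neighborhood of $i$. The subset $\Phi = \{I, J\}$, of size $\leq 2k$, is enumerated in Step~\ref{step:alphabet1}, and by Lemma~\ref{lem:linear} any solution picked by the algorithm from $\Xi_I \cup \Xi_J$ agrees with $u_i$ on all inner products with $\tilde{W}_S$-columns in $J$, at a cost of at most $O(\kappa_1)$ extra errors per row. Defining Near-OPT by replacing each such $u_i$ by its alphabet vector (and symmetrically for columns) yields (2), with total additive cost at most $(\varepsilon/4) n_1 n_2 + O(\kappa_1 (n_1 + n_2)) \leq (\varepsilon/2) n_1 n_2$.

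The main obstacle is the synchronization step: formalizing that $\poly(k)$-many carefully chosen rows and columns suffice to pin any solution of $\Xi_1$ down to the restriction of $U^{\star}, W^{\star}$ up to a single global gauge, rather than block-wise gauges that could be independently reshuffled across weakly-connected parts of the core. This amounts to analyzing the gauge-quotient of the semi-algebraic variety defined by $\Xi_1$ and showing that each strategic augmentation strictly reduces its dimension; the key linear-algebraic tool is Lemma~\ref{lem:linear}. A secondary quantitative hurdle is the calibration of $\kappa_1 = (k/\varepsilon)^{\poly(k)}$, which follows from the constant $c'_{KST} = (\varepsilon/k)^{\Theta(k^2)}$ of Lemma~\ref{lem:kst1} combined with the $\poly(k)$ synchronization iterations.
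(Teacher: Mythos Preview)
Your approach diverges from the paper's in a way that introduces a genuine gap. The crux of your argument is the synchronization step: you claim that after adding $\poly(k)$ rows and columns, any solution $(\tilde U_S,\tilde W_S)$ to $\Xi_1$ coincides with $(U^\star|_{R_S},W^\star|_{C_S})$ up to a \emph{single global} gauge $C\in GL_k(\R)$. This is both unjustified and in general false. If the core contains two $K_{k,k}$'s whose column sets are not jointly seen by any row of $G^\star$ (and vice versa), there is no witness to add, and the gauge freedom stays block-wise; your termination argument (``a rank argument, essentially Lemma~\ref{lem:linear}'') does not address this, since Lemma~\ref{lem:linear} says nothing about gauge-dimension reduction. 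More seriously, even granting a global gauge on the supercore, your extension to (2) fails: the alphabet vector for a row $i$ is an \emph{arbitrary} solution to $\Xi_I\cup\Xi_J$ for a single pair $\{I,J\}$, which in the non-rigid case need not equal $(U_0)_i$. Your appeal to Lemma~\ref{lem:linear} only yields agreement of $\hat u_i$ with $(U_0)_i$ on inner products against columns in $J$; it says nothing about inner products with the alphabet vector $\hat w_j$ of a column $j\notin C_S$, because you have not coordinated the pair $(I,J)$ chosen for $i$ with the pair $(I',J')$ chosen for $j$. Thus the ``$O(\kappa_1)$ extra errors per row'' bound is unfounded, and the total error could blow up. (Separately, your threshold is inverted: you need $|E(G^\star)|<(\varepsilon/2)n_1n_2$, not $(1-\varepsilon/2)n_1n_2$, for any feasible solution to be $(\varepsilon/2)$-additive.)

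The paper sidesteps the global-gauge issue entirely. It never tries to pin down $Sys$ up to a single $C$; instead it defines \emph{pieces} $P_{I,J}$ (one for each complete-bipartite pair of independent sets in the core) and shows via Lemma~\ref{lem:kst} that all but $(\varepsilon/2)n_1n_2$ edges lie in some piece. The supercore is then the union of the piece cores together with \emph{auxiliary cores}: for every set $\mathcal P$ of at most $k$ pieces, a maximal independent set of rows lying in their common intersection (and symmetrically for columns). The key point is edge-local: for an edge $(i,j)$ in a piece $P_{I,J}$, the constraints $\Xi_I$ on $j$ and $\Xi_J$ on $i$, together with the span conditions, place us exactly in the hypothesis of Lemma~\ref{lem:linear}, so $\langle \hat u_i,\hat w_j\rangle$ is uniquely determined and hence equals $A_{i,j}$. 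The auxiliary cores are what guarantee that a single $\Phi$ of size $\leq 2k$ simultaneously encodes the constraints from \emph{all} pieces containing $i$ (via a $k$-dimensionality argument selecting $\mathcal P'\subseteq\mathcal P$), and that the corresponding system is solvable (witnessed by $u_i:=\sum_{\ell\in I'}\alpha_\ell u_\ell$ built from the auxiliary core). You should replace the global-gauge synchronization with this piece/auxiliary-core machinery.
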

\noitemerroron

This proposition immediately implies Theorem~\ref{thm:additive} by using the full matrix $A$ for $A'$ and enforcing no projection constraints on the rows and the columns: it shows that the Max-2-CSP that we define in Step~\ref{step:Max2CSP} provides an $\varepsilon/2$-additive approximation to $\ell_0$-\textsc{Low Rank Approximation}, and thus solving this Max-2-CSP with an $\varepsilon/2$-additive approximation will yield the desired $\varepsilon$-additive approximation. The complexity of the algorithm is dominated by Step~\ref{step:enumeration}, where the algorithm enumerates all subsets of the rows and columns of size at most $\kappa_1=(k/\varepsilon)^{\Theta(k^3)}$. The other heavy computational steps are solving the system of polynomial equations in Steps:~\ref{step:solution},~\ref{step:alphabet1} and~\ref{step:alphabet2}: the number of equations is always upper bounded by ${\kappa_1^{\poly(k)}}$, and thus applying Theorem~\ref{thm:ETR} , we stay within the allowed timebound. Finally, solving additively the Max-2-CSP instance in Step~\ref{step:densecsp} can be done in time $q^{O(\log q/\eps^2)}+\poly(n)$, where $q$ is an upper bound on the size of the alphabet, which we can take to be ${\kappa_1^{\poly(k)}}$ (see~\cite{yaroslavtsev2014going} and~\cite[Footnote 2]{manurangsi2017birthday}).

We now prove Proposition~\ref{P:supercore}.

\begin{proof}

Throughout this proof, we reason on the submatrix $A'$. Without loss of generality we can assume that $n_1 \geq n_2$. We denote a solution to \textsc{Restricted} $\ell_0$-\textsc{Low Rank Approximation} in this submatrix by $Sol=(U,W)$, and we denote by $G$ the bipartite graph $G=(R' \cup C',E)$, where $R'$ and $C'$ denote the indices of rows and columns of $A'$ and $(i,j) \in E$ if the $(i,j)$ entry of the matrix $A'$ agrees with $(UW^T)_{i,j}$. Initially, $Sol$ will  be an optimal solution, which we will then modify to a near-optimal solution, i.e., a $\varepsilon/2$-additive approximation to the optimal solution.

The first step of the proof is to show that there exists a near-optimal solution which has a nice structure for our problem. Here, ``nice" means that such a solution is parameterized by constantly many pieces, each of which can be fully determined from a subset of constant size (its \emph{core}). 

In order to do so, we start from an 
optimal solution, corresponding to a graph $G$, and define a
family of pieces as follows. We consider a set
$S := R_S \cup C_S \subset R' \cup C'$ 
of $\kappa_2$ rows and $\kappa_2$ columns, where $\kappa_2=\kappa_2(k,\varepsilon)$ is a constant to be fixed later. The optimal solution induces a pair of rank-$k$ matroids $M_R$ and $M_C$, which restrict to submatroids $M_{R_S}$ and $M_{C_S}$ on the sets $R_S$ and $C_S$, and, for each independent set $I \in M_R$ (respectively $J \in M_C$), there is a corresponding subspace $E_I$ (respectively $E_J$).

For each pair of independent sets $I,J$ in $M_{R_S} \times M_{C_S}$,
we define a piece $P_{I,J}$ as follows.  
If the complete bipartite graph on $I\times J$ is included in $G$, we create a piece with $I\cup J$ as the core and $I \cup I_2 \cup J \cup J_2$ as the vertex set, where $J_2$ denotes the set of vertices adjacent to all of $I$ in $G$ and belonging to the subspace spanned by $J$, and $I_2$ denotes the set of vertices adjacent to all of $J$ in $G$ and belonging to the subspace spanned by $I$. The graph of the piece is the subgraph of $G$ on the vertex set. The matroids on this piece are the submatroids induced by $M_R$ and $M_C$. Note that pieces will in general overlap.

\begin{lemma}\label{lem:kst}
If $\kappa_2=(k/\varepsilon)^{\Omega(k^2)}$, there exists a set $S$ such that all but at most $(\varepsilon/2)n_1n_2$ edges of $G$ belong to at least one piece $P_{I,J}$.
\end{lemma}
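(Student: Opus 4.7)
The plan is to construct $S$ probabilistically. I would fix a basis $B_R \subseteq R'$ of $M_R$ and a basis $B_C \subseteq C'$ of $M_C$ (each of size $k$, which exist since $M_R, M_C$ have rank $k$), and sample $R_S = B_R \cup \tilde{R}_S$ and $C_S = B_C \cup \tilde{C}_S$, where $\tilde{R}_S$ and $\tilde{C}_S$ are uniform random subsets of $R' \setminus B_R$ and $C' \setminus B_C$ of size $\kappa_2 - k$. The deterministic inclusion of $B_R, B_C$ is crucial: it guarantees that the restricted matroids $M_R|_{R_S}$ and $M_C|_{C_S}$ have rank exactly $k$, as required by the definition of the supercore. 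The target is to bound the expected number of uncovered edges by $(\varepsilon/2) n_1 n_2$, whence the existence of a good deterministic $S$ follows by the probabilistic method. As preliminary reductions, if $|E(G)| \leq (\varepsilon/2) n_1 n_2$ the lemma holds trivially, and if $\min(n_1, n_2) < n_{KST}$ we can afford to take $S$ to include all of $R' \cup C'$.

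The central structural observation is that each copy of $K_{k+1, k+1}$ in $G$ canonically yields a covering piece. Given such a copy with rows $R_1$ and columns $C_1$, I would choose $I \subseteq R_1$ to be a basis of the submatroid $M_R|_{R_1}$ and symmetrically $J \subseteq C_1$ of $M_C|_{C_1}$; then $|I|, |J| \leq k$, the bipartite graph on $I \times J$ is a subgraph of $G$ (it is contained in the $K_{k+1, k+1}$), every $u_i$ with $i \in R_1$ lies in $E_I$ by the choice of $I$ and $i$ is adjacent in $G$ to all of $J \subseteq C_1$, placing $i \in I \cup I_2$, and symmetrically for columns. Hence the piece $P_{I, J}$ covers all $(k+1)^2$ edges of this $K_{k+1, k+1}$, provided $I \subseteq R_S$ and $J \subseteq C_S$, an event of probability at least $((\kappa_2 - k)/n_1)^k ((\kappa_2 - k)/n_2)^k$ under the sampling scheme since $|I|, |J| \leq k$.

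By Lemma~\ref{lem:kst1}, $G$ contains at least $T := c'_{KST} n_1^{k+1} n_2^{k+1}$ copies of $K_{k+1, k+1}$. I would first establish by contradiction that, with $\tau := (\varepsilon/k)^{\Theta(k^2)} n_1^k n_2^k$, all but at most $(\varepsilon/4) n_1 n_2$ edges of $G$ are contained in at least $\tau$ such copies: otherwise, the subgraph $G''$ of ``lonely'' edges (those in fewer than $\tau$ copies in $G$) would satisfy the density hypothesis of Lemma~\ref{lem:kst1}, producing $\Omega(n_1^{k+1} n_2^{k+1})$ copies of $K_{k+1, k+1}$ within $G''$, each of whose edges would by assumption lie in fewer than $\tau$ copies in $G \supseteq G''$, violating the double counting $(k+1)^2 \cdot |K_{k+1,k+1}\text{'s in } G''| \leq \sum_{e \in E(G'')} T^{(G)}_e < |E(G'')| \cdot \tau \leq n_1 n_2 \tau$. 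For each non-lonely edge $e$, letting $Y_e$ count the $K_{k+1, k+1}$'s through $e$ whose induced piece lands in $(R_S, C_S)$, I would estimate $\E[Y_e] = \Omega(\tau \kappa_2^{2k}/(n_1 n_2)^k)$, which can be made as large as desired by taking $\kappa_2 = (k/\varepsilon)^{\Theta(k^2)}$. A Paley--Zygmund lower bound on $\mathbb{P}(Y_e \geq 1)$ then yields $\mathbb{P}(e \text{ uncovered}) \leq \varepsilon/4$, and summing over all edges (absorbing the lonely contribution) gives $\E[\#\text{uncovered edges}] \leq (\varepsilon/4) n_1 n_2 + (\varepsilon/4) n_1 n_2 = (\varepsilon/2) n_1 n_2$.

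The main obstacle is the second-moment control on $Y_e$: distinct $K_{k+1, k+1}$'s through $e$ are positively correlated in whether their pieces land in $(R_S, C_S)$, because they may share rows or columns with each other. The variance analysis must decompose pairs of $K_{k+1, k+1}$'s by the size of their row/column overlap and verify that near-disjoint pairs dominate $\E[Y_e^2]$, contributing approximately $\E[Y_e]^2$; the blowup from the overlapping pairs is precisely what forces the polynomial slack $\kappa_2 = (k/\varepsilon)^{\Theta(k^2)}$ (versus the naive $(k/\varepsilon)^{\Theta(k)}$ one would get from an independent-events heuristic). A secondary subtlety is that the basis $I$ extracted from $R_1$ may satisfy $|I| < k$ in $K_{k+1, k+1}$'s with degenerate matroid structure; this only decreases the number of indices that must land in $R_S$, so the realization probability is only larger and the argument goes through without modification.
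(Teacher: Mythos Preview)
Your high-level structure closely tracks the paper's (probabilistic choice of $S$, supersaturation to show most edges lie in many $K_{k+1,k+1}$'s, then per-edge coverage), but there is a genuine gap in the second-moment step.

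\textbf{The main gap.} When you extract a basis $I\subseteq R_1$ from a copy $H$ through $e=(i,j)$, nothing prevents $i\in I$. Indeed, if $u_i\notin\spann\{u_{i'}:i'\in R_1\setminus\{i\}\}$ then $i$ is a coloop of $M_R|_{R_1}$ and lies in \emph{every} basis. A concrete bad instance: all rows other than $i$ that ever appear in a $K_{k+1,k+1}$ through $e$ have their $u$-vectors in a fixed $(k-1)$-dimensional subspace not containing $u_i$. Then $i\in I_H$ for every such $H$, and symmetrically one can force $j\in J_H$. In that case the event $\{Y_e>0\}$ is contained in $\{i\in R_S\}\cap\{j\in C_S\}$, so $\Pr[Y_e>0]\le O(\kappa_2^2/(n_1n_2))$, and no Paley--Zygmund bound can beat this (the variance computation will reflect exactly this: every pair of copies shares $i$ and $j$, so the ``near-disjoint'' term already carries the factor $(n_1/\kappa_2)(n_2/\kappa_2)$). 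Your ``secondary subtlety'' paragraph notices $|I|<k$ but misses the real issue, which is $i\in I$.

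\textbf{How the paper handles it.} The paper restricts attention to copies that are \emph{good for $(i,j)$}, meaning $u_i\in\spann\{u_{i'}:i'\in R_1\setminus\{i\}\}$ and $w_j\in\spann\{w_{j'}:j'\in C_1\setminus\{j\}\}$; for such copies one may take $I\subseteq R_1\setminus\{i\}$ and $J\subseteq C_1\setminus\{j\}$. Since $k+1$ vectors in $\R^k$ are dependent, every $K_{k+1,k+1}$ is good for at least one of its edges, and the paper's double-counting (essentially your lonely-edge argument, but for good copies) shows all but $(\varepsilon/4)n_1n_2$ edges have $\kappa_3 n_1^k n_2^k$ good copies. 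With this restriction your second-moment argument would actually go through and give the stated $\kappa_2=(k/\varepsilon)^{\Theta(k^2)}$; the paper instead avoids second moments entirely by partitioning the sample into $\lfloor\kappa_2/(2k)\rfloor$ \emph{disjoint} blocks of $k$ rows and $k$ columns and using independence across blocks: each block equals the ``other $2k$'' of some good copy with probability $\ge\kappa_3$, so the failure probability is $(1-\kappa_3)^{\kappa_2/(2k)}\le\varepsilon/4$.

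\textbf{A minor gap.} When $n_2<n_{KST}$ but $n_1$ is large you cannot take $S=R'\cup C'$. The paper treats this case deterministically: include all of $C'$ together with, for each column $j$, a basis $B(j)$ of the rows adjacent to $j$; then every edge $(i,j)$ lies in $P_{B(j),\{j\}}$, and $|S|\le (k+1)n_{KST}$.
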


\begin{proof}

  We distinguish two cases for the proof, depending on how $n_2$ compares to $n_{KST}$, the constant of Lemma~\ref{lem:kst1}.

\paragraph{First case: $n_2 \geq n_{KST}$.}
In this case, the proof relies on the probabilistic method: the set $S$ is taken by sampling uniformly at random a set $R_S$ of $\kappa_2$ rows and a set $C_S$ of $\kappa_2$ columns.  Let $\kappa_3$ be a constant depending on $\varepsilon$ and $k$ to be fixed later. We say that an edge $e=(i, j)$ is \emph{efficient} if there exists at least $\kappa_3 n_1^kn_2^k$ copies of $K_{k+1,k+1}$ in $G$ containing $e$ such that for each such copy, the space spanned by (the vectors corresponding to) the other $k$ vertices in $R$ contains $i$, and the space spanned by (the vectors corresponding to) the other $k$ vertices in $C$ contains $j$. The main claim that we prove is:

\begin{claim}
All but at most $(\varepsilon/4)n_1n_2$ edges are efficient.
\end{claim}

  \begin{proof}[Proof of the claim.] We say that a $K_{k+1,k+1}$ is good for one of its edges $(i, j)$ if $i$ and $j$ belong to the span of the other respective $k$ vertices. Let us assume that the claim is wrong. Then there are at least $(\varepsilon/4)n_1n_2$ edges $(i,j)$ which do not belong to at least $\kappa_3 n_1^kn_2^k$ copies of $K_{k+1,k+1}$ which are good for them. We remove all the other edges from the graph, call the resulting graph $G'$. Then we apply the (supersaturated) Kovari-Sos-Turan theorem of Lemma~\ref{lem:kst1} on this $G'$. The assumptions hold since by the assumption of the first case, $n_1 \geq n_2 \geq n_{KST}$ and there are at least $(\varepsilon/4) n_1n_2$ edges. It implies that that there exists $c'_{KST}=(\eps/k)^{\Theta(k^2)}$ such that there are at least $c'_{KST}n_1^{k+1}n_2^{k+1}$ copies of $K_{k+1,k+1}$ in $G'$. Now, observe that each copy of $K_{k+1,k+1}$ is good for at least one of its edges. Thus, by double counting, there must be one edge contained in $c'_{KST} n_1^{k}n_2^{k}$ copies of $K_{k+1,k+1}$ which are good for it. 
  This is a contradiction for $\kappa_3\leq c'_{KST}=(\varepsilon/k)^{\Theta(k^2)}$.
    \end{proof}
  
  We now prove Lemma~\ref{lem:kst}. We first discard the inefficient edges. Then we claim that with probability more than $1-\varepsilon/4$, an efficient edge belongs to a piece. In order to prove that, we partition the sample set $S$ into $\lfloor \kappa_2/(2k)\rfloor$ disjoint subsets of $k$ rows and $k$ columns. With probability at least $\kappa_3$, such a subset will contain the $2k$ other vertices of one of the $K_{k+1,k+1}$ defining an efficient edge. So if $(1-\kappa_3)^{\kappa_2/(2k)} \leq \varepsilon/4$, which happens if the constant $\kappa_2$ sufficiently large (using the value of $\kappa_3$ from the claim, we can take $\kappa_2=(k/\varepsilon)^{\Theta(k^2)}$), with probability more than $1-\varepsilon/4$ the sample will contain the $2k$ other vertices of one of the $K_{k+1,k+1}$ defining an efficient edge.

  We consider subsets $I$ and $J$ of these $2k$ vertices so that $I$ and $J$ are independent sets, and $I$ is dependent with $i$ and $J$ is dependent with $j$. Then the edge $e$ will belong to the piece $P_{I,J}$ since $I$ and $J$ induce a complete bipartite graph, and thus form a core, and by definition of efficiency, $i$ and $j$ are adjacent to all the vertices of this core. By linearity of expectation, the expected number of edges for which this fails is less than $(\varepsilon/4)n_1n_2$. Thus with nonzero probability, the set $S$ has the required properties. This concludes the proof.

\paragraph{Second case: $n_2 \leq n_{KST}$.}
  In that case, we even have the stronger result that there exists a choice of $S$ such that every edge belongs to a piece. Indeed, for each column $j$, denote by $I(j)$ the set $\{i \in R' \mid (i, j) \in G\}$, and by $B(j)$ a subset of $I(j)$ so that the rows indexed by $B(j)$ form a basis of the vector space spanned by the rows indexed by $I(j)$ in the optimal solution. Then we consider the set of rows obtained by taking the union of all the sets $B(j)$. We take the set $S$ to consist of the union of these rows and the entire set of columns, which we can do if $\kappa_2 \geq kn_{KST}$ (recall that $n_{KST}=(k/\varepsilon)^{\Theta(k)}$). Now, each edge $(i, j)$ of $G$ is contained in the piece $P_{B(j),\{j\}}$. Indeed this piece exists since by definition the graph induced by $B(j)$ and $\{j\}$ in the optimal solution is the complete bipartite graph. Furthermore, $i$ belongs to the space $B(j)$ and is adjacent to $j$, therefore it belongs to this piece.\end{proof}

From now on, we consider that the set $S$ satisfies Lemma~\ref{lem:kst}. At the cost of modifying the solution by $(\eps/2) n_1n_2$, we can neglect the edges not covered by Lemma~\ref{lem:kst}, and therefore assume that all the edges (and in particular all the vertices) are contained in some piece, which we do from now on. 

Now, we aim at controlling the interactions between different pieces $P_{I,J}$. This is done by defining the following \emph{auxiliary cores}, which is an additional set of subgraphs of $G$. Let $\mathcal{P}$ denote a set of at most $k$ pairs of independent sets $(I,J)$ in $M_{R_S} \times M_{C_S}$. If the intersection of all the rows in $P_{I,J}$ for $(I,J)\in \mathcal{P}$ is non-empty, we let $I'$ denote a subset of rows in this intersection for which the vectors are maximally independent. Then we define an auxiliary core whose vertex set is $I' \cup \bigcup_{\exists I,(I,J) \in \mathcal{P}} J$ and whose graph is the complete bipartite graph on the vertex set. Note that by construction of a piece $P_{I,J}$, each row of such a piece is adjacent to all the columns in $J$. Therefore, the auxiliary core is indeed a subgraph of $G$. We also define auxiliary cores symmetrically with the roles of rows and columns inverted.

The supercore is defined as the union of the cores of all the pieces $P_{I,J}$
and all the auxiliary cores for all sets $\mathcal{P}$ of at most $k$ pairs of independent sets. The corresponding matroid is the one induced from the near-optimal solution. By construction, it has size at most some constant $\kappa_1$ which we can take to be $\kappa_2^{\Theta(k)}=(\frac{k}{\varepsilon})^{\Theta(k^3)}$, and we are now ready to prove Proposition~\ref{P:supercore}. We consider a near-optimal solution $Sol=(U,W)$ where pieces cover all the edges, as provided by Lemma~\ref{lem:kst}, and denote by $G$ the graph of covered edges. We then consider an arbitrary solution $Sys=(U_{sys},W_{sys})$ to the system of equations in Step~\ref{step:solution0}, where the correct supercore has been guessed. A solution always exists since $Sol$ is such a solution. We will show how to extend $Sys$ to a solution of the whole set $R' \cup C'$ such that the set of edges in $G$ is satisfied.

We first consider an edge $(i, j)$ that is not included in the supercore, and such that $i$ (respectively $j$) does not belong to $R_S$ (respectively $C_S$). In $Sol$, this edge is included in a collection of pieces $P_{I,J}$. Each of these pieces puts two types of constraints on the value of the row $i$: (i) it should belong to the subspace spanned by the vectors of $I$ in $Sys$ and $(ii)$ it should satisfy the linear equations induced by the core of $P_{I,J}$, where the values of the columns in $J$ are fixed by $W_{sys}$. The pieces put a symmetric set of constraints on the column $j$. We claim that if these constraints are satisfied in a solution that extends $Sys$, all the edges of $G$ are automatically satisfied in that solution. This will follow from this easy linear algebraic lemma.

\begin{lemma}\label{lem:linear}
If $M_1$ and $M_2$ are two matrices and $a$ and $b$ are two real vectors, then for any two vectors $u$ and $w$ such that $M_1u=a$, $M_2w=b$, $u$ is a linear combination of the columns of $M_2$ and $w$ is a linear combination of the columns of $M_1$, the value of $\langle u,w\rangle$ is uniquely determined.
\end{lemma}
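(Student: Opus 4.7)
The plan is to prove the lemma by showing that, for any two pairs $(u, w)$ and $(u', w')$ both satisfying the four hypotheses, one has $\langle u, w \rangle = \langle u', w' \rangle$. This reduces, via the telescoping identity
\begin{equation*}
\langle u, w \rangle - \langle u', w' \rangle = \langle u - u', w \rangle + \langle u', w - w' \rangle,
\end{equation*}
to showing that each of the two cross terms on the right vanishes; by the symmetric roles played by $(M_1, a, u)$ and $(M_2, b, w)$ in the hypotheses, it is enough to handle $\langle u - u', w \rangle = 0$.

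The main idea is to combine the kernel relation obtained by subtracting the two linear constraints on $u$ with the span hypothesis imposed on $w$. Concretely, since $u$ and $u'$ both lie in the span of the columns of $M_2$, I can write $u - u' = M_2^T \delta$ for some vector $\delta$, and subtracting $M_1 u = a$ from $M_1 u' = a$ then gives $M_1 M_2^T \delta = 0$. Dually, writing the span hypothesis on $w$ as $w = M_1^T \beta$ for some $\beta$, the cross term factors as
\begin{equation*}
\langle u - u', w \rangle = (M_2^T \delta)^T (M_1^T \beta) = \beta^T \bigl( M_1 M_2^T \delta \bigr) = 0,
\end{equation*}
where the middle equality uses that a scalar equals its own transpose, and the last one invokes the kernel identity just derived. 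The analogous computation with the roles of $M_1$ and $M_2$ interchanged yields $\langle u', w - w' \rangle = 0$, completing the proof.

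The whole argument is a very short piece of linear algebra; I expect the only mild subtlety to be lining up the transposes correctly, so that the hypothesis ``$u - u'$ lies in the column span of $M_2$'' is written in the form $u - u' = M_2^T \delta$ that is needed to feed the kernel identity $M_1 M_2^T \delta = 0$ into the cross-term computation, and symmetrically for $w$. Once this identification is made, no further ingredient is needed beyond the telescoping and the scalar-transpose trick.
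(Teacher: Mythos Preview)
Your proof is correct and follows essentially the same approach as the paper: telescope $\langle u,w\rangle - \langle u',w'\rangle$ into two cross terms and kill each one using the orthogonality between the kernel of a matrix and its row space. One small remark: your detour through $u-u' = M_2^T\delta$ is unnecessary---since $M_1(u-u')=0$ and $w=M_1^T\beta$, you get $\langle u-u',w\rangle = (M_1(u-u'))^T\beta = 0$ directly, which is exactly how the paper handles its cross terms (with the roles of $u$ and $w$ swapped).
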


\begin{proof}
Let $u_1,w_1$ and $u_2,w_2$ be two pairs of vectors satisfying the conditions of the lemma. Then since $M_2w_1=M_2w_2=b$, $w_2-w_1$ is in the kernel of $M_2$. Since $u_1$ is a linear combination of the columns of $M_2$, there exists a vector $x$ such that $u_1=M_2^Tx$, and then $\langle u_1,w_2\rangle=\langle u_1,(w_1+w_2-w_1)\rangle=\langle M_2^Tx,w_1\rangle +\langle M_2^Tx,w_2-w_1\rangle=\langle M_2^Tx,w_1\rangle=\langle u_1,w_1\rangle$. Likewise, writing $w_2=M_1^Ty$, $\langle u_2,w_2\rangle =\langle u_1+u_2-u_1,M_1^Ty\rangle=\langle u_1,w_2\rangle$, and thus $\langle u_1,w_1\rangle=\langle u_2,w_2\rangle$.
\end{proof}

For each edge $(i, j)$ in a piece $P_{I,J}$, if we write $M_1u_i^T=a$ and $M_2w_j=b$ for the linear equations of type (ii) induced respectively by the columns and the rows of the core on the vectors $u_i$ and $w_j$, then the conditions (i) directly imply that $u_i^T$ is a linear combination of the columns of $M_2$ and $w_j$ is a linear combination of the columns of $M_1$. Therefore, by Lemma~\ref{lem:linear}, the value of $\langle u_i,w_j\rangle$, which determines whether the edge $(i, j)$ is satisfied, is unique, and in particular is equal to its value in $Sol$, where it is satisfied by definition of an edge of $G$.

So in order to satisfy all the edges of $G$, it suffices to extend $Sys$ to a solution of the whole set $R' \cup C'$ such that

\begin{enumerate}[label=(\roman*)]
\item the linear dependencies between a row and the rows of the cores are the same as in $Sol$,
\item the linear dependencies between a column and the columns of the cores are the same as in $Sol$,
\item for any row $i$, all the edges to the core columns of pieces that $i$ belongs to are satisfied, 
\item for any column $j$, all the edges to core rows of pieces that $j$ belongs to are satisfied,
\item in the restricted case, the additional projection constraints are satisfied.
\end{enumerate}

We claim that this can always be done. Let $i$ be a row that belongs to a set of pieces $\{P_{I,J}\}$, and denote by $\mathcal{P}$ the corresponding set of pairs of independent sets. Since the constraints induced by each piece $P_{I,J}$ are linear and the row $i$ belongs to $\mathbb{R}^k$, there exists a subset $\mathcal{P}'$ of $\mathcal{P}$ of size at most $k$ inducing exactly the exact same constraints as $\mathcal{P}$ on the row $i$.

We consider the system of equations in Step~\ref{step:alphabet1} obtained by taking for $\Phi$ the set of independent sets involved in $\mathcal{P}'$. We directly have that any solution to these system of equations satisfy the constraints above corresponding to rows. So there remains to show that this system of equations has a solution. This is immediate for a row $i$ that belongs to the core of one of the pieces or to the auxiliary core corresponding to $\mathcal{P}'$.

Otherwise, we denote by $I'$ the set of rows of the auxiliary core corresponding to $\mathcal{P}'$. We denote by $u^*_i$ and $w^*_j$ the vectors of $Sol$, and by $u_i$ and $w_j$ the row and vectors obtained as a solution of the system of equations in Step~\ref{step:solution}. By definition, in the solution $Sol$ the row $i$ belongs to the space spanned by the rows $I'$, therefore we can write $u^{*}_i=\sum_{\ell \in I'} \alpha_{\ell} u^{*}_\ell$, for some family of real numbers $\alpha_{\ell}$. We want to show that there exists $u_i$ for the row $i$ with the following constraints: (i) the inner products induced by edges with the columns of the cores of the pieces $P_{I,J}$ that $i$ belongs to are satisfied, (ii) $u_i$ belongs to the space spanned by the row vectors in $Sys$ indexed by $I'$ and (iii) in the restricted case, $u_i$ satisfies the projection constraint $p_R(u_i)=b_i$. We consider the vector $u_i:= \sum_{\ell \in I'} \alpha_{\ell} u_\ell$ and claim that it satisfies all three constraints. It trivially satisfies (ii). For any column $j$ in a piece $P_{I,J}$ in $\mathcal{P}'$, we have

\begin{align*}\langle u_i,w_j\rangle =& \langle \sum_{\ell \in I'} \alpha_\ell u_\ell,w_j \rangle 
  =\sum_{\ell \in I'} \alpha_\ell \langle u_\ell,w_j\rangle 
  =\sum_{\ell \in I'} \alpha_\ell \langle u_\ell,w_j\rangle
  =\sum_{\ell \in I'} \alpha_\ell A_{\ell,j}\\
  =&\sum_{\ell \in I'} \alpha_\ell \langle u^{*}_\ell,w_j^{*}\rangle
  = \langle u^{*}_i,w_j^{*}\rangle 
  = A_{i,j}.
\end{align*}

By definition of $\mathcal{P}'$, the constraints induced by the pieces in $\mathcal{P} \setminus \mathcal{P'}$ are also satisfied. For condition (iii), we have similarly

\begin{align*}
  p_R(u_i)=\sum_{\ell \in I'} \alpha_\ell p_R(u_\ell)
  =\sum_{\ell \in I'} \alpha_\ell b_\ell
  =\sum_{\ell \in I'} \alpha_\ell p_R(u^*_\ell)
  =p_R(u^*_i)
  =&b_i
\end{align*}

Symmetrically, we can always find a solution to all the constraints for any column $j$. Therefore, we can extend $Sys$ to a solution that is at least as good as $Sol$, and is thus at least a $\varepsilon/2$-additive approximation to \textsc{Restricted} $\ell_0$-\textsc{Low Rank Approximation} on $A'$. This concludes the proof of the proposition. 
\end{proof}

\section{A polynomial-time multiplicative approximation scheme}
\label{sec:min}

In this section, we prove our main theorem, which gives a multiplicative PTAS for \lra. 

\min*

\begin{proof}
Let $A \in \R^{n_R \times n_C}$ be the input matrix. 
Let $U W^T$ be an optimal solution where $U \in \R^{n_R \times k}$ and $W \in \R^{n_C \times k}$. 
We assume that the rank of $UW^T$ is exactly $k$; otherwise one  can solve for a smaller rank (there are only $k+1$ possibilities).
Let $u_i$ be the $i$th row of $U$ and $w_i$ be the $i$th row of $W$ (as column vectors).
Let $OPT = | \{ (i, j) : A_{i, j} \neq \langle u_i, w_j \rangle \} |$ be the number of {\em errors} that the optimal solution makes.

We will define several constants depending on $\eps_0$ and $k$ (and each other) and see their dependencies at the end of the proof. Let $\delta_0 := 1/20k$ be such a constant.
We first partition $[n_C]$ to $\ell_C$ {\em layers} $J_1, \dots, J_{\ell_C}$ with $\ell_C \leq k$. The desired properties for the layers (called the {\em layer properties}) are as follows.
For $J \subseteq [n_C]$, let $W(J) := \{ w_q : q \in J \}$ be the set of vectors corresponding to $J$ as a multiset (so that $|W(J)| = |J|$). 

\begin{enumerate}[i.]
\item $J_1, \dots, J_{\ell_C}$ partition $[n_C]$ with $|J_j| \leq (2k\delta_0) |J_{j-1}| = |J_{j-1}|/10$. 
\item $\emptyset = T_{J, 0} \subseteq T_{J, 1} \subseteq \dots \subseteq T_{J, {\ell_C}} = \R^k$ are subspaces such that for any $j \in [\ell_C]$, $W(J_j) \subseteq T_{J, j}$. 
\item For any $j \in [\ell_C]$, no subspace $T'$ with $T_{J, {j-1}} \subseteq T' \subsetneq T_{J, j}$ satisfies $|W(J_j) \cap T'| > (1 - \delta_0) |J_j|$.
\end{enumerate}

The following lemma shows the existence of such layers. Note that the layers are used only in the analysis so that the algorithm does not need to construct them. 

\begin{lemma}
There exist $J_1, \dots, J_{\ell_C}$ with $\ell_C \leq k$ satisfying the conditions above. 
\end{lemma}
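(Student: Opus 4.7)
The plan is to prove the lemma by induction on the ambient dimension $k$, with the base case $k = 0$ being immediate (take $\ell_C = 1$, $T_{J,1} = \{0\}$, $J_1 = [n_C]$). For the inductive step, the key idea is to extract a \emph{saturated} innermost subspace $T_1 \subseteq \R^k$ via a dimension-descent procedure, then recursively construct the remainder of the chain inside the quotient space $\R^k / T_1$.

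Starting from $T := \R^k$, I would iteratively replace $T$ by a proper subspace $T' \subsetneq T$ satisfying $f(T') > (1 - \delta_0) f(T)$ (where $f(T) := |\{q \in [n_C] : w_q \in T\}|$) whenever such a $T'$ exists; since $\dim T$ strictly decreases at each step, this terminates in at most $k$ iterations at a saturated $T_1$. Bernoulli's inequality then yields the crucial mass bound $f(T_1) \geq (1 - \delta_0)^k n_C \geq (1 - k \delta_0) n_C = (19/20) n_C$. Setting $J_1 := \{q : w_q \in T_1\}$ and $T_{J,1} := T_1$, if $T_1 = \R^k$ the chain of length one already satisfies all conditions. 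Otherwise, I would project the remaining vectors $\{w_q : q \notin J_1\}$ to the quotient $\R^k / T_1 \cong \R^{k'}$ with $k' = k - \dim T_1 < k$, apply the inductive hypothesis with parameter $1/(20k') \geq \delta_0$ to obtain a chain $\{0\} = \bar T_0 \subsetneq \dots \subsetneq \bar T_{\bar \ell} = \R^{k'}$ with layers $\bar J_1, \dots, \bar J_{\bar \ell}$, and lift back via $T_{J, j+1} := \pi^{-1}(\bar T_j)$ and $J_{j+1} := \{q \notin J_1 : \bar w_q \in \bar J_j\}$.

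Verification would split cleanly. Density for $J_1$ in $T_1$ is precisely the saturation property, since for any $T' \subsetneq T_1$, $|W(J_1) \cap T'| = f(T') \leq (1-\delta_0) f(T_1) = (1-\delta_0) |J_1|$. For layers above $T_1$ (i.e., $j \geq 2$), any intermediate subspace $T''$ with $T_{J,j} \subseteq T'' \subsetneq T_{J,j+1}$ contains $T_1$ and thus satisfies $T'' = \pi^{-1}(\bar{T''})$, giving the bijection $|W(J_{j+1}) \cap T''| = |\bar W(\bar J_j) \cap \bar{T''}|$; combined with the inductive density bound at parameter $1/(20k') \geq \delta_0$, this is strictly stronger than required. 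The size condition $|J_{j+1}| \leq |J_j|/10$ transfers directly via $|J_{j+1}| = |\bar J_j|$ for $j \geq 2$, since $2k'\delta_0' = 1/10 = 2k\delta_0$.

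The hard part will be the size condition at the junction $j = 1$, namely $|J_2| \leq |J_1|/10$, which is precisely where the saturation bound $f(T_1) \geq 19 n_C/20$ is essential: it gives $|J_2| = |\bar J_1| \leq |\bar W| = n_C - f(T_1) \leq n_C/20 \leq f(T_1)/10 = |J_1|/10$. A minor technicality arises in the degenerate case $T_1 = \{0\}$ (when the descent terminates at the zero subspace because most vectors are zero); in that case $k' = k$ and the ambient dimension does not drop, but $|\bar W| \leq n_C/20$ is strictly smaller, so the induction can be carried out on the lexicographic pair $(k, n_C)$ instead of just $k$.
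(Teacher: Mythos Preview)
Your approach is essentially the paper's, rephrased recursively: both peel off an innermost ``saturated'' (the paper says ``full'') subspace $T_1$ that carries at least a $(1-\delta_0)^k \geq 1-k\delta_0$ fraction of the vectors, set $J_1 = \{q : w_q \in T_1\}$, and continue on the remainder. The paper does this as a greedy iteration (at step $j$, pick a full subspace strictly above $T_{J,j-1}$ capturing the most remaining vectors), while you pass to the quotient $\R^k/T_1$ and invoke an inductive hypothesis; unrolling your recursion recovers exactly the paper's loop. Your verification of conditions~(ii) and~(iii) via the correspondence $T'' \leftrightarrow \pi(T'')$ for subspaces containing $T_1$, and of the size drop $|J_2| \leq n_C/20 \leq |J_1|/10$ at the junction, is correct and matches the paper's argument.

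There is one small gap. Your lexicographic fix for the degenerate case $T_1=\{0\}$ addresses \emph{termination} of the induction but not the bound $\ell_C \leq k$: when $T_1=\{0\}$ you have $k'=k$, so the inductive hypothesis only gives $\bar\ell \leq k$ and hence $\ell_C = 1+\bar\ell \leq k+1$. To repair this, observe that once you pass to the recursion after stripping the zero vectors, every remaining $\bar w_q$ is nonzero, so $f(\{0\})=0$ and the descent in the recursion cannot terminate at the origin; thus the recursion's first subspace has positive dimension and all subsequent quotients genuinely drop dimension. Strengthening the inductive statement to ``$\ell_C \leq k$ when no $w_q$ is zero'' (and handling the at-most-one extra layer of zero vectors separately) closes the gap. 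The paper's own proof enforces the strict inclusion $T_{J,j-1} \subsetneq T_{J,j}$ directly and asserts $\ell_C \leq k$ from ``dimension strictly increases,'' which implicitly treats the zero-vector edge case the same way; the off-by-one is in any case harmless for the downstream analysis.
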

\begin{proof}
For $j = 1, \dots$, 
we will maintain that $M_j \subseteq [n_C]$ and subspace $T_{J, j} \subseteq \R^k$ that satisfies 
$W(M_j) = W([n_C]) \setminus T_{J, j-1}$ for each $j \geq 1$. 
Let $M_1 = [n_C]$, $T_{J, 0} = \emptyset$, and $j = 1$, and run the following algorithm.

\begin{enumerate}
\item Call a subspace $T$ with $T_{J, j-1} \subsetneq T \subseteq \R^k$ {\em full} if no proper subspace $T_{J, j-1} \subseteq T' \subsetneq T$ has $|T' \cap W(M_j)| \geq (1 - \delta_0)|T \cap W(M_j)|$. (Note that every $T$ with $\dim(T) \leq \dim(T_{j-1}) + 1$ is full, since $|T_{J, j-1} \cap W(M_j)| = 0$.)

\item Choose a full subspace $T_{J, j}$ that contains the maximum number of vectors from $W(M_j)$. Let $J_j \subseteq M_j$ be the set of columns whose vectors belong to $T_{J, j}$, and $M_{j+1} = M_j \setminus J_j$. In particular, $W(J_j) = (T_{J, j}  \setminus T_{J, j - 1}) \cap W([n_C])$. 

\item If $M_{j+1} = \emptyset$, halt. Otherwise, $j \leftarrow j + 1$ and go to Line 1. 
\end{enumerate}
Since the dimension of $T_{J, j}$ is strictly increasing in each iteration, the above algorithm halts before $j = \ell_C \leq k$ iterations. Let us check that $J_j$'s and $T_{J, j}$'s satisfy the three properties for the layers. 
\noindent{\bf Property i.} 
By construction $J_1, \dots, J_{\ell_C}$ partition $[n_C]$. 
Note that also $|J_j| \geq |M_j|(1 - \delta_0)^k \geq |M_j|(1 - k \delta_0) \geq \frac{1 - k \delta_0}{k\delta_0} |J_{j+1}| \geq \frac{1}{2k \delta_0} |J_{j+1}|$; 
one can show the existence of a full subspace $T_{J, j}$ with large $|J_j|$ by starting from $T = \R^k$, and if $T$ is not full, recursively going to the a strict subspace by losing a factor $(1 - \delta_0)$ in size.

\noindent{\bf Property ii.} 
Since we maintain $W(M_i) \cap T_{J, j-1} = \emptyset$ and define $J_j$ be the set of columns of $M_j$ whose vectors belong to $T_{J, j}$, we have $W(J_j) \subseteq T_{J, j}$ for every $j$. 

\noindent{\bf Property iii.} 
Since we chose $T_{J, j}$ to be a full space given $M_j$ and $T_{J, j-1}$, there is no $T'$ with $T_{J, j-1} \subseteq T' \subsetneq T_{J, j}$ that contains more than a $(1 - \delta_0)$ fraction of $J_j$. 
\end{proof}

Apply the lemma for the rows as well to get the partition $I_1, \dots, I_{\ell_R}$ with the subspaces $T_{I, 1}, \dots, T_{I, \ell_R}$ for some $\ell_R \leq k$. 
Call $B_{i,j} := I_i \times J_j$ the $(i, j)$th block. Then we have $\ell_R \times \ell_C$ {\em blocks}.
By guessing, suppose the algorithm knows all block sizes and $OPT$. (There are $n^{O(k)}$ possibilities). 

Let $\eps_1 > 0$ be a constant to be determined. 
Call $B_{i, j}$ {\em clean} if $|I_{i}| |J_{j}| > OPT / \eps_1$, {\em dirty} if $|I_{i}| |J_{j}| < \eps_1 OPT$, and {\em half-clean} otherwise.
Since the algorithm guessed $OPT$ and the block sizes, the algorithm knows which blocks are clean/dirty/half-clean.

Now we describe our algorithm. Note that currently the algorithm has no information about the initial optimal solution $U$ and $W$. 
While describing the algorithm, we will also transform $U$ and $W$ such that 
(1) the transformed solution is still near-optimal, and (2) the algorithm acquires more information about them as it proceeds. 

We divide the algorithm into four phases. 
The goal of the first three phases is to construct a polynomial-size {\em alphabet set} $\Sigma_p \subseteq \R^k$ for each row $p \in [n_R]$ and 
$\Sigma_q \subseteq \R^k$ for each column $q \in [n_C]$ such that there is a near-optimal solution where each row and column gets its vector from its alphabet set;
this corresponds to {\em reducing to a finite-alphabet CSP}. Then the fourth phase obtains a PTAS similarly to finite-alphabet CSPs. 

\paragraph{Phase 1: Obtaining initial samples.} 
Let $\delta_1$ and $t_0$ be other constants to be determined. 
For each $j \in [\ell_C]$, {\em sample} $s_{j, 1}, \dots, s_{j, t_0}$ uniformly at random from $J_j$; throughout the proof, the sampling is only used to show the existence of good samples. The algorithm will enumerate all possible choices of samples. (As the algorithm does not know $J_j$, it cannot perform the actual sampling.)
Similarly, for each $i \in [\ell_R]$, sample $r_{i, 1}, \dots, r_{i, t_0}$ uniformly at random from $I_i$. Call them {\em initial samples}. We will call them {\em perfect} if the following four bad events do no occur.
\begin{enumerate}
\item $A_{r_{i,p}, s_{j,q}} \neq \langle u_{r_{i,p}}, w_{s_{j,q}} \rangle$ for some clean $B_{i, j}$ and $p, q \in [t_0]$; it happens with probability at most $\eps_1 k^2  t_0^2$. Therefore, by requiring that 
\begin{equation}
\eps_1 k^2 t_0^2 \leq \delta_1 / 4,
\label{eq:param2}
\end{equation}
this event happens with probability at most $\delta_1/4$.

\item For some $i \in [\ell_R]$, the sampled row vectors $\{ u_{r_{i', p}} \}_{i' \in [i], p \in [t_0]}$ do not contain a basis of $T_{I, i}$. We will ensure
\begin{equation}
t_0 \geq 2 k^2\log(1/\delta_1)/\delta_0, 
\label{eq:param3}
\end{equation}
which implies that this bad event happens with probability at most $\delta_1/4$; 
assuming that the currently sampled points span some subspace $T'$ with $T_{I, i-1} \subseteq T' \subsetneq T_{I, i}$, the fullness of $T_{I, i}$ ensures that the probability of a new sample from $I_i$ strictly increasing the dimension of the span is at least $\delta_0$, so that the probability that a group of $2k \log(1/\delta_1) / \delta_0 $ samples do not increase the dimension is at most 
\[
(1 - \delta_0)^{(2k \log(1/\delta_1) /\delta_0)} \leq  e^{-(2k \log(1/\delta_1))} = \delta_1^{2k}.
\]
The union bound over $k$ groups 
(note that $t_0$ is large enough to have $k$ separate groups even within a single $I_i$) shows that the overall failure probability is at most $k \delta_1^{2k} \leq \delta_1/4$ whenever $\delta_1$ is smaller than some universal threshold. 

\item Same as 2, but for columns. Similarly, the failure probability is at most $\delta_1/4$. 

\item There exists $j \in [\ell_C]$ and an affine subspace $T'$ of $T_{J, j}$ such that 
\[
\bigg|
\frac{  |\{ q \in J_j : w_q \in T' \} | }{|J_j|} -  \frac{|\{ q \in [t_0] : w_{s_{j, q}} \in T' \} |  }{t_0} 
\bigg|
> \tau
\]
for some constant $\tau$ to be determined. 
If it does not happen, say that the samples $\{ s_{j, q} \}_{j, q}$ are $\tau$-good for the layers $T_{J, 1}, \dots, T_{J, \ell_C}$. 
There are infinitely many affine subspaces, but their VC dimension is at most $k + 1$; 
if $S \subseteq \R^k$ is any set of $k + 2$ points, $S$ cannot be shattered by affine subspaces, because there exists $x \in S$ that can be expressed as an affine combination of $S' \subseteq S \setminus \{ x \}$, which means that no affine subspace can contain $S'$ and exclude $x$. 
Therefore, by the standard sampling guarantee for set systems with bounded VC dimensions~\cite{FM06}, by taking 
\begin{equation}
t_0 = \Omega \big( \frac{k}{\tau^2} \log (1/\tau) + \log(1 / \delta_1) \big),
\label{label:param11}
\end{equation}
one can ensure that the samples are $\tau$-good with probability at least $1 - \delta_1 / 4$.

\end{enumerate}

For the rest of Phase 1 and Phase 2, we condition on the event that the initial samples are perfect. 
Let $d_i = \dim(T_{I, i}) - \dim(T_{I, i - 1})$ with $d_1 := \dim(T_1)$. 
By guessing (at most $2^{O(k t_0 \log t_0)}$ choices) and reordering, the algorithm knows that for every $i \in [\ell_R]$, 
$\{ r_{i', p} \}_{i' \in [i], p \in [d_{i'}]}$ forms a basis of $T_{I, i}$. 
Order these $k$ rows $\{ r_{i', p} \}_{i' \in [\ell_R], p \in [d_{i'}]}$ lexicographically in terms of $(i', p)$ and call them $r_1, \dots, r_k \in [n_R]$. 
Let us denote by $(e_i)_{i \in [k]}$ the standard basis of $\mathbb{R}^k$, and let $C \in \R^{k \times k}$ be the invertible matrix such that $(u_{r_i})^T C = (e_i)^T$ for $i \in [k]$, 
and let $U \leftarrow U \times C$ and $W \leftarrow W \times (C^{-1})^T$. Then the new $U, W$ still remain an optimal solution. 
Note that this transformation also ensures that $T_{I, i} = \{ \spann(e_1, \dots, e_{\dim(T_{I, i})}) \}$. 

This means that the algorithm knows $u_{r_i}$ for every $i \in [k]$ and $T_{I, i}$ for every $i \in [\ell_R]$. 
Moreover, for any $i, j \in [\ell_R] \times [\ell_C]$ with clean $B_{i,j}$, 
using the fact that $A_{r_{i, p}, s_{j, q}} = \langle u_{r_{i, p}}, w_{s_{j, q}} \rangle$ for every $p, q \in [t_0]$, 
the algorithm can determine the projection of $w_{s_{j, q}}$'s to $T_{I, i}$ (denoted by $w_{s_{j, q}}|_{T_{I, i}}$). 
Let $T_{i,j}$ be $T_{J, j}$ projected to $T_{I, i}$. 
Using the fact that the column samples also span their respective subspaces, this also implies that the algorithm knows $T_{i,j}$ for all clean $B_{i,j}$. 

\paragraph{Phase 2: Handling clean hyperblocks.}
Let $\delta_2 > 0$ be a constant to be determined. 
Say $J_j$ and $J_{j+1}$ are {\em super-separated} if $|J_{j+1}|  < \delta_2 |J_{j}|$.
Call $J_i \cup J_{i+1} \cup \dots \cup J_{j}$ a {\em superlayer} 
if $J_i$ is super-separated from $J_{i-1}$ (or $J_i = 1$), 
$J_j$ is super-separated from $J_{j+1}$ (or $J_j = \ell_C$),
and no $J_{j'}$ and $J_{j'+1}$ are super-separated for $j' = i, \dots, j - 1$; 
in words, it is the union of maximally contiguous non-super-separated layers. 
Define similarly for rows. 
Then a {\em superblock} is the product of a row superlayer and a column superlayer. For instance, if rows and columns have $t_R$ and $t_C$ super-separations respectively, the number of superblocks will be $(t_R + 1) \times (t_C + 1)$. 
Call a superblock {\em clean} if all blocks there are clean, {\em dirty} if all blocks are dirty, and {\em half-clean} otherwise. 
We will ensure
\begin{equation}
\delta_2 < \eps_1^2
\label{eq:delta_s_1}
\end{equation}
so that if a superblock $\calB \subseteq [n_R] \times [n_C]$ is half-clean, then any superblock {\em dominated by it} is dirty; a superblock $\calB'$ is dominated by $\calB$ if for any $(i, j) \in \calB$ and $(i', j') \in \calB'$, $i < i'$ and $j < j'$. 

Consider the set of superblocks that are half-clean. Say two superblocks are {\em adjacent} if they share the set of rows or the set of columns. 
Finally, consider a {\em connected component} of half-clean superblocks with this definition of adjacency. 
For each such component, create a {\em half-clean hyperblock} whose row (column) set is the union of all the row (column) sets of its superblocks. 
It is clear from the construction that each row and column belongs to at most one half-clean hyperblock. (I.e., their row sets, possibly with the set of rows with all clean blocks and the set of rows with all dirty blocks, partition $[n_R]$). Use these partitions of rows and columns to create other hyperblocks too, and call them {\em clean} ({\em dirty}) if all blocks are clean (dirty). See Figure~\ref{F:blocks} for an example. 

\begin{figure}
    \centering
	\includegraphics[width=12cm]{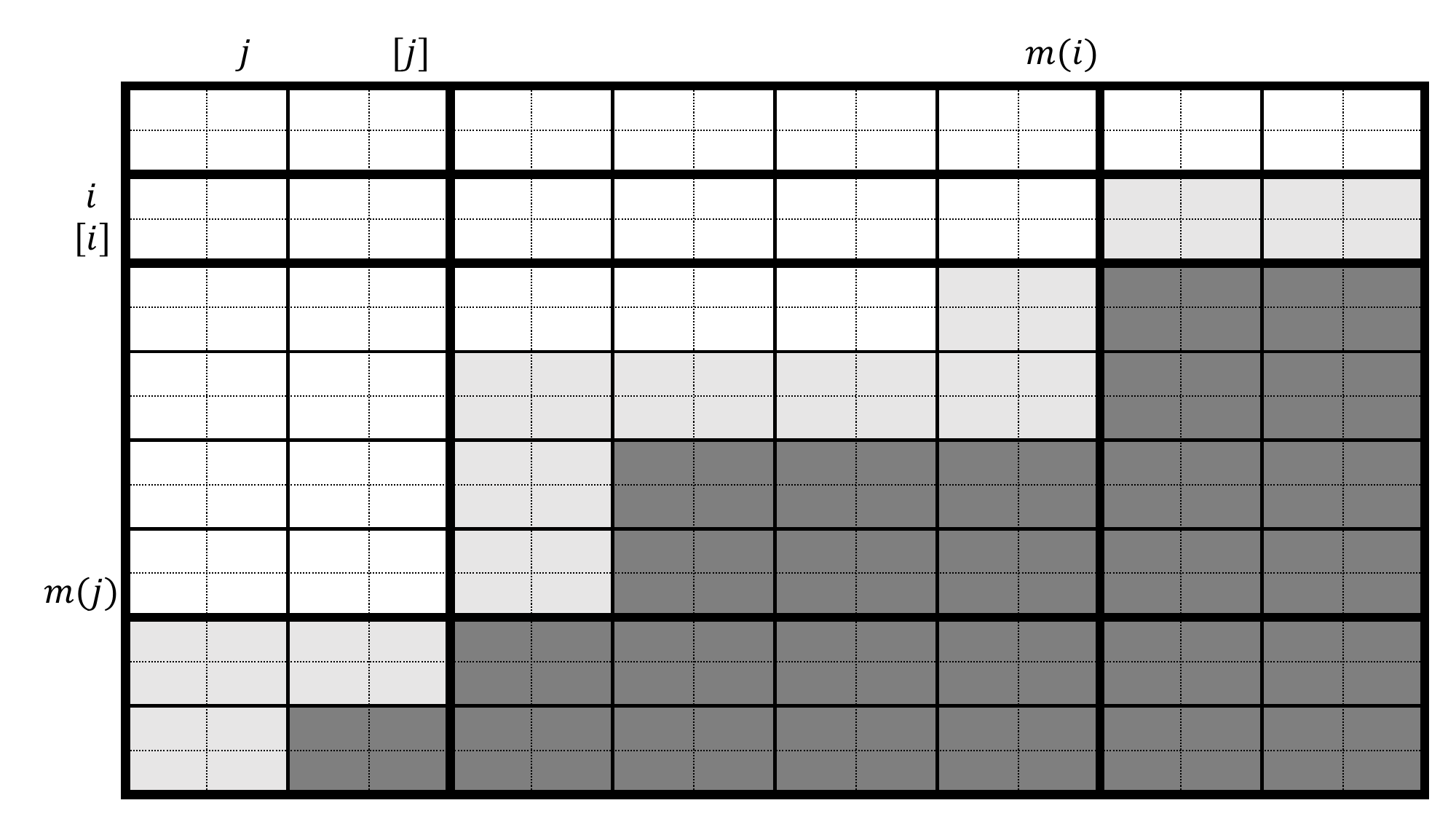}

    \caption{Blocks, superblocks, and hyperblocks are divided by dotted, thin solid, and thick solid lines respectively. The number of blocks, superblocks, and hyperblocks are $16 \times 16$,  $8 \times 8$, and $4 \times 3$ respectively. Clean/dirty/half-cleanness are drawn with respect to superblocks; empty cells are clean, half-shaded cells are half-clean, and full-shaded cells are dirty.}
    \label{F:blocks}
\end{figure}

Of course, a half-clean hyperblock may contain a clean superblock that might be larger than $OPT$. But the following simple claim shows that its size can be still bounded. 

\begin{claim}
Let $\mathfrak{B}$ be a half-clean hyperblock. Then its size is at most $OPT / (\delta_2^{O(k^2)} \eps_1^{O(k)})$.
\label{claim:half-clean}
\end{claim}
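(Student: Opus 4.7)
The plan is to reduce the bound on $|\mathfrak{B}|$ to controlling a single block size at the ``top-left'' corner of $\mathfrak{B}$, and then to exploit the connectivity of the constituent half-clean superblocks together with the super-separation property. First I would use the super-separation at superlayer boundaries together with Property (i) of the layers to show $|R_{\alpha_{s+1}}| \leq 2\delta_2 |R_{\alpha_s}|$ for consecutive row superlayers appearing in $\mathfrak{B}$: the first layer of a new superlayer is by definition at most $\delta_2$ times the last layer of the preceding one, and by Property (i) each superlayer sums to at most twice its first layer. Summing the resulting geometric series (and arguing symmetrically for columns) gives $|\mathfrak{B}| \leq 16\,a\,b$, where $a = |I_{\text{first of }R_{\alpha_1}}|$ and $b = |J_{\text{first of }C_{\beta_1}}|$ are the sizes of the outermost row and column layers of $\mathfrak{B}$.

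I would then bound $ab$ by case analysis on the status of the top-left superblock $(R_{\alpha_1}, C_{\beta_1})$. If it is itself half-clean, its smallest block has size at most $OPT/\eps_1$, and inflating by the intra-superlayer geometric factor $\delta_2^{-2(k-1)}$ yields $ab \leq OPT/(\eps_1 \delta_2^{O(k)})$. If it is entirely dirty, $ab < \eps_1 OPT$ is trivial. The substantive case is when it is entirely clean: then $(R_{\alpha_1}, C_{\beta_1})$ is not one of the half-clean superblocks defining $\mathfrak{B}$, yet $R_{\alpha_1}$ and $C_{\beta_1}$ both appear in $\mathfrak{B}$, so the connected component contains half-clean superblocks $(R_{\alpha_1}, C_{\beta_{t^*}})$ and $(R_{\alpha_{s^*}}, C_{\beta_1})$ with $s^*, t^* \geq 2$. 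Tracing a path in the adjacency graph joining these two, and using that $(R_{\alpha_1}, C_{\beta_1})$ cannot appear on the path, I would locate an inner half-clean $(R_{\alpha_{s^{**}}}, C_{\beta_{t^{**}}})$ with $s^{**}, t^{**} \geq 2$ that is column-adjacent (via $C_{\beta_{t^{**}}}$) to a row-$1$ member of the component and row-adjacent (via $R_{\alpha_{s^{**}}}$) to a column-$1$ member. Multiplying the upper bounds $\leq OPT/\eps_1$ on the smallest blocks of the two corner half-cleans and dividing by the lower bound $\geq \delta_2^{O(k)}\eps_1 OPT$ on the smallest block of the inner half-clean (obtained from its maximum block $\geq \eps_1 OPT$ and the intra-superlayer ratio) then bounds $ab$ linearly in $OPT$ after one final intra-superlayer inflation.

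The main obstacle is that for hyperblocks spanning many superlayers, the path in the adjacency graph joining the row-$1$ and column-$1$ corner half-cleans may be a long staircase for which no single inner half-clean satisfies both alignment conditions. Handling this will require iterating the two-half-clean-plus-inner argument along the staircase, replacing the corner half-cleans at each step by the neighbor along the path and picking up fresh inner half-cleans. Since $p, q \leq k$, the staircase has $O(k)$ transitions, and each transition costs an extra factor of $\delta_2^{O(k)}$ from intra-superlayer inflation and $\eps_1^{O(1)}$ from a new half-clean application. Accumulating over the staircase yields the stated exponents $\delta_2^{O(k^2)}$ and $\eps_1^{O(k)}$, and plugging the resulting bound on $ab$ back into $|\mathfrak{B}| \leq 16\,ab$ concludes the proof.
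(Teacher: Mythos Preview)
Your reduction to bounding $ab$ (the top-left block size) via the geometric sum over superlayers is fine, and the dirty and half-clean cases of your case split are correct. In the clean case your corner-plus-inner multiplicative trick is also sound \emph{when the inner half-clean superblock with the required double alignment exists}, and you are right that in general it need not. However, the iteration you sketch to handle the staircase is left vague: you do not specify which four block sizes appear in the running product at each step, why the intermediate row and column factors cancel pairwise across consecutive steps, or why each of the at most $O(k)$ bends contributes only $\delta_2^{-O(k)}\eps_1^{-O(1)}$ rather than something worse. This can be made rigorous, but as written it is a plan rather than a proof.

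The paper avoids all of this by working layer by layer instead of superblock by superblock. It simply lower-bounds the ratio $|J'_{j+1}|/|J'_j|$ for every pair of consecutive column layers in $\frakB$: if they lie in the same superlayer, non-super-separation gives a ratio $\ge \delta_2$; if they cross a superlayer boundary, connectivity of the half-clean component supplies a single row superlayer $R$ with both adjacent superblocks half-clean, and then comparing the bottom-right block of the left one (not clean, $\le OPT/\eps_1$) with the top-left block of the right one (not dirty, $\ge \eps_1 OPT$) while noting that their row layers differ by at most $\delta_2^{-k}$ gives $|J'_{j+1}| \ge \delta_2^{k}\eps_1^{2}\,|J'_j|$. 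Telescoping over at most $k$ layers yields $|J'_{d_C}| \ge |J'_1|\,\delta_2^{O(k^2)}\eps_1^{O(k)}$, and symmetrically for rows; since some block of $\frakB$ is not clean, the claim follows.

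Notice that the paper's across-boundary step is exactly one turn of your staircase, so your iterated argument would in the end reproduce this telescoping product; the paper just organizes it as a product over layers from the start, which eliminates the case analysis on the top-left superblock and the need to track a path.
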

\begin{proof}
Let $J'_1, \dots, J'_{d_C}$ and 
$I'_1, \dots, I'_{d_R}$ be the layers comprising $\mathfrak{B}$, ordered as usual from left to right and top to bottom (e.g, $\frakB = 
( \cup_{i \in [d_R]} I'_i ) \times ( \cup_{j \in [d_C]} J'_j)$).
We will show that $|J'_{d_C}| \geq |J'_1| \delta_2^{O(k^2)} (\eps_1)^k$. 
Applying this to the rows and observing that some blocks are half-clean or dirty implies the claim. 

If $J'_j$ and $J'_{j+1}$ are in the same superlayer, their sizes differ by a factor at most $1/\delta_2$. 
If $J'_j$ and $J'_{j+1}$ are in different superlayers, there exist adjacent half-clean superblocks $\calB = I'_i \times J'_j$ and $\calB' = I'_i \times J'_{j+1}$ within $\frakB$ who share the rows. (So $\calB$ is {\em on the left of} $\calB'$.) Let $B$ be the {\em bottom-right} (or smallest) block in $\calB$, and 
$B'$ be the {\em top-left} (or largest) block in $\calB'$. Since both $\calB$ and $\calB'$ are half-clean, $B$ is dirty or half-clean and $B'$ is clean or half-clean, which means that the size of $B'$ is at least $\eps_1^2$ times the size of $B$. Since their superblocks share the set of rows, the number of rows of $B$ and $B'$ differ by a factor at most $1/ \delta_2^k$, which means that the number of columns of $B'$ is at least $\delta_2^k \eps_1^2$ times that of $B$. 
\end{proof}

For $i \in [\ell_R]$, let $m(i) \in [\ell_C]$ be the largest $j$ such that $B_{i, j}$ belongs to a clean hyperblock ($0$ if it belongs to no clean hyperblock).
Also, given $i \in [\ell_R]$, let $[i] \in [\ell_R]$ be the largest index such that $m(i) = m([i])$. 
For example, if $i, i' \in [\ell_R]$ belongs to the same hyperblock, then $[i] = [i']$ and $m(i) = m(i')$. 
Define $m(j)$ and $[j]$ for $j \in [\ell_C]$ symmetrically. 
Note that, by guessing the cleanness and size of each block, the algorithm already knows $m(i)$ for each row and column.

For each row $p \in [n_R]$ and $i \in [\ell_R]$, we let the column samples $\{ s_{j, q} \}_{j \in [m(i)], q \in [t_0]}$ vote for the projection of $u_p$ to $T_{[i], m(i)}$ and call the winner $u_{p, i}$. Formally, 
\begin{equation*}
u_{p, i} := \underset{u \in T_{[i], m(i)}}{\argmin}
\sum_{j = 1}^{m(i)} | \{ q \in [t_0] : A_{p, s_{j, q}} \neq \langle u, w_{s_{j, q}}|_{T_{[i], j}} \rangle \} | \cdot (|J_{j}| / t_0), 
\end{equation*}
where ties are broken arbitrarily. 
It can be computed in time $n^{O(k)}$, by guessing a subset of at most $k$ samples $X$ that satisfy $A_{p, s_{j, q}} = \langle u, w_{s_{j, q}} |_{T_{[i], j}} \rangle$ for all $s_{j, q} \in X$, and trying $u$ that is a solution of the resulting system of linear equations (at most $k$ linearly independent linear equations will uniquely determine $u$).
Since the definition involves only $m(i)$ and $[i]$ instead of $i$, if $i$ and $i'$ belong to the same hyperblock, then $u_{p, i} = u_{p, i'}$, so algorithmically one can compute only one row from each hyperblock. 
(For simplicity, we still treat them separately.)

Having computed $u_{p, i}$'s, the algorithm's ideal situation would be to have $u_{p, i} = u_p|_{T_{[i], m(i)}}$ if $p \in I_i$. 
Of course, this cannot happen always, but we will conduct the following transformation that forces it. 

\begin{itemize}
\item For every $p \in [n_R]$, let $i \in [\ell_R]$ such that $p \in I_i$. 
\begin{itemize}
\item If $u_{p,i}$ is indeed the projection of $u_p$ to $T_{[i], m(i)}$, then do not change anything. 
\item Otherwise, say $p$ is {\em mistaken} and let $u_p \leftarrow u_{p, i}$. By doing this, we (conservatively) make every entry $A_{p, q}$ with $q \in J_j$ and $j > m(i)$ incorrect.
\end{itemize}
\end{itemize}

The heart of this phase is to show that this transformation of $U$ ensures that the solution pair $(U, W)$ is still near-optimal. 
In order to show this, we use the following crucial lemma on the voting. It will be also used in Phase 4. 

\begin{restatable}{lemma}{lemmin}\label{lem:min}
Given parameters $\delta_0$, $\tau < \delta_0 / 100$, $\eps > 0$, $k, \ell \in \N$, there exists $t \leq \poly(k \ell / (\eps \delta_0)) \in \N$ such that the following is true. 
Let $x^*, u_1, \dots, u_n \in \R^k$ where $[n]$ is partitioned into $N_1, \dots, N_{\ell}$ with the associated subspaces $T_1, \dots, T_{\ell} = \R^k$ that satisfy the layer conditions (with parameter $\delta_0$). 
Suppose that the following information is given.
\begin{itemize}
\item $a_p \in \R$ is given for every $p \in [n]$. 
\item $n_i := |N_i|$ for every $i \in [\ell]$. 
\item For each $j \in [\ell]$, $t$ uniformly random samples $s_{j,1}, \dots, s_{j, t} \in N_j$ along with $u_{s_{j,1}}, \dots, u_{s_{j,t}}$.
\end{itemize}
Given the information, $x \in \R^k$ is chosen to be the vector minimizing $\sum_{j=1}^{\ell} (n_j/t) \cdot | \{ p \in [t] : a_{s_{j,p}} \neq \langle x, u_{s_{j,p}} \rangle \} |$.  (Ties are broken arbitrarily.) 
Suppose that the samples are $\tau$-good for $T_1, \dots, T_{\ell}$ with probability at least $1/2$. 
Then, conditioned on the event that they are $\tau$-good, the following holds. 
 
\begin{itemize}
\item $\Pr[x \neq x^*] \leq O \big( \frac{OPT}{\delta_0 n_{\ell}} \big)$, where $OPT := | \{ p \in [n] : a_p \neq \langle x^*, u_p \rangle \} |$. 
\item $\E[| \{ p \in [n] : a_i \neq \langle x, u_p \rangle \}|] \leq (1+ \eps + O(\tau / \delta_0)) OPT$.
\end{itemize}
\end{restatable}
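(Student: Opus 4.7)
The plan is to exploit the \emph{fullness} condition (iii) of the layers together with the $\tau$-goodness hypothesis via a geometric decomposition built around the hyperplane $H_x := \{u \in \R^k : \langle x - x^*, u\rangle = 0\}$. For any candidate $x \neq x^*$, $H_x$ is a proper subspace of $\R^k = T_\ell$, so taking $j^* := \min\{j : T_j \not\subseteq H_x\}$, the chain $T_{j^*-1} \subseteq H_x \cap T_{j^*} \subsetneq T_{j^*}$ together with fullness forces $|N_{j^*} \setminus H_x| \geq \delta_0 n_{j^*} \geq \delta_0 n_\ell$. Hence the global count $D(x) := |\{p : u_p \notin H_x\}|$ is at least $\delta_0 n_\ell$ whenever $x \neq x^*$. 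Since $H_x \cap T_j$ is an affine subspace of $T_j$ for every $j$, the $\tau$-goodness hypothesis transfers this bound to the empirical quantity $\hat D(x) := \sum_j (n_j/t)\,|\hat N_j \setminus H_x|$, giving $\hat D(x) \geq (\delta_0 - \tau)\,n_\ell$ just from the $j^*$-th layer (the other layers contribute nonnegatively).

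Next I set up a sign decomposition of the error count. Splitting indices by whether $u_p \in H_x$ and whether $a_p = \langle x^*, u_p\rangle$, a direct case analysis yields $g(x) - g(x^*) = A(x) - B(x)$, where $A(x)$ counts the indices where $x^*$ is correct and $u_p \notin H_x$ (so $x$ is automatically wrong) and $B(x)$ counts the indices where $u_p \notin H_x$, $x^*$ is wrong, and $x$ happens to be correct; in particular $A(x) \geq D(x) - OPT$, $B(x) \leq OPT$, and the identical identity and inequalities hold for the weighted sample versions $\hat A, \hat B$. If the chosen argmin $x$ differs from $x^*$, then $\hat A(x) - \hat B(x) = \hat g(x) - \hat g(x^*) \leq 0$ combined with $\hat A \geq \hat D - \widehat{OPT}$ and $\hat B \leq \widehat{OPT}$ yields $\hat D(x) \leq 2\widehat{OPT}$. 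With the previous paragraph this forces $\widehat{OPT} \geq (\delta_0 - \tau)\,n_\ell / 2$; since $\E[\widehat{OPT}] = OPT$ unconditionally and the $\tau$-good event has probability at least $1/2$, we have $\E[\widehat{OPT}\mid \tau\text{-good}] \leq 2\,OPT$, and Markov's inequality delivers the first bullet $\Pr[x \neq x^* \mid \tau\text{-good}] \leq O\bigl(OPT/(\delta_0 n_\ell)\bigr)$.

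For the expected-error bullet I write $\E[g(x)] = OPT + \E[(g(x) - g(x^*))\,\mathbb{1}[x \neq x^*]]$ and use $g(x) - g(x^*) \leq D(x) \leq \hat D(x) + \tau n \leq 2\widehat{OPT} + \tau n$ on the event $\{x \neq x^*\}$. Property (i) of the layers bounds $n / n_\ell \leq (2k\delta_0)^{1-\ell}$ by a constant depending only on $k$ and $\delta_0$, so the additive $\tau n \cdot \Pr[x \neq x^*]$ piece folds into $O(\tau/\delta_0)\cdot OPT$ via the first bullet. The remaining summand $2\,\E[\widehat{OPT}\,\mathbb{1}[x \neq x^*]]$ requires going strictly beyond Markov: the variance bound $\mathrm{Var}(\widehat{OPT}) \leq n \cdot OPT / t$ shrinks below $\eps^2 OPT^2$ once $t \geq \poly(k\ell/(\eps\delta_0))$ is sufficiently large (again using $n/n_\ell = O_{k,\delta_0}(1)$), so $\E[\widehat{OPT}^2] \leq (1+\eps)\,OPT^2$, and a Cauchy--Schwarz estimate $\E[\widehat{OPT}\,\mathbb{1}[x \neq x^*]] \leq \sqrt{\E[\widehat{OPT}^2]\,\Pr[x \neq x^*]}$ converts the first bullet into an $\eps\cdot OPT$ contribution in the regime $OPT \leq O(\delta_0 n_\ell)$; in the opposite regime the Chebyshev concentration of $\widehat{OPT}$ about $OPT$ already gives $2\widehat{OPT} + \tau n \leq (1+\eps)\,OPT$ deterministically. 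The main technical difficulty is precisely this last step: extracting the multiplicative $1+\eps$ factor requires concentration strictly better than Markov's inequality, which is affordable only because property (i) of the layers keeps the ratio $n/n_\ell$ bounded by a constant, taming the variance-to-mean ratio of $\widehat{OPT}$.
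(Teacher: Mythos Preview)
Your first bullet is correct and in fact cleaner than the paper's argument: on the $\tau$-good event you show the deterministic implication $\{x\neq x^*\}\Rightarrow\{\widehat{OPT}\geq (\delta_0-\tau)n_\ell/2\}$, and a single Markov inequality on $\widehat{OPT}$ (whose conditional mean is at most $2\,OPT$) finishes. The paper instead passes through Chernoff bounds on the events $\gamma_i(W_i)>0.1\delta_0 n_i$.

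The second bullet has a genuine gap. You twice invoke ``$n/n_\ell \leq (2k\delta_0)^{1-\ell}=O_{k,\delta_0}(1)$'', but layer condition (i) reads $n_j\leq (2k\delta_0)\,n_{j-1}$, which gives $n_1/n_\ell\geq (2k\delta_0)^{1-\ell}$ --- a \emph{lower} bound, not an upper bound. Nothing in the layer axioms caps $n/n_\ell$; one may have $n_\ell=1$ with $n$ arbitrarily large. Both uses of the false bound collapse: the term $\tau n\cdot\Pr[x\neq x^*]$ is $O(\tau/\delta_0)\cdot OPT\cdot(n/n_\ell)$, which is not $O(\tau/\delta_0)\cdot OPT$; and $\mathrm{Var}(\widehat{OPT})\leq n\cdot OPT/t$ cannot be pushed below $\eps^2 OPT^2$ with $t=\poly(k\ell/(\eps\delta_0))$ because $n/OPT$ is unbounded. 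The closing sentence is also wrong as stated: even under perfect concentration $\widehat{OPT}\approx OPT$, the bound $2\widehat{OPT}+\tau n$ already exceeds $(1+\eps)\,OPT$ by a factor of roughly $2$, so it cannot give the additive $\eps\,OPT$ you need in the ``opposite regime''.

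The paper avoids the global scale $n$ entirely. It works at the scale $n_f$ of the first divergent layer $f=j^*$, and to pass from $\hat A(x)\leq\hat B(x)$ back to $A(x)-B(x)$ it introduces an auxiliary random deviation $\tau'$ over the sets $W_{i,y}=\{q\in W_i:\langle y,u_q\rangle=a_q\}$; these are \emph{not} affine subspaces, so $\tau$-goodness alone says nothing about them, and a separate VC-dimension argument is used to make $\E[\tau']$ small. It then splits into two regimes on whether $OPT\gtrless 0.01\delta_0 n_f$: in the large regime the excess is $O((\tau+\tau')n_f)$ with $n_f\leq O(OPT/\delta_0)$; in the small regime it bounds $\Pr[f=i]$ layer by layer via Chernoff and uses the crude excess $2n_i$. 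Your framework can be repaired along these lines --- replace $\tau n$ by $2\tau n_{j^*}$ using $D_j(x)=0$ for $j<j^*$, introduce a $\tau'$ controlling the wrong-set deviations, and split on $OPT/n_{j^*}$ --- but as written the second bullet does not go through.
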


For each $p \in [n_R]$ and $i \in [\ell_R]$, we apply Lemma~\ref{lem:min} with 
$\eps \leftarrow \eps_0 / 16$, 
$\ell \leftarrow m(i)$, 
$N_j \leftarrow J_j$ (so that $[n] \leftarrow J_1 \cup \dots \cup J_{m(i)})$), 
$T_j \leftarrow T_{[i], j}$, 
$\R^k \leftarrow T_{[i], m(i)}$, 
$x^* \leftarrow u_{p}|_{T_{[i], m(i)}}$, and
$u_{s_j, q} \leftarrow w_{s_j, q}|_{T_{[i], j}}$. 
(Since $B_{[i], j}$ is clean for every $j \in [m(i)]$, $w_{s_j, q}|_{T_{[i], j}}$ is known by the algorithm.)
The layer conditions for $T_{[i],j}$ will be satisfied
since $T_{J, j}$'s satisfy the layer conditions, and $T_{[i],j}$'s are their projections to $T_{I, [i]}$.

Note that the lemma assumes we have the correct values of $x^*$ and $u_{s_j, q}$'s, which happens if the initial samples are perfect. 
Therefore, given that the initial samples are perfect, which happens with probability at least $1 - \delta_1$, 
for each $p \in [n_R]$ and $i \in [\ell_C]$ with $p \in I_i$, 
the probability that $p$ is mistaken is at most 
$O(OPT_p/ (\delta_0 |J_{m(i)}|)) / (1 - \delta_1) = O(OPT_p/( \delta_0 |J_{m(i)}|))$, where $OPT_p$ denotes the number of errors in row $p$ before the transformation, which makes the expected number of additional errors in half-clean and dirty hyperblocks due to the transformation bounded by 
\begin{align*}
& \sum_{i \in [\ell_R]} \sum_{p \in I_i} O\bigg(\frac{OPT_p }{ \delta_0 |J_{m(i)}|} \bigg) \bigg( \sum_{j' = m(i)+1}^{\ell_C} |J_{j'}| \bigg)
\leq \sum_{i \in [\ell_R]} \sum_{p \in I_i} O\bigg( \frac{OPT_p }{ \delta_0 |J_{m(i)}|} \delta_2 |J_{m(i)}| \bigg) \\
\leq & \sum_{p \in [n_R]} OPT_p O(\delta_2 / \delta_0) \leq O((\delta_2/\delta_0) OPT),
\end{align*}
where the first inequality follows from the fact that $J_{m(i)}$ and $J_{m(i) + 1}$ are super-separated. (They are in different hyperblocks.)

The second guarantee of the lemma ensures that the total expected errors from clean hyperblocks is at most $(1 + \eps_0/16 + O(\tau / \delta_0)) / (1 - \delta_1)$ times the number of original errors in those blocks. 
Therefore, given that the initial samples are perfect, the expected error of the transformed solution $U, W$ is 
$(1+\eps_0/16 + O(\tau / \delta_0 + \delta_2 / \delta_0)) (1 + 2\delta_1) OPT$. 

Now we do the almost same for columns to compute $w_{q, j}$ for each $q \in [n_C]$ and $j \in [\ell_C]$. 
Sample fresh rows from $I_i$'s, and for each $q \in [n_C]$ and $j \in [\ell_C]$, let them vote for $w_{q, j}|_{T_{m(j), [j]}}$. 
The only difference, which is indeed a simplification, is that we do not worry about the fresh samples being perfect, because when $p \in I_i$ is sampled, then the algorithm can use $u_{p, i} = u_p|_{T_{[i], m(i)}}$ to get the correct projection of $u_p$ to $T_{i, j}$ whenever $B_{i, j}$ is clean. 
This will result in transforming $W$, but the same analysis shows that the 
expected error is at most $(1+\eps_0/8 + O(\tau / \delta_0 + \delta_2 / \delta_0)) (1 + 4\delta_1) OPT$.
Since the algorithm actually tries all possible choices of samples and other relevant information, this analysis implies that for some choice of samples and correct guesses, 
the algorithm computed $u_{p, i}$ and $w_{q, j}$ with the guarantees above.

\paragraph{Phase 3:  Handling half-clean hyperblocks.} 

In Phase 1 and 2, the algorithm constructed a good (partial) vector for each row and column; 
if each $p \in [n_R]$ chooses $u_{p, i} \in T_{[i], m(i)}$ with $p \in I_i$ and 
if each $q \in [n_C]$ chooses $w_{q, j} \in T_{m(j), [j]}$ with $q \in J_j$,
the resulting solution gives a good approximation in clean hyperblocks. 

Consider an entry $(p, q)$ that belongs to a clean hyperblock with $p \in I_i$ and $q \in J_j$,
which means that $[i] \leq m(j)$ and $[j] \leq m(i)$. 
When we {\em extend} $u_{p, i}$ from $T_{[i], m(i)}$ to $T_{I, [i]}$ while ensuring that the projection to $T_{[i], m(i)}$ is preserved, 
and extend $w_{q, j}$ from $T_{m(j), [j]}$ to $T_{J, [j]}$ while ensuring that the projection to $T_{m(j), [j]}$ is preserved, 
we claim that the inner product between them does not depend on the extensions. 
\begin{claim}
If $u_{p, i} \in T_{I, {[i]}}$ and 
$w_{q, j} \in T_{J, [j]}$ with $[i] \leq m(j)$ and $[j] \leq m(i)$, then
$
\langle u_{p, i}, w_{q, j} \rangle 
= \langle u_{p, i}|_{T_{[i], m(i)}}, w_{q, j}|_{T_{m(j), [j]} } \rangle.
$
\label{claim:clean}
\end{claim}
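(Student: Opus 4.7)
The plan is to reduce the claim to routine bookkeeping using the self-adjointness of orthogonal projection. Writing $P_T$ for orthogonal projection onto the subspace $T$, I will use two elementary facts repeatedly: (a) $\langle P_T u, v\rangle = \langle u, P_T v\rangle$; and (b) if $u \in T$, then $P_T u = u$.

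The first step is to unpack the two hypotheses as subspace containments. The assumption $[i] \le m(j)$ gives $T_{I,[i]} \subseteq T_{I,m(j)}$, hence $u_{p,i} \in T_{I,m(j)}$ and also $u_{p,i}|_{T_{[i],m(i)}} \in T_{I,m(j)}$. The assumption $[j] \le m(i)$ gives $T_{J,[j]} \subseteq T_{J,m(i)}$, so after projecting onto $T_{I,[i]}$ the image lies inside $T_{[i], m(i)}$ (which is by definition the projection of $T_{J,m(i)}$ onto $T_{I,[i]}$); in particular $P_{T_{I,[i]}} w_{q,j} \in T_{[i], m(i)}$. Likewise, since $w_{q,j} \in T_{J,[j]}$ and $T_{m(j),[j]}$ is by definition the projection of $T_{J,[j]}$ onto $T_{I,m(j)}$, one verifies directly that $w_{q,j}|_{T_{m(j),[j]}} = P_{T_{I,m(j)}}\, w_{q,j}$.

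The second step is a short telescoping computation. Starting from $\langle u_{p,i}, w_{q,j}\rangle$, apply (a) and (b) with $T = T_{I,[i]}$ and $u = u_{p,i}$ to push the projection onto $w_{q,j}$, yielding $\langle u_{p,i}, P_{T_{I,[i]}} w_{q,j}\rangle$. Since the right factor now lies in $T_{[i],m(i)}$ (by the first step), another application of (a) and (b) with $T = T_{[i],m(i)}$ lets us introduce $P_{T_{[i],m(i)}}$ on the left factor to get $u_{p,i}|_{T_{[i],m(i)}}$, after which $P_{T_{I,[i]}}$ can be stripped from the right factor because $u_{p,i}|_{T_{[i],m(i)}}$ already lies in $T_{I,[i]}$; this produces $\langle u_{p,i}|_{T_{[i],m(i)}}, w_{q,j}\rangle$. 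Finally, using $u_{p,i}|_{T_{[i],m(i)}} \in T_{I,m(j)}$, one more invocation of (a) and (b) reintroduces $P_{T_{I,m(j)}}$ on $w_{q,j}$, which by the first step is exactly $w_{q,j}|_{T_{m(j),[j]}}$.

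I do not expect a real obstacle: the claim is purely linear-algebraic. The only point requiring care is tracking which subspace each intermediate vector lives in so that (a) and (b) can be legitimately applied. In particular, the hypothesis $[j] \le m(i)$ is used precisely to guarantee that $P_{T_{I,[i]}} w_{q,j}$ lands inside $T_{[i], m(i)}$, rather than only inside the a priori larger subspace $T_{[i],[j]}$; without this containment the central projection step collapsing $u_{p,i}$ to $u_{p,i}|_{T_{[i],m(i)}}$ would fail.
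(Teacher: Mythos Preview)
Your proposal is correct and follows essentially the same approach as the paper: both arguments use $[j]\le m(i)$ to see that the $T_{I,[i]}$-projection of $w_{q,j}$ lands in $T_{[i],m(i)}$ (allowing $u_{p,i}$ to be replaced by $u_{p,i}|_{T_{[i],m(i)}}$), and then use $[i]\le m(j)$ together with the identity $w_{q,j}|_{T_{m(j),[j]}} = P_{T_{I,m(j)}} w_{q,j}$ to finish. The only cosmetic difference is that the paper phrases each step as an orthogonal decomposition (splitting off a complement and showing it contributes zero), whereas you push projections across the inner product via self-adjointness; these are two ways of writing the same computation.
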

\begin{proof}
For $T \subseteq S \subseteq \R^k$, let $S/T := \{ v \in S : \langle v, u \rangle = 0 \mbox{ for all } v \in T \}$. 
First we write the inner product as 
\[
\langle u_{p, i}, w_{q, j} \rangle = 
\langle u_{p, i}|_{T_{[i], m(i)}}, w_{q, j} \rangle +
\langle u_{p, i}|_{T_{I, [i]} / T_{[i], m(i)}}, w_{q, j} \rangle.
\]
By definition, $u_{p, i}|_{T_{I, [i]} / T_{[i], m(i)}}$ is a vector in $T_{I, [i]}$ orthogonal to every vector in $T_{[i], m(i)}$. 
Since 
$T_{[i], m(i)}$ is the projection of $T_{J, m(i)}$ to $T_{I, [i]}$ and $w_{q, j} \in T_{J, [j]}$ with $[j] \leq m(i)$, the projection of $w_{q, j}$ to $T_{I, [i]}$ belongs to $T_{[i], m(i)}$ as well, which means that $\langle u_{p, i}|_{T_{I, [i]} / (T_{[i], m(i)})}, w_{q, j} \rangle = 0$. So the inner product can be further written as 
\[
\langle u_{p, i}, w_{q, j} \rangle = 
\langle u_{p, i}|_{T_{[i], m(i)}}, w_{q, j} \rangle 
= 
\langle u_{p, i}|_{T_{[i], m(i)}}, w_{q, j}|_{T_{m(j), [j]} } \rangle +
\langle u_{p, i}|_{T_{[i], m(i)}}, w_{q, j}|_{\R^k / T_{m(j), [j]} } \rangle. 
\]
Since 
$T_{m(j), [j]}$ is the projection of $T_{J, [j]}$ to $T_{I, m(j)}$ and $w_{q, j} \in T_{J, [j]}$, it means that 
$w_{q, j}|_{\R^k / T_{m(j), [j]}}$ is indeed equal to 
$w_{q, j}|_{\R^k / T_{I, m(j)}}$ and orthogonal to every vector in $T_{I, m(j)}$. 
Since $u_{p, i}|_{T_{[i], m(i)}} \in T_{I, [i]}$ and $[i] \leq m(j)$, 
we have $\langle u_{p, i}|_{T_{[i], m(i)}}, w_{q, j}|_{\R^k / T_{m(j), [j]} } \rangle = 0$ as well. 
\end{proof}

Now, the algorithm extends this alphabet set for each row and column so that there exists a good solution from the alphabets 
that gives a good approximation in half-clean hyperblocks as well. 
As dirty hyperblocks will be very small compared to $OPT$ and can be totally ignored,
this phase ensures the existence of an overall good approximation solution from the alphabets. 
We use the following result for the additive PTAS for \rlra. 
Recall that in this problem, in addition to $A \in \R^{n_R \times n_C}$, we are additionally given a pair of projection constraints, that is, matrices $P_R$ in $\mathbb{R}^{t_R \times k}$ and $P_C$ in $\mathbb{R}^{t_C \times k}$ as well as real vectors $(a_p)_{p \in [n_R]}$ in $\mathbb{R}^{t_R}$ and $(b_q)_{q \in [n_C]}$ in $\mathbb{R}^{t_c}$, and we require that the matrices $U$ and $W$ also satisfy $P_R(u_p)=a_p$ and $P_C(w_q)=b_q$ for all $p$ and $q$.

\begin{proposition}
For any constants $\eps > 0$ and $k \in \N$, there exist $t = 2^{(1/\eps)^{\poly(k)}}$, $X = n^{(1/\eps)^{\poly(k)}}$ and an algorithm running in time $\poly(X)$ that performs the following task. Given an instance of 
\rlra consisting of $A \in \R^{n_R \times n_C}$, $k \in \N$, and linear constraints
$P_R(u_p)=a_p$ for each row $p \in [n_R]$ and 
$P_C(w_q)=b_q$ for each column $q \in [n_C]$, 
the algorithm outputs $\Sigma_{p, x} \subseteq \R^k$ for each row $p \in [n_R]$, $x \in [X]$ and 
$\Sigma_{q, x} \subseteq \R^k$ for each column $q \in [n_C]$, $x \in X$ 
such that
\begin{enumerate}[(1)]
\item every vector in $\Sigma_{p, x}$ satisfies the linear constraints for row $p$, 
\item every vector in $\Sigma_{q, x}$ satisfies the linear constraints for column $q$, and
\item $|\Sigma_{p, x}|, |\Sigma_{q, x}| \leq t$ for every $x \in [X]$.
\end{enumerate}
For any $S_R \subseteq [n_R]$ and $S_C \subseteq [n_C]$, there exists $x \in [X]$ and $u_p \in \Sigma_{p, x}$ for every row $p \in S_R$ and $w_q \in \Sigma_{q, x}$ for every column $q \in S_C$
such that  
$| \{ (p, q) \in S_R \times S_C : \langle u_p, w_q \rangle \neq A_{p, q} \} | \leq OPT(A_{S_R, S_C}, k) + \eps |S_R| |S_C|$, 
where 
$OPT(A_{S_R, S_C}, k)$ denotes the optimal value for \rlra restricted to the submatrix of $A$ induced by $S_R \times S_C$ with the same linear constraints. 
\label{prop:maxptas-blackbox}
\end{proposition}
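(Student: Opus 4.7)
The plan is to essentially repackage the algorithm of Figure~\ref{A:mainalgo} as an alphabet-producing subroutine, and invoke Proposition~\ref{P:supercore} as a black box to obtain the near-optimality guarantee for an arbitrary submatrix. Concretely, for the given input I would run Steps~\ref{step:enumeration}(\ref{step:solution0})--\ref{step:enumeration}(\ref{step:alphabet2}) of Figure~\ref{A:mainalgo} on $A$ with the supplied projection constraints $(P_R, a_p)$ and $(P_C, b_q)$, but I would \emph{not} solve the final Max-$2$-CSP (Step~\ref{step:enumeration}(\ref{step:densecsp})). Each enumerated supercore $\Sigma = (R_S, C_S, E, M_{R_S}, M_{C_S})$ of size at most $\kappa_1 = (k/\eps)^{\poly(k)}$ for which the system of polynomial equations $\Xi_1$ in Step~\ref{step:enumeration}(\ref{step:solution0}) is feasible becomes one index $x$ in the output. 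The alphabet $\Sigma_{p,x}$ for a row $p \notin R_S$ is the set of vectors produced in Step~\ref{step:enumeration}(\ref{step:alphabet1}); for a row $p \in R_S$ we set $\Sigma_{p,x}$ to be the singleton containing the value of $u_p$ in the fixed solution of $\Xi_1$. Column alphabets are defined symmetrically via Step~\ref{step:enumeration}(\ref{step:alphabet2}).

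Properties (1) and (2) are immediate from the construction: the projection constraints are included explicitly in the defining systems both in Step~\ref{step:enumeration}(\ref{step:solution0}) (for the fixed rows/columns of the supercore) and in Step~\ref{step:enumeration}(\ref{step:alphabet1})--\ref{step:enumeration}(\ref{step:alphabet2}) (for every other row/column), so every vector extracted by the real-algebraic solver satisfies them. For property (3), each alphabet vector for a row $p \notin R_S$ is indexed by a subset $\Phi \subseteq M_{R_S} \cup M_{C_S}$ of size at most $2k$ on a ground set of size at most $2\kappa_1$; the number of such $\Phi$ is $(2\kappa_1)^{O(k)} = 2^{(1/\eps)^{\poly(k)}}$, yielding the required $t$. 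The index range $X$ counts the enumerated supercores: $n^{O(\kappa_1)}$ choices for $R_S, C_S$, $2^{\kappa_1^2}$ choices for $E$, and $2^{\kappa_1^{O(k)}}$ choices for each matroid, giving $X = n^{(1/\eps)^{\poly(k)}}$ as required.

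The near-optimality claim for an arbitrary submatrix $S_R \times S_C$ is where Proposition~\ref{P:supercore} is plugged in directly. Applying it with $A' = A_{S_R, S_C}$ (and the projection constraints restricted accordingly) yields a supercore of size at most $\kappa_1$ whose associated solution to $\Xi_1$ and whose alphabets contain an $(\eps/2)$-additive approximation on $A_{S_R, S_C}$. That supercore is necessarily among those enumerated in our algorithm over the full matrix (since its rows and columns are a fortiori rows and columns of $A$), so some index $x \in [X]$ realises exactly these alphabets. The additive approximation Near-OPT of Proposition~\ref{P:supercore} has all its vectors drawn from $\Sigma_{p,x}$ and $\Sigma_{q,x}$ and achieves objective value at most $\mathrm{OPT}(A_{S_R,S_C},k) + (\eps/2)|S_R||S_C| \le \mathrm{OPT}(A_{S_R,S_C},k) + \eps|S_R||S_C|$.

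The main conceptual subtlety, and the step I expect to need the most care, is the observation that the enumeration of supercores is performed over the entire index sets $[n_R]$ and $[n_C]$, yet Proposition~\ref{P:supercore} supplies a supercore internal to $S_R \cup S_C$; one must check that this supercore is nevertheless one of the enumerated supercores for the full matrix, and that the alphabets computed in Steps~\ref{step:enumeration}(\ref{step:alphabet1})--\ref{step:enumeration}(\ref{step:alphabet2}) for a row $p \in S_R$ do not depend on anything outside $S_R \cup S_C$ beyond what is recorded in the supercore itself. This holds because the alphabet construction only inspects the matroids $M_{R_S}, M_{C_S}$, the solution to $\Xi_1$, and the matrix entries indexed by the supercore plus the single row/column being processed. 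Beyond this bookkeeping, the complexity bound on $\poly(X)$ runtime follows from Theorem~\ref{thm:ETR} applied to the constant-sized systems, exactly as in the proof of Theorem~\ref{thm:additive}.
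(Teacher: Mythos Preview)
Your proposal is correct and follows exactly the same approach as the paper: the paper's own proof is a two-sentence remark that the proposition ``directly follows from Proposition~\ref{P:supercore}'', with $X$ indexing the enumerated supercores $(R_S,C_S,E,M_{R_S},M_{C_S})$ and $t$ bounding the number of subsets $\Phi\subseteq M_{R_S}\cup M_{C_S}$. You have simply unpacked this in full detail, including the bookkeeping that the supercore for $A_{S_R,S_C}$ is among those enumerated over the full matrix and that alphabet construction is local to the supercore data---points the paper leaves implicit.
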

\begin{proof}
It directly follows from Proposition~\ref{P:supercore}, which shows the correctness of 
the algorithm in Figure~\ref{A:mainalgo} for \rlra. Here $X$ denotes the number of choices for the supercore, determined by $R_S, C_S, E, M_{R_S}, M_{C_S}$, and $t$ denotes the size of the alphabet, determined by $\Phi \subseteq M_{R_S} \cup M_{C_S}$. 
\end{proof}

Given this proposition, 
for each 
half-clean hyperblock $\frakB = \frakI \times \frakJ$, choose an arbitrary $i \in [\ell_R]$ and $j \in [\ell_C]$ 
such that $I_i \subseteq \frakI$ and $J_j \subseteq \frakJ$, and 
run the above additive PTAS on the matrix $A$
with $k \leftarrow k $, $\eps \leftarrow \eps_2$ 
where each row $p$ has a constraint that its vector belongs to $T_{I, [i]}$ and its projection to $T_{[i], m(i)}$ is $u_{p, i}$ (similarly for columns). 
For each $p \in [n_R]$ and $x \in [X]$, it will create an alphabet set $\Sigma_{p, i, x} \subseteq T_{I, [i]}$
For each $q \in [n_C]$, it will similarly create an alphabet set $\Sigma_{q, j, x} \subseteq T_{J, [j]}$.  

Recall that every row and column belongs to at most one half-dirty hyperblock. 
Therefore, when we consider a solution where each row $p \in [n_R]$ with $p \in I_i$ takes the vector guaranteed by the additive PTAS for the half-clean hyperblock containing $I_i$ (or just $u_{p, i}$ if there is no such hyperblock), and columns take analogous solutions, then the amount of errors from the half-clean hyperblocks is at most $k \eps_2$ times the size of a half-clean hyperblock, which by Claim~\ref{claim:half-clean} is at most $\frac{k \eps_2}{\delta_2^{O(k^2)} \eps_1^{O(k)}} OPT$. 
Note that By Claim~\ref{claim:clean}, this solution extends the solution constructed in Phase 2 and still has the same guarantee in clean hyperblocks. 
By conservatively ignoring all dirty hyperblocks which has an additional cost of $\eps_1 k^2 OPT$, 
The total error for this solution is at most 
\[
OPT \bigg( 1+\eps_0/8 + O\big(\frac{\tau}{\delta_0} + \frac{\delta_2}{\delta_0}\big) + \frac{k \eps_2}{\delta_2^{O(k^2)}\eps_1^{O(k)}} + k^2 \eps_1 \bigg)
\bigg( 1 + 4\delta_1 \bigg)
\]
By ensuring that 
\begin{equation}
\bigg( 1+\eps_0/8 + O\big(\frac{\tau}{\delta_0} + \frac{\delta_2}{\delta_0}\big) + \frac{k \eps_2}{\delta_2^{O(k^2)}\eps_1^{O(k)}} + k^2 \eps_1  \bigg)
\bigg( 1 + 4\delta_1 \bigg) \leq 1 + \eps_0,
\label{eq:param4}
\end{equation}
we get a $(1+\eps_0)$-approximation. 
To satisfy the dependencies between parameters, we set the parameters as follows. 

\begin{itemize}
\item Parameters: $k$, $\eps_0$, $\delta_0$ (layers), $\tau$ (goodness of samples), $t_0$ (sample size), $\delta_1$ (failure probability), 
$\eps_1$ (clean/dirtiness of blocks), $\eps_2$ (additive PTAS guarantee), $\delta_2$ (super-separation).
\item \eqref{eq:param2}: $\eps_1 k^2 t_0^2 \leq \delta_1 / 4$.
\item \eqref{eq:param3}: $t_0 \geq 2 k^2 \log(1/\delta_1) /  \delta_0$.
\item \eqref{label:param11}: $t_0 \geq \Omega \big( \frac{k}{\tau^2} \log (1/\tau) + \log(1 / \delta_1) \big)$. 
\item $t_0 \geq \poly(k / (\delta_0 \eps_0))$ to ensure that  Lemma~\ref{lem:min} works. 
\item \eqref{eq:delta_s_1}: $\delta_2 < \eps_1^2$.
\item \eqref{eq:param4}: 
$\bigg( 1+\eps_0/8 + O\big(\frac{\tau}{\delta_0} + \frac{\delta_2}{\delta_0}\big) + \frac{k \eps_2}{\delta_2^{O(k^2)}\eps_1^{O(k)}} + k^2 \eps_1 \bigg)
\bigg( 1 + 4\delta_1 \bigg) \leq 1 + \eps_0$,
\item Given $\eps_0$ and $k$, fix $\delta_0 = 1/20k$, $\delta_1 = \eps_0 / 100$, and $\tau = O(\eps_0 / \delta_0)$ so that the $O(\tau / \delta_0)$ term in~\eqref{eq:param4} is at most $\eps_0 / 100$. 
\item Choose $t_0 = \poly(k / \eps_0)$ large enough to satisfy \eqref{eq:param3},~\eqref{label:param11} and the condition of Lemma~\ref{lem:min}
\item Choose $\eps_1 = \poly(k / \eps_0)^{-1}$ small enough to satisfy \eqref{eq:param2} and the $k^2 / \eps_1$ term in~\eqref{eq:param4} is at most $\eps_0 / 100$. 
\item Choose $\delta_2 = \poly(k / \eps_0)^{-1}$ small enough to satisfy \eqref{eq:delta_s_1} and 
the $O(\delta_2 / \delta_0)$ term in~\eqref{eq:param4} is at most $\eps_0 / 100$. 
\item Choose $\eps_2 = (2^{\poly(k/\eps_0)})^{-1}$ small enough to satisfy \eqref{eq:param4}. 

\end{itemize}

\paragraph{Phase 4: Finishing off.} 
Let $\frakB_1, \dots, \frakB_{\ell}$ be the half-clean hyperblocks where we run the additive PTAS ($\ell \leq k$).
Let $(i_1, j_1), \dots, (i_{\ell}, j_{\ell}) \in [\ell_R] \times [\ell_C]$ be the {\em representatives} for those hyperblocks; for each $e \in [\ell]$, $(i_e, j_e) \in \frakB_e$ and we ran the additive PTAS after imposing linear constraints according to $i_e$ and $j_e$. 
For $e \in [\ell]$, let $x_e \in [X]$ be the good index guaranteed by Proposition~\ref{prop:maxptas-blackbox} for $\frakB_e$. 
The correct $x_1, \dots, x_{\ell}$ can be guessed from at most $|X|^k$ choices.

Once the algorithm has the correct $x_1, \dots, x_{\ell}$, 
it is guaranteed that there is a constant-size {\em alphabet set} $\Sigma_p := \cup_{e \in [\ell]} \Sigma_{p, i_e, x_e} \subseteq \R^k$ for each row $p \in [n_R]$ and 
$\Sigma_q := \cup_{e \in [\ell]} \Sigma_{q, j_e, x_e} \subseteq \R^k$ for each column $q \in [n_C]$ such that there is a $(1+\eps_0)$-approximate solution where each row and column gets its vector from its alphabet set. 
Then the final phase of the algorithm is again based on sampling; 
sample $s_{j, 1}, \dots, s_{j, t_0}$ from $J_j$ as usual, and guess its correct vector from its alphabet set,
and use them to vote for the rows. 
The crucial difference is that, while we sample from $J_j$ that relies on the layer structure, since we know that the guaranteed solution is globally good, each row gets voted just once (as opposed to $\ell_R$ times before) about its correct position in the entire space $\R^k$. 
Formally, after guessing $\{ w_{s_{j, q}} \}_{j \in [\ell_C], q \in [t_0]}$ from the samples' own alphabet sets, for each $p \in [n_R]$ we apply Lemma~\ref{lem:min} with $\eps \leq \eps_0, \ell \leftarrow \ell_C, N_j \leftarrow J_j, T_j \leftarrow T_j, \R^k \leftarrow \R^k, u_{s_j, q} \leftarrow w_{s_j, q}$. 
Then the lemma guarantees the expected value of the final solution obtained is at most a factor $(1+\eps_0)$ worse than the intended solution. 
(We do not need the first guarantee about the chosen vector being equal to $x^*$. Also note that the vectors for rows might be outside their alphabets.)
Therefore, for some choice of samples and their positions from their alphabets, the total error of the computed solution is at most $(1 + \eps_0)^2 OPT$, finishing the proof of the theorem.

\paragraph{Total running time.}
It can be checked that the the total number of guesses that Phase 1 and Phase 2 make is $n^{\poly(k/\eps_0)}$.
Given our choice of $\eps_2 = (2^{\poly(k/\eps_0)})^{-1}$, the total running time spent on the additive PTASes 
is $n^{(1/\eps_2)^{\poly(k)}} = n^{2^{\poly(k/\eps_0)}}$, which is also an upper bound on $|X|$. 
The running time in Phase 4 is dominated by the time to guess the correct $x_1, \dots, x_{\ell}$, which is at most $|X|^k \leq n^{2^{\poly(k/\eps_0)}}$. 
Therefore, the total running time is at most $n^{2^{\poly(k/\eps_0)}}$.
After Phase 3, each vector in $\Sigma_p$ is described by the Thom encoding with bit complexity at most $\poly_{k,\varepsilon}(\tau)$, where $\tau$ is the bit complexity of the input matrix $A$, and the degree of the polynomial is independent of $\varepsilon$ and $k$ . 
As mentioned in the definition of $u_{p, i}$ before the statement of Lemma~\ref{lem:min}, the finally chosen vector for each row and column will be the solution to a system of at most $\poly(k)$ linear equations whose coefficients are from $\Sigma_p$ and $\Sigma_q$. Therefore, the final vectors can be computed using the real algebraic solvers described in Section~\ref{sec:algeq} and be described by Thom encodings with bit complexity $\poly_{k,\varepsilon} (\tau)$, where the degree of the polynomial is independent of $\varepsilon$ and $k$.
\end{proof}

\subsection{Proof of Lemma~\ref{lem:min}}
In this section, we prove Lemma~\ref{lem:min}, which is restated below.

\lemmin*

\begin{proof}
Let $S_i = \{ s_{i, 1}, \dots, s_{i, t} \}$ be the set of samples from $N_i$. 
For any $y \in \R^k$ and $i \in [\ell]$, let us define the following three quantities.

\begin{itemize}
\item $\alpha_i(y) := | \{ q \in N_i : \langle u_{q}, y \rangle \neq \langle u_{q}, v \rangle \} |$: the number of indices where $x^*$ and $y$ {\em behave} differently. 

\item $\beta_i(y) := | \{ q \in N_i : \langle u_{q}, y \rangle \neq a_{ q} \} |$: the number of {\em errors} when we choose $y$. 

\item $\gamma_i(y) := 
| \{ q \in S_i : \langle u_{q}, y \rangle \neq a_{ q } \} |
\cdot (n_i/t)$: the estimate of $\beta_i(y)$ using the samples $s_{i,1}, \dots, s_{i,t}$.
\end{itemize}
If $y$ is clear from context, let us just write $\alpha_i, \beta_i, \gamma_i$, and let $\alpha, \beta, \gamma$ be the sum of $\alpha_i$'s, $\beta_i$'s, $\gamma_i$'s respectively. And let $\alpha^*_i, \beta^*_i, \gamma^*_i$ be the quantities when $y = x^*$.
Of course, we only see $\gamma_i(y)$.

For $y \in \R^k$ and  $i \in [\ell]$, if $y - x^*$ is orthogonal to $T_i$ (including the case $y = x^*$), say $y$ is {\em fortunate} in $i$. We have $\langle y , u_{q} \rangle = \langle x^*, u_{q} \rangle$ for every $q \in N_i$, so $y$ and $x^*$ behave the same with respect to $N_i$. So, $\alpha_i(y) = \alpha^*_i, \beta_i(y) = \beta^*_i, \gamma_i(y) = \gamma^*_i$.

When $y - x^*$ is not orthogonal to $T_i$, let $T_y$ be the proper subspace of $T_i$ orthogonal to $y - x^*$, and $N_y \subseteq N_i$ be the set of indices whose vectors are in $T_y$. Note that $\alpha_i^* = 0$ by definition, and $\alpha_i(y) = n_i - |N_y|$. 
If $i$ is the smallest index such that 
$y - x^*$ is not orthogonal to $T_i$, the fullness of $T_i$ implies that 
$|N_y| \leq (1 - \delta_0) n_i$, which implies that $\alpha_i(y) \geq \delta_0 n_i$ and $\beta_i(y) \geq \max(0, \alpha_i(y) - \beta_i^*)$. 
(If $|N_y| > (1 - \delta_0) n_i$ and $y - x^*$ is orthogonal to $T_{i-1}$, then $T_y$ satisfies $T_{i-1} \subseteq T_y \subsetneq T_i$ and contains most of $N_i$, which contradicts the fullness of $T_i$.)

For a set $E \subseteq N_i$, let 
$\gamma_i(E) := |E \cap S_i| \cdot (n_i / t)$. Intuitively, 
$\gamma_i(E)$ is the estimate of $|E|$ via samples. 
If the samples are $\tau$-good, as $N_y$ is a subset of $N_i$ contained in a strict subspace of $T_{i}$, 
$| |N_y| - \gamma_i(N_y)| \leq \tau |N_i|$. 
Let $W_i = \{ q \in N_i : \langle v, u_q \rangle \neq a_q \}$ be the set of indices where $x^*$ is {\em wrong} (e.g., $|W_i| = q^*_i$).
Now let $x \neq x^*$ be the chosen vector, which means $\gamma(x) \leq \gamma^*$.

\paragraph{First claim.} 
We first prove the first claim, which shows that the conditional probability of $x \neq x^*$ given the samples are $\tau$-good is 
at most $O ( \frac{OPT}{\delta_0 n_{\ell}})$. 
If $\beta^* = OPT \geq 0.01 \delta_0 n_{\ell}$, the trivial probability bound of $1$ proves the claim, so assume that $\beta^* < 0.01 \delta_0 n_{\ell}$.
The proof considers the event that there exists $i \in [\ell]$ with $\gamma_i(W_i) > 0.1 \delta_0 n_i$;
we will show that this event will happen with a small probability, and if it does not happen, our output $x$ must be equal to $x^*$. 

\begin{claim}
Suppose that $\gamma_i(W_i) \leq 0.1 \delta_0 n_i$ for all $i \in [\ell]$. Then $x = x^*$. 
\end{claim}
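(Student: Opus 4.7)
The plan is to derive a contradiction from assuming $x \neq x^*$ under the hypothesis $\gamma_i(W_i) \le 0.1\delta_0 n_i$ for all $i$. Let $i^{*}$ be the smallest index at which $x$ is not fortunate, i.e.\ the smallest $i$ such that $x-x^{*}$ is not orthogonal to $T_i$ (such $i^{*}$ exists since $T_\ell = \mathbb{R}^k$). For $i < i^{*}$, $x$ and $x^{*}$ act identically on all $u_q$ with $q \in N_i$, so $\gamma_i(x) = \gamma_i^{*} = \gamma_i(W_i)$. Hence the inequality $\gamma(x) \le \gamma(x^{*})$ coming from the choice of $x$ collapses to
\[
\sum_{i \ge i^{*}} \gamma_i(x) \le \sum_{i \ge i^{*}} \gamma_i(W_i).
\]
My strategy is to upper bound the right-hand side using the hypothesis and the geometric decay $n_{i+1} \le n_i/10$, and lower bound $\gamma_{i^{*}}(x)$ using fullness of $T_{i^{*}}$ combined with $\tau$-goodness, producing a contradiction.

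For the lower bound on $\gamma_{i^{*}}(x)$, I would introduce the complement-of-affine-subspace set $D_{i^{*}} := \{q \in N_{i^{*}} : \langle u_q, x-x^{*}\rangle \ne 0\}$, which coincides with $N_{i^{*}} \setminus T_y$ where $T_y \subseteq T_{i^{*}}$ is the orthogonal complement of $x-x^{*}$ inside $T_{i^{*}}$. Because $x$ is fortunate in all smaller layers, $T_y$ contains $T_{i^{*}-1}$, so by fullness $|T_y \cap N_{i^{*}}| \le (1-\delta_0) n_{i^{*}}$, i.e.\ $|D_{i^{*}}| \ge \delta_0 n_{i^{*}}$. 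Since $\tau$-goodness applies to all affine subspaces of $T_{i^{*}}$, it transfers to $D_{i^{*}}$ as the complement of one, giving $|D_{i^{*}} \cap S_{i^{*}}| \ge (\delta_0 - \tau) t$. Observe that if $q \in D_{i^{*}} \setminus W_{i^{*}}$, then $\langle u_q, x^{*} \rangle = a_q$ and $\langle u_q, x\rangle \ne \langle u_q, x^{*}\rangle$, so $q$ is an error for $x$. Hence
\[
|\{q \in S_{i^{*}}: \langle u_q, x\rangle \ne a_q\}| \ge |D_{i^{*}} \cap S_{i^{*}}| - |W_{i^{*}} \cap S_{i^{*}}| \ge (\delta_0 - \tau)t - 0.1\delta_0 t,
\]
where I used $\gamma_{i^{*}}(W_{i^{*}}) \le 0.1\delta_0 n_{i^{*}}$ to bound the second term. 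With $\tau < \delta_0/100$, multiplying by $n_{i^{*}}/t$ yields $\gamma_{i^{*}}(x) \ge 0.8\delta_0 n_{i^{*}}$.

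For the upper bound on the right-hand side, the layer size decay gives $\sum_{i \ge i^{*}} n_i \le n_{i^{*}} \sum_{j \ge 0} 10^{-j} \le \tfrac{10}{9} n_{i^{*}}$, so $\sum_{i \ge i^{*}} \gamma_i(W_i) \le 0.1\delta_0 \cdot \tfrac{10}{9} n_{i^{*}} < 0.12\delta_0 n_{i^{*}}$. But then $\sum_{i \ge i^{*}} \gamma_i(x) \ge \gamma_{i^{*}}(x) \ge 0.8 \delta_0 n_{i^{*}} > 0.12 \delta_0 n_{i^{*}} \ge \sum_{i \ge i^{*}} \gamma_i(W_i)$, contradicting the optimality of $x$. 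The main obstacle I anticipate is making sure the $\tau$-goodness hypothesis is applied to a set of the right form (a complement of an affine subspace inside $T_{i^{*}}$), and carefully tracking that the errors of $x^{*}$ cannot erode the lower bound on $D_{i^{*}} \cap S_{i^{*}}$ by more than the allowed $0.1 \delta_0 t$ on the sample side, rather than on the population side where it might be much larger.
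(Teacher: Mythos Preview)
Your proof is correct and follows essentially the same argument as the paper's: both pick the first unfortunate layer $f$ (your $i^{*}$), use fullness plus $\tau$-goodness to show $\gamma_f(N_f\setminus N_x)\ge (\delta_0-\tau)n_f$, subtract the sampled wrong set $\gamma_f(W_f)\le 0.1\delta_0 n_f$ to get $\gamma_f(x)\ge 0.8\delta_0 n_f$, and compare against $\sum_{i\ge f}\gamma_i(W_i)$ bounded via the geometric decay of $n_i$. Your set $D_{i^*}$ is exactly the paper's $N_f\setminus N_x$, and your concern about $\tau$-goodness applying is unfounded since $T_y$ is a linear (hence affine) subspace.
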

\begin{proof}
Assume towards contradiction that $x \neq x^*$ and let $f$ be the first index that $T_f$ is not orthogonal to $x - x^*$. As before, let $N_x = \{ q \in N_f : \langle u_q, x \rangle = \langle u_q, x^* \rangle \}$. 
Then the fullness of $T_f$ ensures that $|N_x| \leq (1 - \delta_0) n_f$. 
Since the samples are $\tau$-good, 
$\gamma_f(N_f \setminus N_x) \geq (\delta_0 - \tau) n_f  \geq 0.9 \delta_0 n_f$.
Since we assume $\gamma_i(W_i) \leq 0. 1\delta_0 n_i$ for all $i$, then 
\[
\sum_{i = f}^{\ell} \gamma_i(W_i) \leq 
\sum_{i = f}^{\ell} 0.1\delta_0 n_i \leq 
0.2\delta_0 n_f
\]
is the estimated number errors $x^*$ makes in $N_f, \dots, N_\ell$ and 
while the estimated number of errors that $x$ makes in $N_f$ only is at least 
\[
\gamma_f((N_f \setminus N_x) \setminus W_f)
\geq 
\gamma_f(N_f \setminus N_x) - \gamma_f(W_f)
\geq 0.9 \delta_0 n_f - 0.1 \delta_0 n_f = 0.8 \delta_0 n_f,
\]
which leads to contradiction, because $x$ and $x^*$ get the same estimated errors from $N_1, \dots, N_{f-1}$. 
\end{proof}

Finally, we bound the probability of having some $i \in [\ell]$ with 
$\gamma_i(W_i) > 0.1 \delta_0 n_i$. 
Let $\zeta = OPT / n_{\ell}$, which implies that 
$\zeta \geq OPT / n_i$ for all $i \in [\ell]$. 
Before the conditioning, because samples were uniformly and independently sampled from each $N_i$, Chernoff and union bounds imply that this probability is upper bounded by 
\[
\ell \bigg( 
\frac{e}{0.1 \delta_0 / \zeta}
\bigg)^{0.1 \delta_0 t}.
\]
(When we sample from $N_i$, each sample from $i$ is from $W_{i}$ with probability at most $\zeta$ and we bound the probability that the fraction of the samples from $W_i$ is at least $0. 1\delta_0 $.)
For some sufficiently large $t = \Omega(\ell /\delta_0)$, this probability 
is at most $O(\zeta / \delta_0) = O(OPT / (n_{\ell} \delta_0))$. Since we assumed that the samples are $\tau$-good with probability at least $1/2$, the conditional probability is at most $2$ times bigger than this.

\paragraph{Second claim.} 
For each $i \in [\ell]$ and $y \in \R^k$, 
let $W_{i, y} = \{ q \in W_{i, y} : \langle y, u_q \rangle = a_q \}$ be the set of indices where $x^*$ is wrong but $y$ is right. The VC dimension of the family $\{ W_{i, y} \}_y$ for fixed $i$ is $O(k)$. If we define 
\[
\tau'_i := 
\max \bigg(
\max_y{ \frac{| |W_{i, y}| - \gamma_i(W_{i, y}) |}{n_i} }, \quad
\frac{| |W_i| - \gamma_i(W_i) | }{n_i}
\bigg), 
\]
Then, for any $y \in \R^k$, since $\{ q \in N_i : \langle y, u_q \rangle = a_q  \} = (N_y \setminus W_i) \cup W_{i, y}$, 
we have 
$
|\beta_i(y) - \gamma_i(y)|
\leq (\tau + 2\tau'_i)n_i. 
$
Furthermore, the standard sampling guarantee for sets systems with bounded VC dimension~\cite{FM06} guarantees that,
for a fixed $t = \poly(k\ell / (\eps \delta_0))$, 
$\tau'_i$ is a sub-gaussian random variable with the sub-gaussian norm $\| \tau'_i \|_{\psi_2} \leq \poly(\eps \delta_0 / (k \ell))$~\cite{vershynin2018high}.
Therefore, if we let $\tau' := \max_{i \in [\ell]} \tau_i'$, one can ensure $\E[\tau'] \leq \eps \delta_0 / 10000$. 

Let $x$ be the chosen vector and $f$ be the first index that $T_f$ is not orthogonal to $x - x^*$.
Then $F := \{ 1, \dots, f-1\}$ is the set of indices where $x$ is fortunate. 
Let $X := \beta(x) - \beta(x^*)$ be the amount of additional error we incur by choosing $x$. 
Note that $x$, $f$, $\tau'$, and $X$ are correlated random variables. 
We present the following two ways to bound $X$ given $f$.

\noindent {\it First method.}
The first method works well when $n_f$ is small compared to $\beta^*$ (we will apply it when $\beta^* \geq 0.01 \delta_0 n_f$).
Note that $\gamma^*_i = \gamma_i$ when $i \in F$.
For $i = f, \dots, \ell$, using the approximation between $\beta$'s and $\gamma$'s when $i \notin F$, we have 
\[
\sum_{i < f} \gamma^*_i + 
\sum_{i \geq f} (\beta_i - \tau n_i - 2\tau' n_i)  
\leq \gamma(x) \leq \gamma^* \leq 
\sum_{i < f} \gamma^*_i + 
\sum_{i \geq f} (\beta_i^* + \tau n_i + 2\tau' n_i),
\]
which implies that the additional error caused by picking $x$ is at most 
\[
(2\tau + 4\tau') \sum_{i \geq f} n_i \leq 
(4\tau + 8\tau') n_f,
\]
using the fact that $n_i$'s are geometrically decreasing with a factor $k\delta_0 \leq 1/2$. 

\noindent {\it Second method.}
We will apply the second method when $\beta^* < 0.01 \delta_0 n_f$.
In this case, we will conservatively say $x$ makes errors on all $N_{f}, \dots, N_{\ell}$ so that $X \leq 2n_f$;
what we will do here is to bound the probability that $f$ takes such a value. 
Note that $\beta_f + \beta^*_f \geq \delta_0 n_f$ implies 
$\beta_f > 0.99 \delta_0 n_f$. We consider the following two events.

\begin{itemize}
\item The probability of $\gamma_f(W_f) > 0.1 \delta_0 n_f$ is bounded by
\[
2\bigg( 
\frac{e}{(\delta_0 / (10 (\beta_f^* / n_f)))}
\bigg)^{(\delta_0 / 10) t}.
\]
(Before conditioning, 
each sample is from $W_f$ with probability $\beta^*_f / n_f$, and we bound the probability that the fraction of the samples from $W_f$ is at least $(\delta_0 / 10)$.)
By letting $t \geq \Omega(\ell /(\eps \delta_0))$, we can ensure that the probability is at most $(\beta^*/n_f) \cdot (\eps/80\ell)$. 

\item 
If $\gamma_f(W_f) \leq 0.1 \delta_0 n_f$, then the estimated errors of $x$ from $N_f$ is already at least $0.5 \delta_0 n_f$.
Because the total number of entries from $N_{f+2}, \dots, N_{\ell}$ is at most $O((k\delta_0)^2 n_f)$, so given that $\gamma_f(W_f) \leq 0.1\delta_0n_f$, we should get 
$\gamma_{f + 1}(W_{f + 1}) \geq (\delta_0/3)n_{f}$ in order for $x^*$ to have more estimated errors than $x$. 
If $n_{f+1} < (\delta_0/3)n_f$, this can never happen. 
Otherwise, this probability is bounded by 
\[
2\bigg( 
\frac{e}{(\delta_0 / 3) / (\beta_{f+1}^*/n_{f+1})}
\bigg)^{(n_f/n_{f+1}) (\delta_0 / 3)  t}
\leq 
2\bigg( 
\frac{e}{ (n_{f+1}/n_f) (\delta_0 / 3) / (\beta^* / n_f)}
\bigg)^{(n_f/n_{f+1}) (\delta_0 / 3)  t}
\]
(Before conditioning, in $N_{f+1}$, each sample is from $W_{f + 1}$ with probability $\beta^*_{f + 1} / n_{f+1} \leq (\beta^*/n_f) \cdot (n_f / n_{f+1})$, and we bound the probability that the fraction of the samples from $W_{f + 1}$ is at least $(n_f/n_{f+1}) (\delta_0 / 3)$.)
Again, by letting $t \geq \Omega(\ell /(\eps \delta_0))$, one can again ensure that the probability is at most $(\beta^*/n_f) \cdot (\eps/80\ell)$. 

\item Therefore, when $i \in [\ell]$ is such that $\beta^* < 0.01 \delta_0 n_i$, we have $\Pr[f = i] \leq (\beta^*/n_i) \cdot (\eps/ 40 \ell)$. 
\end{itemize}

Finally, we use the above two methods to bound $\E[X]$. 
Let $i'$ be the first index that $0.01 \delta_0 n_{i'} \leq \beta^*$. 
The additional amount of error of $x$ compared to $x^*$ (denoted by $X$) can be bounded by 
\begin{align*}
& \sum_{i=i'}^{\ell} \Pr[f = i] \E[X | f = i]
+ \sum_{i=1}^{i' - 1} \Pr[f = i] \E[X | f = i] \\ 
\leq & \sum_{i=i'}^{\ell} \Pr[f = i] \E[(4\tau + 8\tau') n_i | f = i]
+ \sum_{i=1}^{i' - 1} (\beta^* / n_i)(\eps/40\ell) \cdot (2 n_i) \\
\leq & \sum_{i=i'}^{\ell} n_{i'} \Pr[f = i] \E[(4\tau + 8\tau') | f = i]
+ (\eps / 20) \beta^* \\
=&  n_{i'} \cdot  \E[(4\tau + 8\tau') | f \leq i'] \Pr[f \geq i']
+ (\eps / 20) \beta^*\\
\leq&  n_{i'}  \cdot \E[(4\tau + 8\tau')]
+ (\eps / 20) \beta^*\\
\leq& (200\beta^* / \delta_0) (4\tau + \E[8\tau']) + (\eps / 20) \beta^*
\leq (1000\tau / \delta_0 + 0.1\eps) \beta^*.
\end{align*}
\end{proof}

\section{Hardness of $\ell_0$ Low-Rank Approximation}

In this section, we give the $\Omega(\log n)$-factor hardness of \lra.

\hardness*

Our reduction shall use the following well-known $\Omega(\log n)$-hardness of {\sc Set Cover}.

\begin{theorem}[\cite{FeigeSetCover},\cite{DS14}]		\label{thm:set-cover}
	Let $(U,\calS)$ be an instance of Set Cover with $|U| = n$. Then for $\ell = \ell(n)$, it is NP-hard to distinguish between the following cases:
	\begin{itemize}
		\item {\bf Yes Case}. There exists a choice of $S_{i_1},\ldots,S_{i_\ell} \in \calS$ such that every element of $U$ is covered by exactly one of the sets.
		\item {\bf No Case}. Any set cover of $U$ by $\calS$ is of size at least $\Omega(\ell\cdot\log n)$.
	\end{itemize} 
\end{theorem}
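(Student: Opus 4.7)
The theorem is a classical hardness result (due to Feige, with subsequent improvements by Dinur--Steurer) whose proof combines two ingredients. The first is NP-hardness of a gap version of Label Cover: by the PCP theorem together with Raz's parallel repetition theorem applied $t$ times, it is NP-hard to distinguish Label Cover instances with question sets $Q_1, Q_2$, answer sets $A_1, A_2$, and projection constraints that have value $1$ from those with value at most $2^{-\Omega(t)}$. We will take $t$ slowly growing with $n$, so the instance size remains polynomial.

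The second ingredient is a combinatorial gadget called a \emph{partition system}. It consists of a ground set $B$ of size $m$ equipped with $L$ partitions $P_1, \ldots, P_L$ of $B$, each into $h$ parts indexed by colors $c \in [h]$, satisfying: (a) for every color $c$, the union $\bigcup_{j=1}^{L} P_j^{(c)}$ equals $B$; and (b) any union of parts drawn from strictly fewer than $r := (1-o(1)) \ln h$ distinct colors fails to cover $B$. Such systems exist with $m = \mathrm{poly}(h, L)$ via a standard probabilistic construction.

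The reduction proceeds as follows. For each $q_1 \in Q_1$, take a fresh copy of a partition system $B_{q_1}$ whose $L$ partitions are indexed by the prover-2 neighbors of $q_1$ and whose $h$ colors correspond to $A_1$; the Set Cover universe $U$ is the disjoint union of all the $B_{q_1}$. For each $(q_2, a_2) \in Q_2 \times A_2$, define the set $S_{q_2, a_2}$ as the collection, across every neighbor $q_1$ of $q_2$, of the color-$\pi(a_2)$ part of $P_{q_2}$ inside $B_{q_1}$, where $\pi(a_2)$ is the unique prover-1 answer projection-compatible with $a_2$. For completeness, a satisfying labeling $f : Q_2 \to A_2$ yields the family $\{S_{q_2, f(q_2)}\}_{q_2 \in Q_2}$, which by property (a) covers each $B_{q_1}$ exactly once (via the single color $\pi(f(q_2)) = a_1(q_1)$ shared across all $q_2$ neighboring $q_1$), giving an exact cover of size $\ell := |Q_2|$. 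For soundness, suppose a cover uses fewer than $r \cdot |Q_2| / C$ sets for a suitable constant $C$; averaging over $q_1 \in Q_1$, a constant fraction of the $B_{q_1}$ are covered using parts with strictly fewer than $r$ distinct colors, which by property (b) is impossible unless... so every such $q_1$ must be assigned many distinct prover-1 candidate answers extractable from the labels of the covering sets, and a randomized decoding (pick a uniform $q_1$, then a uniform set in the cover meeting $B_{q_1}$, and output its prover-1 and prover-2 answer labels) gives a Label Cover strategy of value $\Omega(1/r)$, far exceeding $2^{-\Omega(t)}$ for $t$ large enough.

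The main obstacle is the soundness analysis: converting ``small set cover'' into ``non-trivial Label Cover strategy'' requires a careful averaging argument combined with the non-covering property (b) of the partition system, and parameter balancing (the degree, the parallel repetition count, the partition system dimensions) so that everything aligns. Choosing $h = \mathrm{poly}(n)$ gives $r = \Theta(\log n)$ and hence the claimed $\Omega(\log n)$ gap; the fact that $k$ (in our context, the set cover cardinality) is part of the input corresponds to the varying structure of the Label Cover questions and is not a further constraint.
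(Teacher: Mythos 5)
First, note that the paper does not prove Theorem~\ref{thm:set-cover} at all: it is imported as a black box from Feige and from Dinur--Steurer, so there is no in-paper argument to compare against. Your sketch is therefore an attempt to reconstruct the literature proof, and while the high-level paradigm (low-error Label Cover composed with partition systems) is the right one, the sketch has several concrete gaps. (i) Your partition-system properties (a) and (b) are mutually inconsistent as stated: by (a), the parts of a \emph{single} color $c$ taken across all $L$ partitions already cover $B$, which flatly contradicts (b)'s claim that fewer than $r\approx\ln h$ distinct colors cannot cover. The correct anti-covering property in Feige's gadget counts \emph{parts} (with at most one part per partition, or without using a whole partition), and its threshold scales as $\Theta(\log m)$ where $m=|B|$ --- not as $\log h$ where $h$ is the alphabet size. (ii) This mislocation of the gap breaks your parameter balancing: to get $r=\Theta(\log n)$ you propose taking $h=|A_1|=\mathrm{poly}(n)$, which via parallel repetition forces $t=\Theta(\log n)$ rounds and a Label Cover instance of size $n^{\Theta(\log n)}$, destroying NP-hardness. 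The logarithmic gap must instead come from $m=\mathrm{poly}(n)$, and obtaining genuine \emph{NP}-hardness (rather than hardness under quasi-polynomial assumptions) requires a polynomial-size, sub-constant-error outer verifier; that is precisely the nontrivial content supplied by [DS14] (or Raz--Safra/Moshkovitz--Raz), and ``$t$ slowly growing'' Raz repetition does not deliver it.

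Two further points. The Yes-case guarantee that every element is covered \emph{exactly once} is not decorative: the paper's reduction in Theorem~\ref{thm:hardness} uses it to conclude $Av=Aw=(-1,\ldots,-1)$ with $\|v\|_0=\ell$, so it must actually be established. Your justification (``covers each $B_{q_1}$ exactly once via the single color'') fails under your own construction: a point of $B_{q_1}$ is covered once for each neighbor $q_2$ in whose partition it carries color $c$, which for a random coloring is about $L/h$ times, not once; arranging an exact partition in the completeness case requires additional structure in the gadget. Finally, the soundness step --- converting a small cover into a Label Cover strategy of value $\gg 2^{-\Omega(t)}$ --- is left with an explicit ellipsis; this averaging-plus-decoding argument is the technical heart of the proof and cannot be waved through. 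As written, the proposal does not constitute a proof of the theorem.
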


\begin{proof}[Proof of Theorem \ref{thm:hardness}]
	Let $\calI := (U,\calS)$ with $|U| = n$ be an instance of Set Cover from Theorem \ref{thm:hardness}, and let $\ell = \ell(n)$ be as in the statement of the theorem. 
	Note that one can assume $m = \Omega(\ell \cdot \log n)$, because otherwise $\calI$ can never be a NO instance. 
	We then construct our \lra instance from $\calI$ as follows.
	
	\paragraph{Construction.} Given $\calS := \{S_1,\ldots,S_m\}$, let $A \in \{0,1\}^{n \times m}$ be the matrix whose $i^{th}$ column is the indicator vector $\mathbb{I}_{S_i}$ of the $i^{th}$ set. Let $V$ denote the kernel of $A$ and let $b_1,\ldots,b_k$ be any basis of $V$ (where $k$ is the dimension of $V$). Without loss of generality, we may assume that there exists a vector $w$ such that $Aw = (-1,\ldots,-1)$ (otherwise the reduction can simply identify this as a NO case instance). Then we construct our target matrix $M \in \mathbb{R}^{m \times ((m + 1)k + 1)}$ as follows.
	\begin{itemize}
		\item The first $(m + 1)k$ columns of $M$ consist of $(m + 1)$-copies of $b_1,\ldots,b_k$.
		\item The last column is the vector $w$.
	\end{itemize}
	The final instance consists of the matrix $M$, with rank parameter $k$, and $\ell_0$ error parameter $\ell$. We now analyze the reduction.
	
	\paragraph{Completeness.} Suppose $\calI$ is a Yes instance. Then (up to re-ordering), we may assume that the sets $S_1,\ldots,S_\ell$ cover the all the elements, and every element is covered exactly once. Let $v = -\sum_{i \in [\ell]} e_i$. Then $Av = Aw = (-1,\ldots,-1)$ and hence $(w-v) \in V$. Then let $M' \in \mathbb{R}^{m \times ((m + 1)k + 1)}$ be the matrix whose first $(m + 1)k$ columns are identical to that of $M$, and the last column is $w - v$. Then clearly $M'$ is of rank $k$ (since $b_1, \dots, b_k, w - v \in V$). Furthermore, the two matrices only differ in the last column and hence $\|M - M'\|_0 = \|v\|_0 = \ell$.
	
	\paragraph{Soundness.} Suppose $\calI$ is a NO instance. We may assume that there exists a rank-$k$ matrix $M' \in \mathbb{R}^{m \times {(m + 1)k + 1}}$ such that $\|M - M'\|_0 \leq m$ (otherwise we are done) -- let $M'$ denote such a matrix with the smallest $\ell_0$-error. We have the following useful claim:
	\begin{claim}
		The column space of $M'$ must be identical to $V$.
	\end{claim}
	\begin{proof}
		Suppose not, then there exists $i \in [k]$ such that $b_i \notin {\rm col}(M')$ which implies that $\|M' - M\|_0 \geq m + 1$ which contradicts our choice of $M'$.
	\end{proof}
	Now let us denote $\ell' = \|M - M'\|_0$, and let $v$ be the last column of $M'$. Then note that $\|w - v\|_0 \leq \|M - M'\|_0 = \ell'$. Furthermore, using the above claim we know that $v \in V$ and hence $A(w - v) = -(1,\ldots,1)$. But then the non-zero indices of $(w - v)$ must form a valid set cover of $\calS$, which using the No case guarantee of $\calI$ implies that $\ell' \geq \Omega(\ell\cdot\log n)$.
\end{proof}

  \bibliographystyle{alpha}
  \bibliography{biblio}

\end{document}